\documentclass[11pt]{article}

\usepackage[T1]{fontenc}
\usepackage{mathpazo}
\usepackage[margin=1in]{geometry}
\usepackage{amsmath,amssymb,amsthm,mathtools}
\usepackage{booktabs,tabularx,array}
\usepackage{enumitem}
\usepackage{microtype}
\usepackage[numbers,sort&compress]{natbib}
\usepackage[dvipsnames]{xcolor}
\usepackage[pdfusetitle,colorlinks=true]{hyperref}
\usepackage{thm-restate}
\usepackage{algorithm,algorithmic}
\newtheorem{theorem}{Theorem}[section]

\newtheorem{proposition}[theorem]{Proposition}
\newtheorem{corollary}[theorem]{Corollary}
\newtheorem{definition}[theorem]{Definition}

\newtheorem{observation}[theorem]{Observation}

\newcommand{\X}{\mathcal X}
\newcommand{\R}{\mathcal R}
\newcommand{\LS}{\operatorname{LS}}
\newcommand{\GS}{\operatorname{GS}}
\newcommand{\Lap}{\operatorname{Lap}}

\newcommand{\e}{\mathrm e}
\newcommand{\ind}{\mathbf 1}
\newcommand{\Prb}{\Pr}

\newcommand{\eps}{\varepsilon}

\newif\ifdraft
\ifdraft
\newcommand\dungnote[1]{\textcolor{brown}{***D\~ung: #1}}
\newcommand{\anil}[1]{\textcolor{blue}{#1}}

\else
\newcommand{\dungnote}[1]{}
\newcommand{\anil}[1]{}

\fi

\DeclareMathOperator*{\argmax}{arg\,max}

\title{\Large \bf Local Sensitivity in Exponential Selection:\\
Failure Modes and Valid Calibrations}

\author{D\~ung Nguyen\thanks{University of Texas at San Antonio. Correspondence to: dungtien.nguyen@utsa.edu}, Anil Vullikanti\thanks{University of Virginia}}

\date{}

\begin{document}

\maketitle

\begin{abstract}

Selection is a task that chooses one element from a finite public candidate range to maximize a data-dependent score.
In differential privacy (DP), the exponential mechanism (EM) samples a candidate at a temperature calibrated to the global sensitivity.
In this paper, we study when dataset-dependent sensitivity can safely replace global sensitivity in private selection.
Naive applications of local sensitivity and smooth sensitivity fail.
Local sensitivities of two neighboring datasets can differ arbitrarily, making their output distributions differ significantly even when they have the same score vectors.
We also prove that no smooth sensitivity calibrated by any fixed smoothness and fixed temperature coefficient can guarantee range-independent $(\eps,\delta)$-DP.


We propose three valid approaches.
First, a private, high-probability upper bound on local sensitivity yields approximate DP, and the method extends to finite higher-order sensitivity hierarchies.
Second, our Propose-Test-Release (PTR) variant privately searches a finite public grid for a temperature scale rather than fixing it in advance.
Third, smooth sensitivity supports several designs.
A candidate-independent smooth geometric construction produces a sensitivity envelope that is admissible under the local dampening framework, which privacy is guaranteed for any admissible envelope. 
Additionally, a separate logarithmic transformation utilizes smooth sensitivity to produce a smoothed candidate score function with advantages: having controlled global sensitivity, and preserving the maximizers of the original utility score, i.e., candidates maximizing the utility.
Both of the designs yield range-independent pure DP.
Besides that, we also give two approximate DP private selectors using smooth sensitivity: a direct EM with smooth sensitivity calibrated to the candidate range and privacy parameters that matches a theoretical lower bound up to some constant factor, and one using a privatized smooth upper scale by analyzing the logarithmic transform of the smoothness.
For every proposed mechanism, we derive a high-probability regret bound under its stated conditions.
\end{abstract}

\newpage
\tableofcontents

\newpage

\section{Introduction}
\label{sec:intro}

Data analysis queries often fall into two types: numeric and selection.
Numeric queries calculate numeric values, for example, a scalar or a vector, directly from the input dataset.
Selection queries take a dataset and a set of candidate answers as input, and select one candidate to optimize some quality score function, i.e., choosing the best candidate.

Differential Privacy (DP)~\cite{DworkRoth2014} is a privacy standard (informally, together with its framework) that protects the privacy of private datasets, by guaranteeing the output distributions on every pair of neighboring datasets to be close.
For numeric queries, DP is usually achieved by injecting noise, drawn from some random distributions, into the true output~\cite{DworkRoth2014}.
Noise is carefully calibrated by sensitivity, a measurement that quantifies the change in output when the input changes slightly.
Global sensitivity is the worst case change, that is publicly known for each query within a universe of datasets, i.e., the set of all valid datasets.
Calibrating noise via finite global sensitivity is straightforward.
For standard private numeric mechanisms, all datasets receive noise from the same law and scale, with the noise magnitude being proportional to the global sensitivity of the queries~\cite{DworkRoth2014,Vadhan2017}.

Local sensitivity and its variants~\cite{Vadhan2017,KarwaEtAl2011}, such as smooth sensitivity~\cite{NRS2007}, are methods to use the geometry around one specific dataset, instead of assuming the worst case as in global sensitivity.
Local sensitivity is the change in the output of a query when a specific input dataset changes a little bit.
While local sensitivity can change arbitrarily across nearby datasets, the smooth sensitivity is an upper bound on local sensitivity, and its values on nearby datasets are not too different from each other, where the relative change is determined by the ``smoothness''.
Noise is then calibrated specifically for each dataset, which allows the output to receive less noise when the private input (and its local neighborhood datasets) has low local sensitivity, improving the accuracy.
The trade-off often lies in the computational cost~\cite{KarwaEtAl2011,NRS2007,NguyenEtAl2023}.

For selection queries, given a private dataset $x$, a public candidate range (set) of answers $\R$ of size $|\R| = m$, a utility function $q(x,r), r\in \R$ indicates how good a candidate $r$ is as the final answer.
Without privacy, it is obvious that an optimal solution is always to return the candidates that maximize the utility function, i.e., maximizers or $\argmax_{r\in \R} q(x,r)$.
The exponential mechanism (EM)~\cite{McSherryTalwar2007} is the standard method to release a private selection.
It assigns more probability mass to candidates with higher utility score, with each candidate $r$'s probability proportional partially to an exponent of $(q(x,r))$ and partially to an exponent of the inverse of a term called temperature, i.e. $\propto \exp(\frac{(q(x,r))}{\operatorname{temp}})$, hence the name "exponential".
In the standard form of EM, the temperature ``$\operatorname{temp}$'' scales with the global sensitivity of $q(.,.)$ and the inverse of privacy parameter $\eps$.
The \emph{regret} of an answer $a$, $\max_{r} q(x,r)-q(x,a) \geq 0$, measures how much utility cost we pay to output $a$ instead of the optimal answer.
High global sensitivity yields high temperature, which spreads the probability mass among candidates, widening the regret bound, while small global sensitivity gives low temperature, which focuses the probability mass towards candidates with high utility score.

While local sensitivity and its variants have been extensively applied to numeric queries for more accurate private mechanisms, they are still underutilized for selection queries.
It is tempting to replace the global sensitivity in EM by these local variants, with the hope that we can tailor the output distribution for each specific dataset when it does not need the highest temperature dictated by the worst case global sensitivity.
In this paper, we study the question: 
\emph{For finite-range private selection, how do we calibrate EM using local sensitivity and its related concepts and what are their rigorous utility bounds?}
The generalized EM~\cite{raskhodnikova2016lipschitz} already improves selection when candidate public sensitivities are available. 
Our problem is different.
We study a different setting: a common data-dependent calibration for all candidates, rather than per-candidate data-independent scales.
We explore three main directions in search of alternatives to global sensitivity:

\begin{enumerate}
    \item A data-dependent local sensitivity can be used for direct temperature scaling after being privatized and one-sided upper bounded on its true value (with high probability).
    Although a one-level local sensitivity can be fairly clean, generalizing it to higher-order hierarchies of local sensitivity, e.g., local sensitivity of local sensitivity and further derivatives~\cite{KarwaEtAl2011,NguyenEtAl2023}, can be challenging, due to the privacy and failure probability cascading across multiple levels.
    \item Using Propose-Test-Release (PTR)~\cite{DworkLei2009} to propose a bound on local sensitivity to scale the temperature. If the proposed value is not sufficient for privacy, we might refuse to release an answer. 
    PTR for EM has been studied, but the main challenge (and its weakness) is its dependence on a proposed value in the PTR's simplest form, which seems difficult to estimate.
    \item Extending smooth sensitivity for EM. 
    Prior work observes that naive replacement of global sensitivity by smooth sensitivity in EM is not sufficient for privacy~\cite{FletcherIslam2017, ChavesEtAl2025}. 
    Therefore, the challenges of this approach include finding a flexible intermediate component that can adapt to both EM and smooth sensitivity for a unified solution, and the complexity of analyzing the regret bounds due to multiple complex components.
\end{enumerate}

\subsection{Contributions}
\label{sec:contrib}

Our main contributions are as follows:

\begin{itemize}
    \item We present explicit counterexamples for the direct applications of local and smooth sensitivities in setting the temperature scale in EM. 
    We show that local sensitivity requires the additive privacy parameter $\delta$ close to $1/2$, contrary to the principle that $\delta$ is ``cryptographically negligible''~\cite{Vadhan2017, DworkRoth2014}.
    For smooth sensitivity, a multiplicity construction denies any fixed range-independent approximate-DP at every fixed range-independent smoothness.
    \item We utilize the private upper bound on local sensitivity to design a conditional EM to achieve approximate DP.
    We then prove a recursive theorem for higher-order local sensitivity hierarchies, allowing privacy and regret bound to be propagated through all levels.
    \item Inspired by~\cite{LiNguyenVullikanti2024}, we binary search for a safe temperature scale using a $1$-Lipschitz lower bound on distance to unsafe datasets under the PTR framework.
    It addresses the current issue of the fixed scale PTR approach for EM that relies on an arbitrary ``proposed'' upper bound on local sensitivity, that is unclear how to set correctly and effectively.
    \item We give a smooth geometric envelope function that is admissible under the local dampening~\cite{FariasEtAl2023}. Distinctively, we design a logarithmic co-transformation of both utility score and smooth sensitivity, proved stable by a direct Lipschitz argument, before applying EM with transformed utility scores.
    Both methods achieve pure DP and preserve maximizers, i.e., candidates that maximize the utility scores.
    We derive explicit instance-dependent regret bounds in terms of the original utility score $q(x,r)$.
    \item We propose two alternative utilizations of smooth sensitivity.
    The first uses a categorical Gibbs law directly calibrated by smooth sensitivity that requires the smoothness to shrink with range size $m$, and privacy parameters ($\eps, \delta$).
    Our smoothness matches the theoretical lower bound required for privacy up to some constants depending only on $\eps$.
    The second route is by privatizing the smooth sensitivity.
    \item We provide rigorous regret bounds for all methods.
    Furthermore, we prove two theoretical lower bounds for temperature scales required for privacy: one for unequal scales of two neighboring datasets, the other for the required smoothness of direct application of smooth sensitivity.
    \item We apply the private upper bound on local sensitivity method, replacing the global sensitivity in EM, in solving problems to find a vertex (or an induced subgraph of size $k$) that is included in (or contains) the maximum number of copies of a motif $H$~\cite{everett1998analyzing, konar2022triangle, 10.1093/comnet/cnab052}.
    In the simplest form, $H$ is an $\ell$-clique.
    This family of problems have applications in many domains such as biomedical~\cite{10.3389/fphys.2012.00357}, document summarization~\cite{konar2022triangle}, and communication~\cite{10.1145/1557019.1557054}.
    We show that our high-probability regret bounds on Erd\H{o}s--R\'enyi graphs are significantly better than the worst case bounds when the graphs are sparse. 
\end{itemize}

\anil{some applications where these methods give improvement in performance or time would motivate this better}

Important privacy and utility properties of our results (in bold) are presented in Table~\ref{tab:map}.

\begin{table*}[ht]
\centering
\scriptsize
\setlength{\tabcolsep}{3pt}
\caption{Exact scope of the main selection guarantees.  A dash means that
the cited result is a privacy obstruction rather than a utility theorem.
For the shifted-Laplace construction, $\bar B_x(\xi)$ from \eqref{eq:one-level-upper} is one valid construction-specific choice of this family. $0<\rho<1$, $0<\delta<1$.
$b_x^\uparrow(\xi)$ is a finite deterministic bound such that $\Pr_{B\sim K_x}[B > b_x^\uparrow(\xi)] \leq \xi$.
}
\label{tab:map}
\begin{tabularx}{\linewidth}{@{}
>{\raggedright\arraybackslash}p{0.18\linewidth}
>{\raggedright\arraybackslash}p{0.27\linewidth}
>{\raggedright\arraybackslash}p{0.22\linewidth}
>{\raggedright\arraybackslash}X@{}}
\toprule
Method & Assumptions & Privacy & Regret and success probability \\
\midrule
Public global scale
& $0<b<\infty$ and $b\ge\GS_q$
& Pure $\varepsilon_s$
& $2b\log(m/\rho)/\varepsilon_s$; $1-\rho$ \\
\textbf{Direct local scale}
& Unprotected $\LS_q(x)$ denominator
& No uniform $(\varepsilon,\delta)$-DP with $\delta<1/2$;
Proposition~\ref{prop:naive-local}
& --- \\
\textbf{Direct smooth scale}
& Fixed $\beta,\lambda>0$, growing finite $m$
& No fixed range-independent $(\varepsilon_0,\delta_0)$-DP with
$\delta_0<1$; Theorem~\ref{thm:multiplicity}
& --- \\
\textbf{Private upper scale}
& Definition~\ref{def:private-upper} with $L=\LS_q$
& Equation~\eqref{eq:two-order}; if $K$ is pure,
$(\varepsilon_t+\varepsilon_s,\eta)$
& $2b_x^\uparrow(\xi)\log(m/\rho)/\varepsilon_s$;
$1-\rho-\xi$ \\
\textbf{Recursive hierarchy}
& Finite hierarchy \eqref{eq:hierarchy} and positive floors
& Equation~\eqref{eq:hier-final}
& Equation~\eqref{eq:hier-regret};
$1-\rho-\sum_j\xi_j$ \\
Fixed PTR
& Public $b>0$, $\varepsilon_t$-DP test, false pass at most $\delta$
& $(\varepsilon_t+\varepsilon_s,\delta)$
& $2b\log(m/\rho)/\varepsilon_s$; $1-\rho$ given pass \\
\textbf{Adaptive PTR}
& Finite safe fallback and certificates; for utility, monotonicity and
existence of $i_+(x)$
& $(\varepsilon_t+\varepsilon_s,\eta)$
& Equation~\eqref{eq:search-regret}; $1-\rho-\zeta$ \\
\textbf{Geometric dampening}
& Finite $\beta$-smooth $S\ge\LS_q$ and floor $s_0>0$
& Pure $\varepsilon$
& Equation~\eqref{eq:smooth-damp-regret}; $1-\rho$ \\
\textbf{Log dampening}
& Finite positive $\beta$-smooth $S_g\ge\LS_g$ and public $c>0$
& Pure $\varepsilon$
& Equation~\eqref{eq:log-regret}; $1-\rho$ \\
\textbf{Direct smooth Gibbs}
& Finite positive $\beta$-smooth $S_g\ge\LS_g$ and
calibration \eqref{eq:direct-calibration}
& $(\varepsilon,\delta)$
& $S_g(x)\log(m/\rho)/\lambda$; $1-\rho$ \\
\textbf{Private smooth scale}
& Finite positive $\beta$-smooth $S\ge\LS_q$
& $(\varepsilon_t+\varepsilon_s,\eta)$
& Use \eqref{eq:private-smooth-upper} in \eqref{eq:em-tail};
$1-\rho-\xi$ \\
\bottomrule
\end{tabularx}
\end{table*}

\subsection{Structure of the paper}

We present our results in the following structure.
Section~\ref{sec:prelims} (Prelims) presents the privacy model, standard tools for private selection: the Exponential mechanism and the Gibbs law, different notions of sensitivity, and utility measurement.
Section~\ref{sec:related} discusses existing studies in private selection, local sensitivity and its variants in private mechanism design.
Section~\ref{sec:failure} shows how straightforward applications of local sensitivity and smooth sensitivity fail to guarantee DP, and makes the claim that fixed smoothness smooth sensitivity cannot provide range-independent DP.
Section~\ref{sec:private-scale} presents the first valid calibration of EM using privatized upper bounds on local sensitivity, that can be extended to multiple layers of local sensitivities.
Section~\ref{sec:adaptive-ptr} describes an adaptive, private search for an upper bound on local sensitivity, utilizing efficient, tractable lower bound certificates for distances to high sensitivity datasets.
Section~\ref{sec:smooth} discusses several strategies to apply smooth sensitivity to calibrate EM.
The first two transformation methods target the utility score, replacing the original score by new scores with bounded or controllable global sensitivity, guaranteeing range-independent pure DP.
The third and fourth utilize the smoothness, either scaling it by the candidate range's size, or using it to calibrate its logarithmic transformation for privacy, yielding weaker approximate DP.
Section~\ref{sec:lower-bounds} discusses more quantitative challenges in using variants of local sensitivity, stating the dependency of the Gibbs law temperature of two neighbor datasets, and the maximum smoothness factor in terms of the range and privacy parameters to achieve DP.
Finally, Section~\ref{sec:apps} demonstrates using our methods to obtain significant improvements for a family of vertex/subgraph selection problems in some regimes of input.

\section{Preliminaries}
\label{sec:prelims}

\begin{definition}[Dataset graph]\label{def:dataset-graph}
A \emph{dataset graph} is an undirected graph $(\X,\sim)$.  Its vertices are
the possible private datasets, and an edge $x\sim x'$ means that the two
datasets are neighbors under the chosen privacy model.  Let $d(x,z)$ be the
shortest-path distance, with $d(x,z)=+\infty$ for different connected
components.
We define the distance to a set: for $\mathcal H\subseteq\mathcal X$, set $d(x,\mathcal H)=\inf_{z\in\mathcal H}d(x,z)$, with $\inf_\varnothing=+\infty$.
\end{definition}

The \emph{dataset graph} here refers to an abstract representation of all datasets of the dataset universe $\X$.
Each dataset is represented by a vertex.
Two vertices have an edge if their associated datasets are immediate neighbors defined by the privacy model.
This definition gives us a clean definition of distance between datasets without fixing a privacy model.
Intuitively, given two datasets $x,y \in \X$, $d(x,y)$ as defined in Def.~\ref{def:dataset-graph} is the minimum number of privacy units needed to transform one dataset to the other.
We note that this definition should not be mistaken for a privacy model on graphs, such as node- or edge-DP.
When a dataset itself is a graph, the distance $d(x,y)$ denotes the edit distance defined by the privacy model (node-DP or edge-DP) between two graphs $x$ and $y$, not the shortest path distance between two vertices of a specific input graph.

\anil{this is not tied to graphs right?}


\begin{definition}[Mechanism and differential privacy (DP)]\label{def:dp}
A randomized mechanism $M$ maps each dataset $x$ to a probability
distribution on an output space.  In probability language, this map is a
\emph{probability kernel}.  It is $(\varepsilon,\delta)$-differentially
private (approximate-DP) if, for every ordered edge $x\sim x'$ and measurable event $A$,
\[
 \Prb[M(x)\in A]\le
 \e^\varepsilon\Prb[M(x')\in A]+\delta.
\]
When $\delta=0$, the mechanism is \emph{pure $\varepsilon$-DP}.  Because the
graph is undirected, the same definition includes the reversed ordered pair.
\end{definition}

\textbf{Scores, sensitivity, Gibbs laws, and regret.}

Let $\R$ be a nonempty finite public candidate range of size
$m=|\R|\ge1$, and let $q:\X\times\R\to\mathbb R$ be a score to measure the utility of a candidate w.r.t. a specific dataset.  
The non-private selection target is any of $\argmax_{r\in\mathcal R}q(x,r)$, i.e., the maximizers.
The gap of a candidate $r$ is its utility gap to the maximizers' utility, i.e., $g(x,r)=\max_{r'} q(x,r')-q(x,r)\ge0$.

\anil{$q$ is just a score. maximization is a separate point}

For an edge $x\sim x'$, define its score oscillation $\omega_q(x,x')=\max_{r\in\R}|q(x,r)-q(x',r)|.$
Define element, dataset-local, and global sensitivity by
\begin{align*}
 \LS_q(x,r)&=\sup_{x'\sim x}|q(x,r)-q(x',r)|,\\
 \LS_q(x)&=\max_{r\in\R}\LS_q(x,r), \quad
 \GS_q=\sup_{x\in\X}\LS_q(x).
\end{align*}
For a scalar function $f:\X\to\mathbb R$, write
$\LS_f(x)=\sup_{x'\sim x}|f(x)-f(x')|$, with the supremum over an empty set, e.g., for an isolated $x$, defined to be $0$, and
$\GS_f=\sup_x\LS_f(x)$. 
These suprema take values in $[0,+\infty]$.  
Every mechanism below that uses
a sensitivity as a denominator or noise scale explicitly assumes the
relevant value is finite; with a public floor that lower-bounds a scale away from zero \anil{ceiling?}\dungnote{It is floor} supplies strict positivity where
needed.

For $a>0$, $\Lap(a)$ denotes the mean-zero Laplace distribution with density
$(2a)^{-1}\exp(-|z|/a)$; ``standard Laplace'' means $\Lap(1)$.

For a public temperature scale $b>0$ and selection privacy budget $\varepsilon_s>0$, define
\begin{align}
 Q_x^b(r)&=\frac{\exp\{\varepsilon_s q(x,r)/(2b)\}}
 {Z_x(b)}, \nonumber\\
 Z_x(b)&=\sum_{a\in\R}\exp\{\varepsilon_s q(x,a)/(2b)\}.\label{eq:gibbs}
\end{align}
This is called a categorical Gibbs or
softmax law.  
\textbf{The exponential mechanism (EM)} samples and releases a candidate $r$ with probability $Q_x^b(r)$.
The statistical-physics temperature is $2b/\varepsilon_s$.
The value $2$ is a temperature coefficient: its value can reduce to $1$ if $q$ is a monotonic function in $x$: under the standard add/remove adjacency with every candidate score monotone in the same direction across each ordered add/remove edge~\cite{DworkRoth2014}.
A larger $b$ flattens the law and worsens accuracy; a
smaller $b$ concentrates on high scores but is private only when justified.
Standard EM sets $b$ to $\GS_q$, when $0 < GS_q < \infty$.
When $GS_q = 0$, any $b>0$ is a valid choice, and we do not consider $GS_q = \infty$.

\begin{restatable}[Standard edgewise exponential-mechanism bound]{lemma}{lemedgeem}
\label{lem:edge-em}
If $x\sim x'$ and $\omega_q(x,x')\le b$, then the two laws in
\eqref{eq:gibbs} have likelihood ratio at most
$\e^{\varepsilon_s}$ in both directions, i.e., $Q_x^b(r) \le \e^{\eps_s}Q_{x'}^b(r)$ and $Q_{x'}^b(r) \le \e^{\eps_s}Q_{x}^b(r)$.
\anil{need to clarify}
\end{restatable}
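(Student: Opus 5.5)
The plan is to write the likelihood ratio explicitly and bound the numerator and the partition function separately. Fix an edge $x\sim x'$ with $\omega_q(x,x')\le b$ and a candidate $r\in\R$. By \eqref{eq:gibbs},
\[
 \frac{Q_x^b(r)}{Q_{x'}^b(r)}
 =\exp\Bigl\{\frac{\varepsilon_s\,(q(x,r)-q(x',r))}{2b}\Bigr\}\cdot\frac{Z_{x'}(b)}{Z_x(b)} .
\]
Both factors are finite and positive, since $\R$ is finite and nonempty and $b>0$.

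\textbf{Numerator.} The definition of the score oscillation gives $|q(x,r)-q(x',r)|\le\omega_q(x,x')\le b$. Hence the first factor is at most $\e^{\varepsilon_s/2}$.

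\textbf{Partition function.} For every $a\in\R$ the same oscillation bound gives $q(x',a)\le q(x,a)+b$. Therefore
\[
 \exp\{\varepsilon_s q(x',a)/(2b)\}\le \e^{\varepsilon_s/2}\exp\{\varepsilon_s q(x,a)/(2b)\}.
\]
Summing over the finite range yields $Z_{x'}(b)\le \e^{\varepsilon_s/2}Z_x(b)$. Multiplying the two bounds gives $Q_x^b(r)\le \e^{\varepsilon_s}Q_{x'}^b(r)$.

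\textbf{Reverse direction.} The hypothesis $\omega_q(x,x')\le b$ is symmetric in $x$ and $x'$, and the graph is undirected. So swapping the roles of $x$ and $x'$ in the same argument gives $Q_{x'}^b(r)\le \e^{\varepsilon_s}Q_{x}^b(r)$.

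The pointwise ratio bounds extend to events, since any event $A\subseteq\R$ is a finite union of singletons. Summing gives $\Pr[M(x)\in A]\le \e^{\varepsilon_s}\Pr[M(x')\in A]$, which is the pure-DP form of Definition~\ref{def:dp} on this edge.

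There is no real obstacle here. The only point needing care is splitting the budget as $\varepsilon_s/2$ for the numerator and $\varepsilon_s/2$ for the normalizer. This split is exactly why the coefficient $2$ appears in the temperature. Note that the lemma is edgewise: it uses only the oscillation on this particular edge, not $\GS_q$. This is what later sections exploit when $b$ is a data-dependent bound rather than a global one.
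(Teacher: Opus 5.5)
Your proof is correct and follows essentially the same approach as the paper's: bound the unnormalized weight ratio by $\e^{\varepsilon_s/2}$, sum the pointwise inequality to get $Z_{x'}(b)/Z_x(b)\le\e^{\varepsilon_s/2}$, multiply, and swap $x,x'$ for the reverse direction. Your extra remark extending the pointwise bound to events is a harmless addition that the paper leaves implicit.
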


The lemma is local to one edge.  A single global bound is sufficient because
it certifies every edge at once, but it is not logically necessary.  Later
mechanisms either certify the edge with high probability, charge the unsafe
probability to $\delta$, or replace $q$ by a utility score with bounded or controllable global sensitivity.

Let $q^*(x)=\max_r q(x,r)$.  The \emph{regret} of output $R$ is
\[
 \operatorname{Reg}_q(x,R)=q^*(x)-q(x,R)\ge0.
\]

\anil{could put as an observation}
\begin{observation}
The standard tail
calculation gives, conditional on any $b>0$,
\begin{equation}\label{eq:em-tail}
 \Prb_{R\sim Q_x^b}\!\left[
 \operatorname{Reg}_q(x,R)>
 \frac{2b}{\varepsilon_s}\log\frac m\rho
 \right]\le\rho,
 \quad 0<\rho<1.
\end{equation}
\end{observation}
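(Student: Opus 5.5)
The plan is a union bound over the candidates whose regret exceeds the threshold, using the fact that every maximizer contributes to the normalizer $Z_x(b)$. Throughout, $b>0$ is fixed; any dependence of $b$ on data or randomness is handled by conditioning, which is why the statement is phrased conditionally. Write $t=\frac{2b}{\varepsilon_s}\log\frac m\rho$. Let $B=\{r\in\R:\operatorname{Reg}_q(x,r)>t\}$, the set of ``bad'' candidates with $q(x,r)<q^*(x)-t$. The target is $\Prb_{R\sim Q_x^b}[R\in B]\le\rho$.

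\textbf{Step 1 (lower bound on the normalizer).} Since $\R$ is finite and nonempty, some $r^*\in\argmax_r q(x,r)$ exists. Dropping all other nonnegative terms gives
\[
Z_x(b)\ge \exp\{\varepsilon_s q^*(x)/(2b)\}.
\]

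\textbf{Step 2 (bound each bad term).} For each $r\in B$, the numerator of $Q_x^b(r)$ is strictly less than $\exp\{\varepsilon_s (q^*(x)-t)/(2b)\}$. Combining with Step 1,
\[
Q_x^b(r)<\exp\{-\varepsilon_s t/(2b)\}=\rho/m .
\]

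\textbf{Step 3 (sum).} Summing over the at most $m$ elements of $B$ gives $\Prb[R\in B]\le |B|\rho/m\le\rho$. If $B=\varnothing$, the probability is $0$ and the bound is trivial. A sharper count $|B|\le m-1$ is available but unnecessary.

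There is no real obstacle. The only care point is the conditioning: when $b$ is itself random, as in the later private-upper-scale mechanisms, the bound holds pointwise for each realized $b>0$. It then integrates to the same bound under any law of $b$, because the threshold $t$ moves with $b$ and the bound $\rho$ does not depend on $b$.
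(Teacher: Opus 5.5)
Your proof is correct and is the standard union-bound calculation the paper invokes here without writing out: lower-bound $Z_x(b)$ by the weight of a maximizer, bound each bad candidate's probability by $\rho/m$, and sum over at most $m$ candidates (the paper spells out the same argument in the proof of Theorem~\ref{thm:direct-smooth-regret}). Your remark on conditioning when $b$ is random also matches how the paper uses the bound in its later mechanisms.
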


\anil{would be useful to write this as an algorithm and give a name, so it can be used in all the results}
In this paper, the scale $b$ is not merely a publicly known quantity, as being set to $\GS_q$ as in the original EM.
$b$ can be the output of a (possibly randomized) function whose input includes $x$, the input private dataset.
A scale kernel $K_x$ denotes a randomized distribution $K(x, \cdot)$, in which a sampled $B\sim K_x$ is used to set the value of $b$. 
The EM with a random-scale mechanism is a joint kernel: we first sample $B\sim K_x$, then sample $R\sim Q_x^B$ conditionally on $(x,B)$.
In practice, we have to decide explicitly whether the released output is $(B,R)$ or only $R$.

\textbf{Smooth upper bounds.} 

\begin{definition}[$\beta$-smooth upper bound]
Fix $\beta\ge0$. A nonnegative function $S$ on the dataset space is a
\emph{$\beta$-smooth upper bound} on the local sensitivity of $f$ if
the following two conditions hold:
\begin{enumerate}
    \item For every dataset $x$,
        $S(x) \ge \LS_f(x).$
    \item For every pair of neighboring datasets $x\sim x'$,
        $S(x) \le \e^\beta S(x').$
    
\end{enumerate}
\end{definition}

The second condition implies, by chaining along a shortest sequence of
neighboring datasets, that for all datasets $x$ and $y$ that $d(x,y)<\infty$:
\[
    S(x) \le \e^{\beta d(x,y)} S(y).
\]

For a nonnegative function $L$ and $\beta\ge0$, define
\begin{equation}\label{eq:minimal-smooth}
 S_{L,\beta}(x)=
 \sup_{z:\,d(x,z)<\infty}\e^{-\beta d(x,z)}L(z).
\end{equation}
\anil{this will work even if $d(x,z)$ is unbounded, right?}
\dungnote{Maybe, I'm not 100\% sure.}
Vertices outside the
connected component of $x$ are excluded from the supremum.
Equation~\eqref{eq:minimal-smooth} is valid on components with unbounded diameters and can equal $+\infty$.
A positive-result mechanism using it assumes it is finite.
This possibly extended value is the pointwise smallest $\beta$-smooth upper
bound on $L$: it dominates $L$ and satisfies
$S_{L,\beta}(x')\le\e^\beta S_{L,\beta}(x)$ on every edge
~\cite{NRS2007}. 
It is pointwise minimal: for every $\beta$-smooth
upper bound $S$ and every dataset $x$, $ S_{L,\beta}(x) \le S(x)$.
\anil{might be good to state it as in Nissim's paper with the two conditions}
Minimality is useful for exact counterexamples and the tightest
bound.  
None of our positive-result privacy proofs requires the minimal choice.
\anil{what does affirmative mean?}
\dungnote{Affirmative means positive results, unlike the negative, counterexample results that uses the minimal form.}

\anil{if $x$ is a graph, it could have multiple connected components. Not clear what this means}
\dungnote{$x$ is a vertex in the dataset graph.}

\begin{table}[tb]
\caption{Frequently used notations.}
\begin{center}
\small
\begin{tabularx}{\linewidth}{@{}lX@{}}
\toprule
Symbol & Meaning \\
\midrule
$(\X,\sim),d$ & Dataset graph, adjacency, and extended graph distance \\
$\R,m$ & Public candidate  range and its size \\
$q,q^*,\operatorname{Reg}_q$ & Score, optimal score, and regret \\
$\omega_q,\LS_q,\GS_q$ & Edge oscillation, local sensitivity, global sensitivity \\
$Q_x^b,Z_x(b)$ & Gibbs/EM law at scale $b$ and its partition function \\
$\varepsilon_s,\varepsilon_t$ & Selection and scale/test privacy budgets \\
$S_{L,\beta}$ & Minimal $\beta$-smooth upper bound on $L$ \\
$b_0$ & Public positive floor for a random scale \\
\bottomrule
\end{tabularx}
\end{center}
\end{table}



\section{Related work}
\label{sec:related}

The exponential mechanism (EM) is an established standard method for private selection~\cite{DworkRoth2014, McSherryTalwar2007, MinamiEtAl2016, awan2019benefits}, where the private mechanism samples from  a set of public candidates, favoring ones with higher utility scores.
EM has been an important building block in various differentially private algorithms~\cite{nguyen2021differentially, mohamed2022differentially, imola2023differentially, nguyen2021kmeans, gillenwater2022joint}.
\cite{McSherryTalwar2007} proposes the canonical form of EM, using a common (to all candidates) public sensitivity to calibrate the temperature of the sample distribution.
\cite{raskhodnikova2016lipschitz} generalizes EM, using public per-candidate sensitivities to transform utility to have unit sensitivity before applying EM.
Our work targets a divergent setting: EM calibrated by an all-candidate dataset-dependent sensitivity.

Sensitivity is an important concept in differential privacy, that quantifies the change of output when the input dataset is changed slightly.
Sensitivity is used to calibrate noise injection for numeric queries~\cite{DworkLei2009, DworkRoth2014, Vadhan2017}.
EM uses sensitivity to set the temperature of candidate sampling distribution~\cite{McSherryTalwar2007}.

Unlike global sensitivity of a function which is the worst case sensitivity across possible input datasets, local sensitivity, which is measured at a specific input dataset, is not enough for privacy protection of numeric queries by itself~\cite{Vadhan2017, NRS2007}.
\cite{KarwaEtAl2011} utilizes second-order local sensitivity to construct a private upper bound on the first-order local sensitivity for some graph statistics.
\cite{LuMiklau2014} uses a repeated shifted Laplace upper bound through higher orders.
\cite{NguyenEtAl2023} further generalizes the concept to higher-order sensitivity hierarchies for private subgraph counting.

The Propose-Test-Release (PTR) framework~\cite{DworkLei2009} uses local sensitivity in a different way: it proposes a sensitivity bound, either publicly known or privately released (Propose phase), it calculates the (noisy) distance from the input dataset to the region of datasets whose local sensitivity is not bounded by the proposed value and checks if the distance is safe (Test phase) for privacy before finally injecting noise calibrated by the public bound or abstaining from answering (Release phase).
\cite{NRS2007} proposes the smooth sensitivity framework, which is a smooth upper bound on local sensitivity, in which their values at two neighboring datasets do not differ multiplicatively by more than a smoothness value $(\e^\beta)$.
Using the smooth sensitivity to calibrate a one-dimensional Cauchy noise gives pure DP, and a Laplacian noise yields approximate DP.
\cite{ZhangEtAl2015} constructs ladder functions that produce quality scores with unit global sensitivity ($GS=1$) from sensitivity at increasing distance from the input dataset.
\cite{FariasEtAl2023} shares similar ideas with the ladder function, defines an admissible sensitivity function indexed by radius, uses it to transform the utility score to have unit global sensitivity and applies the traditional EM to the transformed scores.
In this paper, we adapt this framework, using smooth sensitivity to construct admissible envelopes and derive their rigorous regret bounds.

Most local variants of sensitivity and their related quantities have been initially proposed for numeric queries.
There are few attempts at adapting them for EM~\cite{Duff2020, AsiDuchi2020, sakaue2026sensitivitydifferentialprivacymetric, durfee2025unifyinglaplacemechanisminstance}.
EM has been incorporated into the PTR framework~\cite{LiuKongOh2022,RedbergEtAl2023,RamsayChenouri2021}.
The fixed scale PTR~\cite{LiuKongOh2022, RamsayChenouri2021} relies on a known public bound $b^+$ on local sensitivity, and the calculation of the distance to high sensitivity datasets, which raise some practical issues that we will discuss further in Appendix~\ref{app:ptr}.
For numeric queries, \cite{LiNguyenVullikanti2024} proposes a noisy binary search over $b^+\in\{1,\ldots,\lceil GS_f\rceil\}$, assuming a finite public global sensitivity $GS_f$ or its upper bound, and replaces each threshold's possibly intractable exact distance to high sensitivity datasets by a threshold-dependent nonnegative, 1-Lipschitz lower certificate.

The 2021 addendum~\cite{DBLP:journals/corr/FletcherI16} to~\cite{FletcherIslam2017} acknowledges that its direct smooth sensitivity substitution lacks a formal privacy proof and reports an observation that the privacy may not always be guaranteed.
Subsequently, \cite{ChavesEtAl2025} formally forms a counterexample, claiming such substitution does not necessarily meet the pure DP standard.
Their claim fixes the temperature coefficient and the smoothness to the target privacy parameter as $\lambda=\eps/2=0.25$ and $\beta=\eps=0.5$.
Both results do not establish an impossibility uniformly over independently fixed smoothness and temperature coefficient.
The counterexamples do not rule out range-independent approximate DP guarantees for arbitrary $\delta<1$.
In this paper, we strengthen these observations by showing that, for every fixed smoothness $\beta>0$ and temperature coefficient $\lambda>0$, no $(\eps_0,\delta_0)$-DP guarantee with finite $\eps_0$ and $\delta_0<1$ can hold over all finite candidate ranges.

\textbf{Smooth sensitivity has been used in private selection}, though the following work does not directly use it to set the temperature of the categorical Gibbs law (Eq.~\eqref{eq:gibbs}).
\cite{LiuTalwar2019} studies a more general formulation of private selection, in which each candidate is a DP mechanism, with noise calibrated by smooth sensitivity as one possible instantiation.
The framework utilizes an oracle model, using a random number of oracle calls, instead of sampling from the categorical Gibbs law.
It can improve the utility bound when a utility-maximizing candidate has low smooth sensitivity.
\cite{MinamiEtAl2016} studies the selection problem using a related continuous Gibbs-posterior under Euclidean, convex/Lipschitz-loss, and strongly log-concave-prior conditions, not the finite-candidate selection problem.
\cite{AntonovaEtAl2025} addresses the problem along a different axis, using public candidate-specific global sensitivities instead of our dataset-specific sensitivity approach.
It also shows that naive independent score perturbation calibrated separately to these sensitivities need not satisfy DP.

There is a family of approaches called perturb-and-select~\cite{DworkRoth2014}, that first privatizes the utility score and selects the candidates, either by argmax or sampling.
Cauchy noise generated by smooth sensitivity can be applied directly to the vector of utility scores of all candidates~\cite{NRS2007},
followed by an argmax or Gibbs sample from the noisy vector which serves as post-processing with no extra incurred privacy cost.
In order for noise to be added to each coordinate, it requires a norm-specific local sensitivity analysis and the calibration depends on candidate range.
\cite{YamamotoShibuya2024} recently uses smooth sensitivity, but also with range-dependent calibration to privatize the utility scores.
\cite{HladikTetek2025} improves the smooth noise for single real-valued queries.
\cite{ChavesEtAl2025} proposes smooth noisy max, utilizing a common smooth upper bound on local sensitivity, admitting some families of noise, demonstrating its effectiveness over global sensitivity on various tasks: percentile selection, decision tree, and random forest.


\section{Why direct data-dependent scales fail}
\label{sec:failure}

The mechanisms in this section have the same structure as the Gibbs law in
\eqref{eq:gibbs}, but replace the public scale $b$ by a deterministic function of
the dataset, namely the local sensitivity $\LS_q(x)$ or the smooth sensitivity
$S_{\LS_q,\beta}(x)$, while keeping the utility score $q$ intact.
We call these methods ``direct'' because these replacement functions are in their original forms, directly using their output as scales without any privacy guarantee.
\anil{not sure if the negative results are well motivated, beyond just mentioning (or putting in the SI). 
Is there connection with any other literature, which makes these versions interesting?}
Our Proposition~\ref{prop:naive-local} (Naive local-scale substitution fails) is an exponential mechanism's analogue of~\cite{Vadhan2017} (Section 3)'s and~\cite{NRS2007} (Example 1)'s arguments for numeric queries.
For smooth sensitivity, an invalid substitution of global sensitivity has been known~\cite{ChavesEtAl2025}, but the observation is limited to a fixed smoothness $\beta = \epsilon = 0.5$ and a fixed temperature coefficient ($2$, the same as global sensitivity's coefficient), under the pure DP model.
Our discussion of the impossibility of smooth sensitivity provides a rigorous, generalized theorem:
no $(\eps,\delta)$-DP with $\delta<1$ is possible at any fixed $\beta$ independent from candidate range size $m$, and at any fixed temperature coefficient $\lambda$ (Theorem~\ref{thm:multiplicity}).

\subsection{Local sensitivity}

\begin{proposition}[Naive local-scale substitution fails]
\label{prop:naive-local}
For every $\varepsilon>0$, there is no uniform
$(\varepsilon,\delta)$-DP guarantee with $\delta<1/2$ for
\[
 N_\varepsilon(x)(r)\propto
 \exp\!\left\{\frac{\varepsilon q(x,r)}{2\LS_q(x)}\right\},
\]
even when the denominator is positive at every dataset used by the
mechanism.
\end{proposition}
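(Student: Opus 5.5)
The plan is to build, for a fixed $\varepsilon>0$, an explicit dataset graph, candidate range and score in which two neighboring datasets $x\sim x'$ have \emph{identical} score vectors but very different local sensitivities. The mechanism $N_\varepsilon$ then samples from Gibbs laws at very different temperatures, and a single event has probability close to $1$ under one law and close to $1/2$ under the other. This is the exponential-mechanism analogue of the numeric counterexamples of \cite{Vadhan2017} and \cite{NRS2007}. We will make the gap approach $1/2$ through a family of instances indexed by a parameter $t\to\infty$. Then no $\delta<1/2$ works uniformly; the multiplicative factor $\e^\varepsilon$ cannot absorb the gap because the smaller probability stays bounded away from $0$.

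\textbf{Construction.} Take $\R=\{r_0,r_1\}$ ($m=2$) and a path of datasets $x'\sim x\sim y$. Set $q(x,\cdot)=q(x',\cdot)=(t,0)$. At $y$, let the scores differ from those at $x$ by a small $\gamma>0$, say $q(y,\cdot)=(t-\gamma,0)$. Attach to $x'$ a further neighbor $w$ whose score differs by a large amount $D$, say $q(w,\cdot)=(t-D,0)$. If needed, attach further neighbors to $y$ and $w$ with the same offsets, so that these datasets also have positive local sensitivity (only $x,x'$ matter for the argument). Then $\LS_q(x)=\max(\omega_q(x,x'),\omega_q(x,y))=\gamma$, since $x$ and $x'$ agree, and $\LS_q(x')=D$. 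Every denominator used is positive, as the statement requires.

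\textbf{Computation.} With identical scores, the difference between the two laws comes only from the denominators. We have
\[
N_\varepsilon(x)(r_0)=\frac{1}{1+\e^{-\varepsilon t/(2\gamma)}},\qquad N_\varepsilon(x')(r_0)=\frac{1}{1+\e^{-\varepsilon t/(2D)}}.
\]
Take the event $A=\{r_0\}$ and choose $\gamma=1$ and $D=t^2$. As $t\to\infty$, the first probability tends to $1$ and the second tends to $1/2$. Therefore $\delta\ge N_\varepsilon(x)(A)-\e^\varepsilon N_\varepsilon(x')(A)$, and for large $\varepsilon$ this lower bound is useless. The \textbf{main obstacle} is that the naive bound must beat the factor $\e^\varepsilon$. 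The fix is to use more candidates: take $\R=\{r_0,r_1,\dots,r_{k}\}$ with scores $(t,0,\dots,0)$ at $x$ and $x'$, and the same offsets as above. Under $x$ the law puts mass $\to1$ on $r_0$. Under $x'$ the law is nearly uniform, putting mass about $1/(k+1)$ on $r_0$, provided $\varepsilon t/(2D)\to0$, which holds for $D=t^2$. Then
\[
N_\varepsilon(x)(\{r_0\})-\e^\varepsilon N_\varepsilon(x')(\{r_0\})\to 1-\frac{\e^\varepsilon}{k+1}.
\]
Choosing $k$ large makes this exceed any prescribed $\delta<1/2$; in fact it approaches $1$, which is stronger than the statement claims.

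Finally, we will verify that $\LS_q$ is computed correctly. Each candidate's oscillation must be a maximum over all neighbors, so the offsets at $y$ and $w$ must be applied to every candidate consistently, or at least to no candidate by more than $\gamma$ resp.\ $D$. We will also confirm that the event-based violation contradicts Definition~\ref{def:dp} for the ordered edge $(x,x')$.
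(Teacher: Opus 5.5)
Your proof is correct, and your construction is the paper's: a four-vertex path whose middle edge carries identical score vectors, with outer neighbors that force $\LS_q=1$ at one middle vertex and $\LS_q=t^2$ at the other, and $t\to\infty$.

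The difference is the violating event. You test the high-score candidate $\{r_0\}$ in the direction $x\to x'$. There the factor $\e^\varepsilon$ multiplies a mass near $1/2$, so you must enlarge the range to $k+1>2\e^\varepsilon$ candidates to beat it. Your ``main obstacle'' disappears if you test the complementary event $\{r_1\}$ in the reverse ordered direction $x'\to x$, as the paper does. Then $N_\varepsilon(x')(\{r_1\})\to 1/2$ while $N_\varepsilon(x)(\{r_1\})=1/(1+\e^{\varepsilon t/2})\to 0$. Hence $\delta\ge 1/2-\e^\varepsilon/(1+\e^{\varepsilon t/2})\to 1/2$ already with $m=2$.

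What each route buys:
\begin{itemize}
\item The paper's choice gives a single two-candidate family that defeats every finite multiplicative parameter at once; the target $\varepsilon_0$ need not even equal the mechanism's $\varepsilon$. Your version needs the range to grow with $\e^{\varepsilon}$.
\item Your version gives a violation approaching $1$ rather than $1/2$. Combining the two ideas gives $\delta\ge k/(k+1)-o(1)$ regardless of $\varepsilon$: use your $k+1$ candidates, but test the low-score set $\{r_1,\dots,r_k\}$ in the direction $x'\to x$.
\end{itemize}

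A minor point: the extra neighbors you consider attaching to $y$ and $w$ are unnecessary, since $\LS_q(y)=\gamma$ and $\LS_q(w)=D$ are already positive.
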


\begin{proof}
Let the dataset graph be the path $z\sim x\sim x'\sim w$ and let
$\R=\{a,b\}$.  For $G>0$ and $M>1$, set
\[
\begin{array}{c|rrrr}
 &z&x&x'&w\\ \hline
q(\cdot,a)&G+1&G&G&G+M\\
q(\cdot,b)&0&0&0&0.
\end{array}
\]
Then $\LS_q(x)=1$ and $\LS_q(x')=M$, although the score vectors at the edge $(x,x')$ are identical.  Hence:
\begin{align*}
     \Prb[N_\varepsilon(x)=b]=\frac1{1+\e^{\varepsilon G/2}};\;
 \Prb[N_\varepsilon(x')=b]=\frac1{1+\e^{\varepsilon G/(2M)}}.
\end{align*}

Taking $M=G^2$ and $G\to\infty$, the first probability tends to zero and the
second to $1/2$.  The ordered event $\{b\}$ from $x'$ to $x$ therefore
requires $\delta\to1/2$.
\end{proof}

The leakage comes entirely from the non-private scale values set by the local sensitivity.
Since the local sensitivities also take into account other adjacent edges to each of the endpoints of $x\sim x'$,
the difference between their respective values on the endpoints can be arbitrarily large, even when their utility vectors are identical.

\subsection{Smooth sensitivity}

\begin{proposition}[Naive smooth scale is not pure DP]
\label{prop:naive-smooth}
Fix $\varepsilon>0$ and $\beta>0$.  The direct rule
\[
 P_x(r)\propto
 \exp\!\left\{\frac{\varepsilon q(x,r)}
 {2S_{\LS_q,\beta}(x)}\right\}
\]
need not be $\varepsilon$-DP.
\end{proposition}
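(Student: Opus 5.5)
The plan is to adapt the four-point path counterexample from the proof of Proposition~\ref{prop:naive-local}. The key difference is that smooth sensitivity cannot jump arbitrarily across an edge. It can still change by the factor $\e^{\beta}$, and that factor suffices to break pure DP: the log-likelihood ratio of a low-score candidate scales like (gap)$\times(1/S(x)-1/S(x'))$, and we can make the gap grow while $S$ stays fixed.

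\textbf{Construction.} Take the dataset graph to be the path $z\sim x\sim x'\sim w$ and $\R=\{a,b\}$. Set the scores as follows, with parameters $G>0$ and $M>\e^{\beta}$:
\begin{itemize}
\item $q(z,a)=G+1$, $q(x,a)=q(x',a)=G$, $q(w,a)=G+M$;
\item $q(\cdot,b)\equiv 0$.
\end{itemize}
The local sensitivities are then $\LS_q(z)=\LS_q(x)=1$ and $\LS_q(x')=\LS_q(w)=M$.

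\textbf{Smooth sensitivities.} Next compute $S_{\LS_q,\beta}$ from \eqref{eq:minimal-smooth} at the two middle vertices:
\[
S(x)=\max\{1,\e^{-\beta}M,\e^{-2\beta}M\}=\e^{-\beta}M, \qquad S(x')=\max\{\e^{-\beta},M,\e^{-2\beta}\}=M.
\]
The first equality uses $M>\e^\beta$. Both values are positive and finite, so the mechanism is well defined. Because the score vectors at $x$ and $x'$ coincide, the only difference between $P_x$ and $P_{x'}$ is the temperature.

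\textbf{Ratio on the event $\{b\}$.} Write $\tau=\varepsilon G/(2M)$. Then
\[
\frac{P_{x'}(b)}{P_x(b)}=\frac{1+\e^{\tau\e^{\beta}}}{1+\e^{\tau}}.
\]
Since $\beta>0$, this ratio tends to $+\infty$ as $\tau\to\infty$; for large $\tau$ it behaves like $\exp\{\tau(\e^\beta-1)\}$. Fix any admissible $M$ (for instance $M=2\e^{\beta}$) and let $G\to\infty$. Then $\tau\to\infty$, and the ratio exceeds $\e^{\varepsilon}$. This violates pure $\varepsilon$-DP on the ordered edge $(x',x)$.

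\textbf{Where care is needed.} The one step that needs attention is the computation of $S$. We must check that the supremum in \eqref{eq:minimal-smooth} at $x$ is attained by the one-hop neighbor $x'$, and not by $z$ or $w$. This is what forces $M>\e^{\beta}$. Once that holds, $S(x')/S(x)=\e^{\beta}$ exactly, and the rest is the explicit two-candidate limit above.
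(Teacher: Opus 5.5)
Your proof is correct and follows essentially the same route as the paper. Both use a path whose two adjacent middle vertices have identical two-candidate score vectors and minimal smooth scales differing by exactly $\e^{\beta}$, so the ratio $(1+\e^{\e^\beta\tau})/(1+\e^{\tau})$ diverges as the score gap grows. The paper uses the shorter path $x\sim y\sim z$ with $\LS_q(x)=0$, which removes the need for your condition $M>\e^{\beta}$; that condition is really needed to beat $x$'s own local sensitivity $1$, not the contributions of $z$ or $w$.
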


\begin{proof}
Use the path $x\sim y\sim z$, candidates $\{a,b\}$, and scores
\[
 q(x,\cdot)=q(y,\cdot)=(0,L),\qquad q(z,\cdot)=(A,L),
\]
where $A,L>0$.  The local sensitivities at the three vertices are $0,A,A$.
The exact minimal smooth bounds at $x$ and $y$ are therefore
$\e^{-\beta}A$ and $A$.  Substituting the values into the EM, we have:
\[
 \frac{P_y(a)}{P_x(a)}=
 \frac{1+\exp(\e^\beta \eps L/(2A))}{1+\exp(\eps L/(2A))},
\]
which is unbounded as $L\to\infty$. 
\end{proof}

The analysis above works for every public positive $\beta$ and the exact minimal smooth upper bound.
In this analysis, smooth sensitivity does not provide privacy, because the probability ratio of returning the same candidate at an edge cannot be bounded by the smoothness: while the ratio between $e^\beta C$ and $C$ is bounded by $e^\beta$ regardless of $C$, the ratio between $\exp(e^\beta C)$ and $\exp(C)$ is unbounded when $C\to \infty$.
The next Theorem provides a stronger impossibility result, as it denies every approximate DP guarantee for fixed temperature coefficient and fixed smoothness.
\anil{needs some clarification}

\begin{restatable}[Multiplicity amplification]{theorem}{thmMultiplicity}
\label{thm:multiplicity}
Fix a coefficient $\lambda>0$, smoothness $\beta>0$, target
$\varepsilon_0<\infty$, and $0\le\delta_0<1$.  There are a finite candidate
range, a score, and neighboring vertices $x,y$ with identical score vectors
such that
\[
 M(x)(r)\propto
 \exp\{\lambda q(x,r)/S_{\LS_q,\beta}(x)\}
\]
violates $(\varepsilon_0,\delta_0)$-DP.  The two exact minimal smooth scales
have ratio $\e^\beta$ on the edge.
\end{restatable}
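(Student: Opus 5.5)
We construct a path dataset graph $x\sim y\sim z$, with a large range consisting of one distinguished candidate $a$ and $n$ ``filler'' candidates $c_1,\dots,c_n$, so that $m=n+1$. Scores on $x$ and $y$ are identical: $q(x,a)=q(y,a)=0$ and $q(x,c_i)=q(y,c_i)=L$ for all $i$. At $z$ we keep $q(z,c_i)=L$ and set $q(z,a)=A$. As in Proposition~\ref{prop:naive-smooth}, this gives $\LS_q(x)=0$ and $\LS_q(y)=\LS_q(z)=A$. We also attach nothing beyond $x$, so the minimal smooth scales are exactly $S(y)=A$ and $S(x)=\e^{-\beta}A$ (with $z$ contributing $\e^{-2\beta}A$ at $x$, dominated). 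Their ratio is therefore $\e^\beta$, which is the last claim of the theorem. Because $x$ and $y$ share score vectors, every difference between $M(x)$ and $M(y)$ comes from the scale alone.

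Next we compute the probability of the event $E=\{a\}$. Write $t=\lambda L/A$. Then
\[
 \Prb[M(y)=a]=\frac{1}{1+n\e^{t}},\qquad \Prb[M(x)=a]=\frac{1}{1+n\e^{\e^\beta t}}.
\]
Approximate DP would require $\Prb[M(y)=a]\le \e^{\eps_0}\Prb[M(x)=a]+\delta_0$. The filler multiplicity $n$ is what lets us push $\Prb[M(y)=a]$ toward $1$ while $\Prb[M(x)=a]$ stays near $0$. This is why we need an unbounded range and why $\delta_0<1$ (rather than $<1/2$) can be defeated.

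The key choice is the sign of $L$. We take $L<0$, so the fillers are \emph{worse} than $a$ and $t<0$. Set $s=-t>0$. Then $\Prb[M(y)=a]=1/(1+n\e^{-s})$ and $\Prb[M(x)=a]=1/(1+n\e^{-\e^\beta s})$. We choose $n=\lceil\e^{\gamma s}\rceil$ with $1<\gamma<\e^\beta$, which is possible because $\beta>0$. Under this choice, $n\e^{-s}\to\infty$... however, that makes the $y$-probability tend to $0$, so we reverse the roles and instead use the event $E=\mathcal R\setminus\{a\}$, i.e.\ that some filler is selected. Then $\Prb[M(x)\in E]=n\e^{-\e^\beta s}/(1+n\e^{-\e^\beta s})\to0$ while $\Prb[M(y)\in E]=n\e^{-s}/(1+n\e^{-s})\to1$, since $n\e^{-s}\approx\e^{(\gamma-1)s}\to\infty$ and $n\e^{-\e^\beta s}\approx\e^{(\gamma-\e^\beta)s}\to0$. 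For large $s$, the DP inequality would force $1-o(1)\le \e^{\eps_0}o(1)+\delta_0$, which contradicts $\delta_0<1$.

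The main obstacle is to make sure the exact minimal smooth scales are really $A$ and $\e^{-\beta}A$, independent of $L$ and $n$. This holds because $|q(z,a)-q(y,a)|=A$ and every filler has zero oscillation. We also need the denominators to be positive, which they are since $A>0$, and $A$ itself can be fixed (e.g.\ $A=1$, with $L=-sA/\lambda$). A final point to check is that $\lambda$ enters only through $t$, so the argument holds uniformly in $\lambda$.
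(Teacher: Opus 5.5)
Your proof is correct and uses essentially the same multiplicity argument as the paper. The scores on $x\sim y$ are identical, the minimal smooth scales differ by exactly $\e^\beta$, there are about $\e^{\gamma s}$ low-score fillers with $1<\gamma<\e^\beta$, and the witnessing event is that a filler is selected, whose probability tends to $1$ at $y$ and to $0$ at $x$. The differences are cosmetic: you reuse the three-vertex path of Proposition~\ref{prop:naive-smooth} and perturb the distinguished candidate at $z$, so that $\LS_q(x)=0$, whereas the paper uses $w\sim x\sim y\sim z$, perturbs a filler, and needs $B>\e^\beta$; in a final write-up you should drop the false start with the event $\{a\}$.
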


\begin{proof}
Use the path $w\sim x\sim y\sim z$ and candidates
$\R=\{0,1,\ldots,k\}$.  Choose $B>\e^\beta$ and set
\[
 q(x,\cdot)=q(y,\cdot)=(C,0,\ldots,0),
\]
while $q(w,1)=1$, $q(z,1)=B$, and every other outer score equals its middle
value.  The local sensitivities along the path are $(1,1,B,B)$.  Thus the
minimal smooth scales at $x,y$ are $B\e^{-\beta}$ and $B$.
The probability of high-score candidates $\Pr[M(y) = \R_0] \propto \exp(\lambda C / B)$ and $\Pr[M(x) = \R_0]\propto \exp(\lambda C / (Be^{-\beta}))$.
Specifically:

\begin{align*}
    \Pr[M(x) = \R_0] &= \frac{\exp(\frac{\lambda C}{Be^{-\beta}})}{k + \exp(\frac{\lambda C}{Be^{-\beta}})}\\
    \Pr[M(y) = \R_0] &= \frac{\exp(\frac{\lambda C}{B)}}{k + \exp(\frac{\lambda C}{B})}\\
\end{align*}

Put $t=\lambda C/B \ge 0$, choose $1<a<\e^\beta$, and take
$k=\lfloor\e^{at}\rfloor$.  For the event $A$ that one of the $k$
low-score candidates, i.e., not $\R_0$, is selected,
\[
 \Prb_y[A]=\frac{k}{k+\e^t} 
 \ge \frac{1}{1 + 2\e^{t(1-a)}}
 \longrightarrow1,
 \qquad
 \Prb_x[A]=\frac{k}{k+\e^{\e^\beta t}}
 \le \frac{1}{1 + \e^{t(e^\beta - a)}}
 \longrightarrow0
 \quad(t\to\infty).
\]
For every fixed pair of target parameters $(\eps_0,\delta_0)$, a sufficiently large but finite integer $k$ rejects the privacy claim.
\end{proof}

The theorem sets fixed values for $\beta$ and $\lambda$ independently of $m,\eps, \delta$. 
It does not rule out a sufficient calibration in which $(\beta,\lambda)$ are calculated based on  $(m,\eps,\delta)$ as we propose in Section~\ref{sec:smooth-others}.

\section{Private upper scales and sensitivity hierarchies}
\label{sec:private-scale}
 
Using a privatized one-sided upper bound on local sensitivity to calibrate Laplacian noise is known for numeric queries~\cite{KarwaEtAl2011,KasiviswanathanEtAl2013,Vadhan2017}; but has not been used to calibrate the Gibbs law in EM.
\anil{selection is a different type of question, not setting for Laplacian?}
The formulations below are its interfaces with a conditional exponential mechanism, i.e.,
the Gibbs law with its scale being conditioned on a randomized value that is privately released.
First, we compute $B$, a private value that is lower-bounded by $\LS_q(x)$ with probability at least $1-\eta$.
EM calibrated by $B$ is conditionally private by Lemma~\ref{lem:edge-em}.
The failure probability that $B$ underestimates $\LS_q(x)$ is accounted for by the $\delta$-part of DP.
The most challenging analysis is the privacy bound, as sampled from the Gibbs law, must be taken over all possible values of the privatized local sensitivity, a numeric query to produce $B$ that possibly has a continuous domain.

\begin{definition}[Private upper scale]\label{def:private-upper}
Fix a public floor $b_0>0$.  A scale kernel $K_x$ supported on
$[b_0,\infty)$ is an
$(\varepsilon_t,\delta_t;\eta)$ private upper bound for a nonnegative
finite-valued function $L$ if $K$ is $(\varepsilon_t,\delta_t)$-DP and
\[
 \Prb_{B\sim K_x}[B<L(x)]\le\eta
 \quad\text{for every }x.
\]
\end{definition}

\anil{notation and definition not clear.
scale kernel?
What is $K$, and a distribution being private? And does the subscript mean anything specific?}
\dungnote{I Put a definition in Prelims}
The public floor $b_0$ avoids division by zero.  
Using a small $b_0$ allows a more accurate distribution when sensitivity is low.
The floor does not fix a bad case where $B < \LS_q(x)$, and the privacy guarantee consumes it under the failure probability of approximate DP, i.e., in the $\delta$-part of DP.  

Given $B=b$, let $Q_x^b$ be a conditional EM such that, on
an edge $x\sim x'$, $Q_x^b$ and $Q_{x'}^b$ are mutually
$\varepsilon_s$-private whenever $b\ge \LS_q(x)$.
This is because $b$ only needs to cover one of the endpoints, since both $\LS_q(x)$
and $\LS_q(x')$ cover the oscillation across $x\sim x'$.

Then we will prove that the joint release
$(B,R)$, where $B\sim K_x$ and $R\sim Q_x^B$ is
\begin{equation}\label{eq:two-order}
\left(\varepsilon_t+\varepsilon_s,
 \min\{\e^{\varepsilon_s}\delta_t+\eta,
        \delta_t+\e^{\varepsilon_t}\eta,1\}\right)\text{-DP}.
\end{equation}  

\begin{restatable}[Private-upper-scale exponential selection]{theorem}{thmPrivateScaleEM}
\label{thm:private-scale-em}
If $K$ is an $(\varepsilon_t,\delta_t;\eta)$ private upper bound for
$\LS_q$, then releasing $(B,R)$ with $R\sim Q_x^B$ has the guarantee in
\eqref{eq:two-order}.  If $K$ is pure $\varepsilon_t$-DP, the guarantee is
$(\varepsilon_t+\varepsilon_s,\eta)$-DP.  Conditional on $B=b$, the regret
bound is \eqref{eq:em-tail}.
\anil{some terms need to be clarified. 
what does releasing $(B, R)$ mean-- is this as in (\ref{eq:gibbs}), with $B$ instead of $b$? $b$ is said to be public there, so this might be confusing}
\dungnote{I defined it in Prelims now.}
\end{restatable}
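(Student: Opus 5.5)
Fix an ordered edge $x\sim x'$ and a measurable event $E$ in the joint output space $[b_0,\infty)\times\R$. I will split the scale space into a ``good'' set and its complement, and handle the two pieces separately. The good set is
\[
 G_x=\{b\ge b_0:\ b\ge\LS_q(x)\}.
\]
Since $\omega_q(x,x')\le\LS_q(x)$, Lemma~\ref{lem:edge-em} gives $Q_x^b(r)\le\e^{\varepsilon_s}Q_{x'}^b(r)$ for every $b\in G_x$ and every $r$. The bad set $G_x^c$ has $K_x$-mass at most $\eta$ by Definition~\ref{def:private-upper}. For each $b$, write the section $E_b=\{r:(b,r)\in E\}$ and define
\[
 h_{x}(b)=Q_x^b(E_b)\,\ind[b\in G_x],\qquad h_{x'}(b)=Q_{x'}^b(E_b).
\]
Then
\[
 \Pr[(B,R)\in E\mid x]\le \mathbb E_{B\sim K_x}[h_x(B)]+\eta,
 \qquad
 0\le h_x\le \min\{1,\e^{\varepsilon_s}h_{x'}\}.
\]
Measurability of these functions follows since $b\mapsto Q_x^b(r)$ is continuous and $\R$ is finite.

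The key tool is the standard fact that an $(\varepsilon_t,\delta_t)$-DP kernel extends from events to $[0,1]$-valued test functions. For any measurable $f:[b_0,\infty)\to[0,1]$,
\[
 \mathbb E_{K_x}f\le\e^{\varepsilon_t}\mathbb E_{K_{x'}}f+\delta_t .
\]
I will prove this by the layer-cake identity $\mathbb E f=\int_0^1\Pr[f>u]\,du$, applying DP to each level set. I apply it to $f=h_x$, which takes values in $[0,1]$ and is dominated by $\e^{\varepsilon_s}h_{x'}$, to get
\[
 \mathbb E_{K_x}h_x\le \e^{\varepsilon_t}\mathbb E_{K_{x'}}h_x+\delta_t\le\e^{\varepsilon_t+\varepsilon_s}\mathbb E_{K_{x'}}h_{x'}+\delta_t .
\]
Adding the bad mass gives the bound $\delta=\delta_t+\eta$. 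The constants in \eqref{eq:two-order} differ slightly from this, so I aim to reproduce them as follows.
\begin{itemize}
\item For $\e^{\varepsilon_s}\delta_t+\eta$, use the function $g=\min\{1,\e^{-\varepsilon_s}h_x\}$, which is at most $h_{x'}$. Then $\mathbb E_{K_x}h_x\le \e^{\varepsilon_s}\mathbb E_{K_x}[\e^{-\varepsilon_s}h_x]$. Applying the DP inequality to the $[0,1]$-valued function $\e^{-\varepsilon_s}h_x$ gives $\e^{\varepsilon_s}(\e^{\varepsilon_t}\mathbb E_{K_{x'}}h_{x'}+\delta_t)$, and adding $\eta$ yields this expression.
\item For $\delta_t+\e^{\varepsilon_t}\eta$, place the bad set on the $x'$ side instead. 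Use the good set $G_{x'}$, whose mass under $K_{x'}$ is at least $1-\eta$ and on which $\LS_q(x')\ge\omega_q$ also certifies the edge. First apply the DP inequality for $K$ to the whole integrand $Q_x^b(E_b)$, passing from $K_x$ to $K_{x'}$. Then split under $K_{x'}$: the bad part has mass at most $\eta$ and after the factor $\e^{\varepsilon_t}$ contributes $\e^{\varepsilon_t}\eta$, while the good part is bounded by Lemma~\ref{lem:edge-em}.
\end{itemize}
Taking the minimum of the two bounds, together with the trivial bound $1$, gives \eqref{eq:two-order}. When $\delta_t=0$ both expressions reduce to $\eta$, which gives the pure $(\varepsilon_t+\varepsilon_s,\eta)$ claim.

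The regret claim is immediate: conditional on $B=b$, $R\sim Q_x^b$, and \eqref{eq:em-tail} holds for any fixed $b>0$. The floor $b_0>0$ ensures this.

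I expect the main obstacle to be the careful extension of $(\varepsilon_t,\delta_t)$-DP from events to $[0,1]$-valued functions over a possibly continuous scale space. The other delicate point is doing the good/bad splitting on the correct side of the edge for each of the two $\delta$ expressions, so that $\delta_t$ and $\eta$ pick up the stated factors $\e^{\varepsilon_s}$ and $\e^{\varepsilon_t}$ rather than worse ones.
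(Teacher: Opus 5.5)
Your proof is correct, and your first chain of inequalities is sharper than the paper's argument.

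The paper gets the two terms of \eqref{eq:two-order} from two orderings of the same steps:
\begin{itemize}
\item it applies Lemma~\ref{lem:edge-em} on $G_x$ while still integrating against $K_x$, and only then transports $\ind_{G_x}(b)\,Q_{x'}^b(E_b)$ to $K_{x'}$, which inflates $\delta_t$ by $\e^{\eps_s}$; or
\item it transports the whole integrand $Q_x^b(E_b)$ first and splits on $G_{x'}$ under $K_{x'}$, which inflates $\eta$ by $\e^{\eps_t}$.
\end{itemize}
Your second bullet is exactly the second ordering. Your first bullet is essentially the first ordering; the $\min\{1,\cdot\}$ there is redundant since $h_x\le 1$.

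Your main chain orders the steps differently. You charge the bad mass under $K_x$, transport the truncated $[0,1]$-valued function $h_x$, and only then apply the edge lemma pointwise on the fixed set $G_x$ under $K_{x'}$. This avoids both inflations and yields $(\eps_t+\eps_s,\min\{\delta_t+\eta,1\})$-DP. Now $\delta_t+\eta$ is no larger than either $\e^{\eps_s}\delta_t+\eta$ or $\delta_t+\e^{\eps_t}\eta$. An $(\eps,\delta)$ guarantee also implies the $(\eps,\delta')$ guarantee for every $\delta'\ge\delta$. So your main chain already proves \eqref{eq:two-order}, including the pure-$K$ case. Your remark that the constants ``differ'' is not actually a gap, and the two bullets are unnecessary, though harmless.

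Your ordering also gives something beyond the statement. Fed into the induction of Theorem~\ref{thm:hierarchy-privacy}, it gives $\bar\delta_j=\bar\delta_{j+1}+\eta_{j+1}$ and a final additive term $\sum_{j=0}^k\eta_j$. This removes the $\e^{\bar\eps_j}$ factors in \eqref{eq:hier-simple}.

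The step you flag as the main obstacle is the layer-cake extension of approximate DP to $[0,1]$-valued test functions. This is exactly the ``expectation form'' the paper uses without proof, and your sketch of it is complete.
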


\begin{proof}
Let $K_x$ be an $(\varepsilon_t,\delta_t;\eta)$ private upper bound for
$L$.  Given $B=b$, let $A_x^b$ be a conditional output kernel such that, on
an edge $x\sim x'$, $A_x^b$ and $A_{x'}^b$ are mutually
$\varepsilon_s$-private whenever $b\ge L(x)$.  

Then we will prove that the joint release
$(B,Y)$, where $B\sim K_x$ and $Y\sim A_x^B$, is
\begin{equation}
\left(\varepsilon_t+\varepsilon_s,
 \min\{\e^{\varepsilon_s}\delta_t+\eta,
        \delta_t+\e^{\varepsilon_t}\eta,1\}\right)\text{-DP}.
\end{equation}  

Fix an ordered edge $x\sim x'$ and a joint event $E$.  
Let $E_b$ be its
section at scale $b$, and let $G_x=\{b:b\ge L(x)\}$.  
On $G_x$, conditional
privacy gives $A_x^b(E_b)\le\e^{\varepsilon_s}A_{x'}^b(E_b)$, while
$K_x(G_x^c)\le\eta$, where $G_x^c$ is the complement of $G_x$.  The expectation form of approximate DP for $K$ gives
\begin{align*}
 \Prb_x[(B,Y)\in E]
 &= \int A_x^b(E_b)K_x(db) = \int_{G_x^c}A_x^b(E_b)K_x(db) + \int_{G_x} A_x^b(E_b)K_x(db) \\
 &\le \eta+  \int_{G_x} A_x^b(E_b)K_x(db) \le \eta + \e^{\varepsilon_s} \int_{G_x}A_{x'}^b(E_b)K_x(db)\\
 & \le \eta + \e^{\varepsilon_s + \eps_t}\int_{G_x}A_{x'}^b(E_b)K_{x'}(db) +\e^{\varepsilon_s}\delta_t
 \le \e^{\varepsilon_t+\varepsilon_s}
       \Prb_{x'}[(B,Y)\in E]
       +\eta+\e^{\varepsilon_s}\delta_t.
\end{align*}

For the other order, first transport the bounded function
$b\mapsto A_x^b(E_b)$ from $K_x$ to $K_{x'}$, and then split the target
integral on $G_{x'}$.  This yields
\begin{align*}
 \Prb_x[(B,Y)\in E]  &= \int A_x^b(E_b)K_x(db) \le \e^{\eps_t}\int A_x^b(E_b)K_{x'}(db) +\delta_t \\
 &\le \e^{\eps_t}\int_{G_{x'}} A_x^b(E_b)K_{x'}(db) + \e^{\eps_t}\int_{G_{x'}^c} A_x^b(E_b)K_{x'}(db) + \delta_t 
 \le  \e^{\eps_t}\int_{G_{x'}} A_x^b(E_b)K_{x'}(db) + \e^{\eps_t}\eta + \delta_t \\
 &\le  \e^{\eps_t + \eps_s}\int_{G_{x'}} A_{x'}^b(E_b)K_{x'}(db) + \e^{\eps_t}\eta + \delta_t
 \le\e^{\varepsilon_t+\varepsilon_s}
       \Prb_{x'}[(B,Y)\in E]
       +\delta_t+\e^{\varepsilon_t}\eta.
\end{align*}

Taking $L=\LS_q$ and $A_x^b=Q_x^b$ is valid by Lemma~\ref{lem:edge-em}, the Theorem follows.
\end{proof}




Applying the Laplace mechanism ($W$ in the below Corollary) using the global sensitivity of local sensitivity, we then provide a one-level shifted-Laplace scale.
The shifting $C/{\varepsilon_t}\log\frac1{2\eta}$ is needed to guarantee that the scale is a one-sided high-probability upper bound on the true local sensitivity, as $\Pr[Lap(C/\eps_t) < - C/{\varepsilon_t}\log\frac1{2\eta}] \le \eta$ by properties of the Laplacian distribution.

\begin{corollary}[Shifted-Laplace local scale]
\label{cor:one-level}
Suppose $\LS_q$ is finite-valued.  Let $C=\GS_{\LS_q}<\infty$, fix $\varepsilon_t>0$ and
$0<\eta<1/2$.  If $C>0$, draw $W\sim\Lap(C/\varepsilon_t)$ and set
\begin{equation}\label{eq:one-level-scale}
 B=\max\left\{b_0,\LS_q(x)+W+
 \frac C{\varepsilon_t}\log\frac1{2\eta}\right\}.
\end{equation}
Then private-upper-scale EM is
$(\varepsilon_t+\varepsilon_s,\eta)$-DP.  If $C=0$, the deterministic
scale $\max\{b_0,\LS_q(x)\}$ is constant on edges and gives pure
$\varepsilon_s$-DP.
\anil{if $C=0$ this is basically like uniform distribution. Maybe you can assume that's not the case in the beginning itself.}
\dungnote{When $C=0$, $\LS_q$ is constant along every edge and hence on each connected component; the mechanism uses a constant scale. It is not generally uniform over candidates unless all scores tie}
\end{corollary}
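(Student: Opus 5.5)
The plan is to obtain the corollary directly from Theorem~\ref{thm:private-scale-em}. It suffices to show that the kernel $K_x$ defined by \eqref{eq:one-level-scale} is an $(\varepsilon_t,0;\eta)$ private upper bound for $\LS_q$ in the sense of Definition~\ref{def:private-upper}. The pure-DP branch of the theorem then gives $(\varepsilon_t+\varepsilon_s,\eta)$-DP for the joint release $(B,R)$, and hence for $R$ alone by post-processing. This requires three checks: support, privacy of $K$, and the one-sided coverage property.

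\textbf{Support and privacy.} Because of the outer $\max\{b_0,\cdot\}$, $B$ is supported on $[b_0,\infty)$. Before the max, $B$ is the Laplace mechanism applied to the scalar function $f=\LS_q$, shifted by the public constant $\frac{C}{\varepsilon_t}\log\frac1{2\eta}$. Since $C=\GS_{\LS_q}$ bounds $|\LS_q(x)-\LS_q(x')|$ on every edge, the standard Laplace argument (density ratio at most $\e^{|f(x)-f(x')|\varepsilon_t/C}\le\e^{\varepsilon_t}$) shows that $\LS_q(x)+W$ is pure $\varepsilon_t$-DP. Adding a public constant and applying the deterministic map $u\mapsto\max\{b_0,u\}$ are post-processing, so $K$ is pure $\varepsilon_t$-DP.

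\textbf{Coverage.} I need $\Prb[B<\LS_q(x)]\le\eta$. Since $B\ge \LS_q(x)+W+\frac C{\varepsilon_t}\log\frac1{2\eta}$, the event $\{B<\LS_q(x)\}$ is contained in $\{W<-\frac C{\varepsilon_t}\log\frac1{2\eta}\}$. The Laplace tail gives $\Prb[W<-t]=\frac12\e^{-t\varepsilon_t/C}$, which equals $\eta$ at $t=\frac C{\varepsilon_t}\log\frac1{2\eta}$. The assumption $\eta<1/2$ makes this shift positive. This holds for every $x$, with no dependence on $b_0$.

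\textbf{The case $C=0$.} Here $\LS_q$ is constant along edges, so $\max\{b_0,\LS_q(x)\}=\max\{b_0,\LS_q(x')\}=:b$ on each edge $x\sim x'$. Moreover $b\ge\LS_q(x)\ge\omega_q(x,x')$, so Lemma~\ref{lem:edge-em} gives likelihood ratios at most $\e^{\varepsilon_s}$, i.e.\ pure $\varepsilon_s$-DP. Releasing $b$ as well leaks nothing, because it is identical across the edge.

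None of these steps is a real obstacle. The only delicate point is verifying that the clipping at $b_0$ and the shift do not break either privacy or coverage, and this is handled by post-processing together with the monotonicity $\max\{b_0,u\}\ge u$.
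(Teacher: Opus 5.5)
Your proposal is correct and follows the paper's own proof: you show that the shifted, clipped Laplace release is a pure $\varepsilon_t$-DP private upper bound for $\LS_q$ with coverage failure at most $\eta$, then invoke the pure-DP branch of Theorem~\ref{thm:private-scale-em}. Your explicit treatment of the $C=0$ case via Lemma~\ref{lem:edge-em} fills in a branch that the paper's short proof leaves implicit.
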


\begin{proof}
The scalar function $\LS_q$ has global sensitivity $C$.  The Laplace release
in \eqref{eq:one-level-scale} is pure $\varepsilon_t$-DP, and its lower tail
underestimates $\LS_q(x)$ with probability at most $\eta$.  Applying
Theorem~\ref{thm:private-scale-em}, the Corollary follows.
\end{proof}

If $0<\xi<1/2$, then with probability at least $1-\xi$,
\begin{equation}
\label{eq:one-level-upper}
\begin{aligned}
B \le \bar B_x(\xi)
&:= \max\Biggl\{b_0,
 \LS_q(x)+\frac{C}{\varepsilon_t}
 \left(\log\frac{1}{2\eta}+\log\frac{1}{2\xi}\right)
 \Biggr\}.
\end{aligned}
\end{equation}

Combining \eqref{eq:one-level-upper} with \eqref{eq:em-tail} gives the 
regret in Table~\ref{tab:map} with probability $1-\rho-\xi$.

\textbf{Recursive higher-order private scales.} 
Corollary~\ref{cor:one-level} uses the first-order local sensitivity.  
Karwa et al. go one derivative further for graph statistics, using the local sensitivity of local sensitivity (second-order),
and Nguyen et al. develop application-specific higher-order hierarchies~\cite{KarwaEtAl2011,NguyenEtAl2023}.  
The next theorem gives a general selection interface for such a hierarchy.

\begin{definition}[Sensitivity hierarchy]\label{def:hierarchy}
A depth-$k$ sensitivity hierarchy for $q$ consists of finite nonnegative
functions $H_0,\ldots,H_k$ and a finite public $C>0$ such that
\begin{equation}\label{eq:hierarchy}
 H_0\ge\LS_q\;;\;
 H_{j+1}\ge\LS_{H_j}\ (0\le j<k)\;;\;
 C\ge\GS_{H_k}.
\end{equation}
\end{definition}
When all iterates are finite and the top one has a finite cap, the minimal
choice is exact local sensitivities $H_0=\LS_q$ and $H_{j+1}=\LS_{H_j}$.  
\anil{upper proxy, uncertified heuristic-- not clear}
\dungnote{Rewritten below.}
The hierarchy does not necessarily consist of layers of local sensitivities. 
Each $H_j$ is a deterministic, possibly efficiently computable, upper bound that satisfies the inequality in \eqref{eq:hierarchy} for every possible dataset. 
A loose bound may affect the utility, but does not impact the privacy analysis.
Any approximation must be accompanied by rigorous analyses of privacy and underestimation, where the failure probability is taken into account.

Fix floors $b_j>0$, privacy allocations $\varepsilon_j>0$, and
$0<\eta_j<1/2$ for $j=0,\ldots,k$.  Set $B_{k+1}=C$.  Independently for
$j=k,k-1,\ldots,0$, draw standard Laplace $Z_j$ and set
\begin{equation}\label{eq:recursive-scale}
 B_j=\max\left\{b_j,H_j(x)+\frac{B_{j+1}}{\varepsilon_j}
 \left(Z_j+\log\frac1{2\eta_j}\right)\right\}.
\end{equation}
Finally, release $(B_0, R)$ with a sampled $R\sim Q_x^{B_0}$.

\begin{theorem}[Recursive private-scale selection]
\label{thm:hierarchy-privacy}
Put $\eta_{k+1}=0$ and
$\bar\varepsilon_{k+1}=\bar\delta_{k+1}=0$.  For
$j=k,k-1,\ldots,0$, define
\begin{align}
 \bar\varepsilon_j&=\bar\varepsilon_{j+1}+\varepsilon_j,
 \label{eq:hier-eps}\\
 \bar\delta_j&=\min\left\{
 \e^{\varepsilon_j}\bar\delta_{j+1}+\eta_{j+1},
 \bar\delta_{j+1}+\e^{\bar\varepsilon_{j+1}}\eta_{j+1},1
 \right\}.
 \label{eq:hier-delta}
\end{align}
Then the joint release $(B_0,R)$ is
\begin{equation}\label{eq:hier-final}
\left(\bar\varepsilon_0+\varepsilon_s,
 \min\{\e^{\varepsilon_s}\bar\delta_0+\eta_0,
        \bar\delta_0+\e^{\bar\varepsilon_0}\eta_0,1\}
\right)\text{-DP}.
\end{equation}
A simpler, possibly looser, final additive term is
\begin{equation}\label{eq:hier-simple}
 \delta\le\sum_{j=0}^k\e^{\bar\varepsilon_j}\eta_j.
\end{equation}
The same privacy statement holds if all internal scales $B_j, j\in[k]$ are also released.
\end{theorem}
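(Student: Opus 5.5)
We argue by backward induction on the level $j$, applying Theorem~\ref{thm:private-scale-em} (in its general form with an arbitrary conditional kernel $A_x^b$ and an arbitrary nonnegative function $L$) once per level. For $j=k+1,\ldots,0$ let $\mathcal M_j(x)$ be the joint release $(B_j,B_{j+1},\ldots,B_k)$, with $\mathcal M_{k+1}$ the trivial release of the public constant $B_{k+1}=C$.

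The claim to prove by downward induction is that $\mathcal M_j$ is $(\bar\varepsilon_j,\bar\delta_j)$-DP, with $\bar\varepsilon_j,\bar\delta_j$ as in \eqref{eq:hier-eps}--\eqref{eq:hier-delta}. The base case $j=k+1$ is $(0,0)$-DP, since the output is constant.

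For the inductive step, view $\mathcal M_j$ as a two-stage mechanism. The first stage draws the ``scale'' $\mathbf b=(B_{j+1},\ldots,B_k)$ from $\mathcal M_{j+1}$. The second stage, given $\mathbf b$, draws $B_j$ by \eqref{eq:recursive-scale}. Only the coordinate $B_{j+1}$ of $\mathbf b$ enters the second stage. Conditionally on $B_{j+1}=b$, the quantity $H_j(x)+\frac{b}{\varepsilon_j}(Z_j+\log\frac1{2\eta_j})$ is a Laplace release of $H_j$ with scale $b/\varepsilon_j$. The max with the public floor $b_j$ is post-processing. Whenever $b\ge \LS_{H_j}(x)$, we have $|H_j(x)-H_j(x')|\le b$ on every edge at $x$, so the second stage is mutually $\varepsilon_j$-private across $x\sim x'$.

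This is exactly the hypothesis of Theorem~\ref{thm:private-scale-em} with
\begin{itemize}
\item $L=H_{j+1}$, which satisfies $H_{j+1}\ge \LS_{H_j}$ by \eqref{eq:hierarchy};
\item the scale kernel $\mathcal M_{j+1}$, taking the test function $b\ge L(x)$ on the first coordinate;
\item $A^{\mathbf b}_x$ the Laplace stage.
\end{itemize}
The parameters are $(\varepsilon_t,\delta_t)=(\bar\varepsilon_{j+1},\bar\delta_{j+1})$, $\varepsilon_s=\varepsilon_j$, and underestimation probability $\eta_{j+1}$. At the top level $j=k$ we have $B_{k+1}=C\ge\GS_{H_k}$ with failure $\eta_{k+1}=0$.

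For the underestimation bound at level $j+1$, use the lower Laplace tail: $\Pr[Z_{j+1}<-\log\frac1{2\eta_{j+1}}]=\eta_{j+1}$. Hence $B_{j+1}\ge H_{j+1}(x)$ except with probability $\eta_{j+1}$, whatever the realized $B_{j+2}>0$ is. This holds conditionally on higher levels, hence unconditionally. The proof of Theorem~\ref{thm:private-scale-em} only used $K_x(G_x^c)\le\eta$ and the DP of $K$, so it transfers verbatim to vector-valued scales. It yields exactly \eqref{eq:hier-delta}, and the joint release $(B_j,\mathbf b)$ gives the statement that internal scales may also be released.

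Finally, apply Theorem~\ref{thm:private-scale-em} once more with the following choices:
\begin{itemize}
\item scale kernel $\mathcal M_0$, which is $(\bar\varepsilon_0,\bar\delta_0)$-DP;
\item $L=H_0\ge\LS_q$, with failure $\eta_0$ by the same tail argument;
\item $A^b_x=Q^b_x$, justified by Lemma~\ref{lem:edge-em}.
\end{itemize}
This gives \eqref{eq:hier-final}. Releasing only $(B_0,R)$ is post-processing of the full release.

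For \eqref{eq:hier-simple}, use the second branch of the min at every step: $\bar\delta_j\le\bar\delta_{j+1}+\e^{\bar\varepsilon_{j+1}}\eta_{j+1}$, together with the final term $\bar\delta_0+\e^{\bar\varepsilon_0}\eta_0$. Unrolling and using $\eta_{k+1}=0$ gives $\sum_{j=0}^k\e^{\bar\varepsilon_j}\eta_j$.

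The main obstacle is the inductive step's legitimacy. Theorem~\ref{thm:private-scale-em} was stated with a scalar scale supported on $[b_0,\infty)$ and a condition $b\ge L(x)$. Here the scale is the vector $\mathbf b$ and the good set is $G_x=\{\mathbf b: b_{j+1}\ge H_{j+1}(x)\}$, so one must check two things:
\begin{itemize}
\item the proof only used measurability of sections and the integral form of approximate DP, so it applies to vector-valued scales;
\item $B_{j+1}\ge b_{j+1}>0$, so the Laplace scale at level $j$ is always positive.
\end{itemize}

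The regret claim then follows from \eqref{eq:em-tail} combined with an upper tail bound on $B_0$. That bound is obtained by a union bound over the upper tails $\xi_j$ of all $Z_j$, propagated upward through the recursion; this is routine.
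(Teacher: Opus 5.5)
Your proposal is correct and follows the paper's proof. Both run a backward induction on $T_j=(B_j,\ldots,B_k)$: at each level, Theorem~\ref{thm:private-scale-em} is applied with scale kernel $T_{j+1}$ and good event $B_{j+1}\ge H_{j+1}(x)$, whose failure probability is $\eta_{j+1}$ by the lower Laplace tail and is $0$ at the public cap $C$. The theorem is then applied once more with $Q_x^{B_0}$, and \eqref{eq:hier-simple} follows by unrolling the second branch of the min. Your explicit check that the proof of Theorem~\ref{thm:private-scale-em} carries over to vector-valued scales fills in a step the paper leaves implicit.
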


\begin{proof}
Let $T_j=(B_j,\ldots,B_k)$.  For every $j$, the one-sided Laplace tail gives
\begin{equation}\label{eq:hier-coverage}
 \Prb[B_j<H_j(x)]\le\eta_j,
\end{equation}
independently of the value or safety of $B_{j+1}$; the public floor $b_j$ can only
reduce this probability.

At the top, $B_{k+1}=C\ge\GS_{H_k}$ is a deterministic safe scale.  Given
$B_{j+1}=b$, the next line of \eqref{eq:recursive-scale} is a shifted
Laplace release of $H_j(x)$ with scale $b/\varepsilon_j$, followed by public
clipping.  It is conditionally $\varepsilon_j$-DP whenever
$b\ge\LS_{H_j}(x)$.  Coverage of $H_{j+1}$ implies this good-scale event by
\eqref{eq:hierarchy}.  
Applying the statement at Equation~\eqref{eq:two-order} inductively to
$T_{j+1}$ gives exactly \eqref{eq:hier-eps}--\eqref{eq:hier-delta}.
Applying the lemma once more to $T_0$ and conditional EM gives
\eqref{eq:hier-final}.  Repeatedly choosing the second term in
\eqref{eq:hier-delta} yields
$\bar\delta_j\le\sum_{i=j+1}^k
\e^{\bar\varepsilon_i}\eta_i$; the final second accounting order then gives
\eqref{eq:hier-simple}.
\end{proof}

\begin{restatable}[Recursive high-probability regret]{theorem}{thmHierarchyUtility}
\label{thm:hierarchy-utility}
Fix $0<\xi_j<1/2$, set $U_{k+1}(x)=C$, and define downward

\begin{align}
U_j(x)
&= \max\Biggl\{b_j,
H_j(x)+\frac{U_{j+1}(x)}{\varepsilon_j}
\left(\log\frac{1}{2\eta_j}+\log\frac{1}{2\xi_j}\right)
\Biggr\}.
\label{eq:hier-U} 
\end{align}

For every $0<\rho<1$, with probability at least
$1-\rho-\sum_{j=0}^k\xi_j$,
\begin{equation}\label{eq:hier-regret}
 \operatorname{Reg}_q(x,R)
 \le\frac{2U_0(x)}{\varepsilon_s}\log\frac m\rho.
\end{equation}
\end{restatable}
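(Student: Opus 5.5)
The plan is to prove by downward induction on $j=k,k-1,\ldots,0$ that the event
\[
 \mathcal E_j=\{B_i\le U_i(x)\ \text{for all } i=j,\ldots,k\}
\]
has probability at least $1-\sum_{i=j}^k\xi_i$. Then I combine $\mathcal E_0$ with the conditional EM tail \eqref{eq:em-tail}. The base case is $B_{k+1}=U_{k+1}(x)=C$, which holds deterministically.

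For the inductive step, I condition on $B_{j+1}=b$ with $b\le U_{j+1}(x)$. By \eqref{eq:recursive-scale},
\[
 B_j=\max\Bigl\{b_j,\;H_j(x)+\tfrac{b}{\varepsilon_j}\bigl(Z_j+\log\tfrac1{2\eta_j}\bigr)\Bigr\}.
\]
Here $Z_j$ is standard Laplace and is independent of $B_{j+1}$. Its upper tail satisfies $\Pr[Z_j>\log\frac1{2\xi_j}]=\xi_j$.

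On the complement of that tail event, $Z_j+\log\frac1{2\eta_j}\le\log\frac1{2\eta_j}+\log\frac1{2\xi_j}$. The right-hand side is positive because $\eta_j,\xi_j<1/2$. Using positivity and $b\le U_{j+1}(x)$, I get
\[
 \tfrac{b}{\varepsilon_j}\bigl(Z_j+\log\tfrac1{2\eta_j}\bigr)\le\tfrac{U_{j+1}(x)}{\varepsilon_j}\bigl(\log\tfrac1{2\eta_j}+\log\tfrac1{2\xi_j}\bigr).
\]
Since $\max\{b_j,\cdot\}$ is monotone, this gives $B_j\le U_j(x)$ by \eqref{eq:hier-U}.

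One subtlety needs care here. If $Z_j+\log\frac1{2\eta_j}$ is negative, replacing $b$ by the larger $U_{j+1}(x)$ could in principle go the wrong way. This causes no problem: in that case the whole term is $\le 0$, which is already at most the positive bound. So the inequality holds on the entire complement of $\{Z_j>\log\frac1{2\xi_j}\}$.

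Integrating over $B_{j+1}$ with independence, I obtain $\Pr[\mathcal E_{j+1}\setminus\mathcal E_j]\le\xi_j$. A union bound then yields $\Pr[\mathcal E_0^c]\le\sum_{j=0}^k\xi_j$.

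For the final step, conditional on $B_0=b$, the observation \eqref{eq:em-tail} gives
\[
 \Pr\Bigl[\operatorname{Reg}_q(x,R)>\tfrac{2b}{\varepsilon_s}\log\tfrac m\rho\Bigm| B_0=b\Bigr]\le\rho.
\]
On $\mathcal E_0$ we have $b\le U_0(x)$, and the threshold is monotone in $b$. Hence
\[
 \Pr\Bigl[\operatorname{Reg}_q>\tfrac{2U_0(x)}{\varepsilon_s}\log\tfrac m\rho\Bigr]\le\Pr[\mathcal E_0^c]+\mathbb E\bigl[\rho\,\ind_{\mathcal E_0}\bigr]\le\rho+\sum_{j=0}^k\xi_j.
\]
This is \eqref{eq:hier-regret}.

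I expect the main obstacle to be the sign issue in the inductive step, namely the monotonicity in $b$ of a term that can be negative. The fix is the case split described above. Handling the conditioning cleanly, using the independence of $Z_j$ from $B_{j+1}$, is the other point that needs care.
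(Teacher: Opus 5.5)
Your proof is correct and follows the paper's argument: the same case split on the sign of $Z_j+\log\frac{1}{2\eta_j}$, the downward substitution $B_{j+1}\le U_{j+1}(x)$, a union bound over the $\xi_j$ tails, and the conditional EM tail. The paper streamlines this slightly by intersecting all events $\{Z_j\le\log\frac{1}{2\xi_j}\}$ up front, after which $B_j\le U_j(x)$ holds deterministically, so it needs no conditioning on $B_{j+1}$ and no appeal to independence.
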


\begin{proof}
Except with probability $\sum_j\xi_j$, every upper Laplace tail satisfies
$Z_j\le\log(1/(2\xi_j))$.  We show downward that $B_j\le U_j$.  If the
shifted factor $Z_j+\log(1/(2\eta_j))$ is negative, then its product with the
positive $B_{j+1}/\varepsilon_j$ is nonpositive.  Hence the affine term in
\eqref{eq:recursive-scale} is at most $H_j(x)\le U_j(x)$, while also
$b_j\le U_j(x)$; taking their maximum gives $B_j\le U_j$.  Otherwise it is
safe to substitute $B_{j+1}\le U_{j+1}$, and \eqref{eq:hier-U} follows.  On this event
$B_0\le U_0$; combine with the conditional EM tail \eqref{eq:em-tail}.
\end{proof}

This hierarchy can be strictly better or worse than one-level calibration.
It helps only when the local upper levels and final cap make $U_0$ smaller
than \eqref{eq:one-level-upper}.  
Each level replaces its global sensitivity by a private high probability upper bound on its own local sensitivity, consuming privacy and adding up failure probability.
The theorem is therefore a reusable interface, not a claim that higher order always means lower noise.

\section{Adaptive certificate-search PTR selection}
\label{sec:adaptive-ptr}

The simplest variant of PTR is supplied with a public $b$ (the propose phase), followed by a test if the noisy distance to $\{z: \LS_q(z) > b\}$, i.e., the distance to the highly sensitive zone (or unsafe zone where local sensitivities of datasets are greater than $b$), is safe for a release of $Q_x^b$.
In the unsafe case, PTR abstains from answering.
Incorporating PTR into EM is straightforward, but it has two major issues: we need to provide a public $b$ or privately estimate it, which consumes extra privacy, and the noisy distance described above is not always tractable.
Setting a too low value for $b$ may make PTR refuse to answer often, while setting a too high value for $b$ imposes a high temperature which increases regret bounds.
In this section, we adapt the ideas from~\cite{LiNguyenVullikanti2024} to address both issues.
We privately search over a public grid of values for a good $b$, and we replace the distance to the highly sensitive zone by a tractable $1$-Lipschitz lower certificate.
Each application might require a distinct lower certificate design, which depends on the nature of the problem and its settings; its computational efficiency is not universal.

Assume a finite public global cap $b_J$ on $\GS_q$ is known.  Let
$J\ge2$ and $0<b_1<\cdots<b_J$ be public, with $b_J\ge\GS_q$.  Define
\[
 \mathcal H_i=\{z:\LS_q(z)>b_i\},\qquad
 D_i(x)=d(x,\mathcal H_i).
\]
Distance to a set is 1-Lipschitz wherever finite.  We do not require exact
distance.

\begin{definition}[Finite lower certificate]\label{def:certificate}
A certificate family consists of finite, nonnegative, publicly specified
functions $\phi_i:\X\to\mathbb R$ satisfying
\begin{equation}\label{eq:certificate}
 0\le\phi_i(x)\le D_i(x);\quad
 |\phi_i(x)-\phi_i(x')|\le1\quad(x\sim x').
\end{equation}
For the utility theorem, also require
$\phi_1(x)\le\cdots\le\phi_J(x)$ for every $x$.
\end{definition}

The certificate $\phi$ may underestimate the true distance to the unsafe zone, but never overestimate it.
The utility depends on the monotonicity of $\phi$ (w.r.t. $i$), while privacy does not.
A distance truncated at a finite public threshold is a valid $1$-Lipschitz certificate, even when the distance $D_i (x) = \infty$.
The terminal bound $b_J$ is a safe upper bound, and when all others fail, we can use $b_J$ without evaluating its certificate $\phi_J$.
Since the bad sets are nested, i.e., $\mathcal \mathcal H_{i+1} \subseteq H_{i}$, the exact distances are nondecreasing in terms of $i$.


\anil{this is the description of algorithm Adaptive certificate-search, so would be useful to name}

Fix $\varepsilon_t,\varepsilon_s>0$ and $0<\eta<1/2$.  Let
\begin{equation}\label{eq:search-parameters}
 L=\lceil\log_2J\rceil,\qquad
 s=\frac L{\varepsilon_t},\qquad
 \tau=s\log\frac L{2\eta}.
\end{equation}

\textbf{Algorithm.} After defining the certificate family, we then run a private lower-bound binary search for the lowest index that passes the noisy test $\phi_i(x)+\Lap(s)>\tau$ and use its bound $b_i$ to calibrate the Gibbs law for EM.
We start with $\ell = 1, h = J$.
Let $i = \lfloor (\ell + h) / 2\rfloor$, $W\sim\Lap(s)$, test $\phi_i(x)+W>\tau$. 
If the test is true, we say that it passes the test and reduce the high end $h=i$.
If the test is false, we increase the low end $\ell = i + 1$.
We repeat the test until $\ell \geq h$.
If all $i<J$ fail, we use $b_J$ as a fallback without having to test it. 
Since $b_J \geq \GS_q$, the privacy follows the guarantee of EM with global sensitivity.


\begin{restatable}[Adaptive certificate-search privacy]{theorem}{thmSearchPrivacy}
\label{thm:search-privacy}
The selected scale $B$ is pure $\varepsilon_t$-DP and satisfies
\begin{equation}\label{eq:search-coverage}
 \Prb[B<\LS_q(x)]\le\eta.
\end{equation}
Consequently, the joint output $(B,R)$, and hence $R$ alone, is
$(\varepsilon_t+\varepsilon_s,\eta)$-DP.
\end{restatable}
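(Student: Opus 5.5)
The proof has three parts: privacy of the search, the coverage bound \eqref{eq:search-coverage}, and then Theorem~\ref{thm:private-scale-em}.

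\textbf{Privacy of the search.} The binary search over $\{1,\dots,J\}$ makes at most $L=\lceil\log_2 J\rceil$ noisy tests. Each test releases one bit that is post-processing of $\phi_i(x)+W$ with $W\sim\Lap(s)$. By \eqref{eq:certificate}, $\phi_i$ is $1$-Lipschitz on edges, so each test is $(1/s)$-DP, that is, $(\varepsilon_t/L)$-DP. The index tested next is a deterministic function of the earlier answers, so adaptive composition over at most $L$ rounds gives pure $\varepsilon_t$-DP for the transcript. If the search ends before $L$ rounds, I pad it with dummy tests, so the number of rounds never depends on $x$. The output $B=b_{i^*}$ is a deterministic function of the transcript, with the untested fallback $b_J$ used when all $i<J$ fail. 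Hence $B$ is $\varepsilon_t$-DP by post-processing.

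\textbf{Coverage.} The event $B<\LS_q(x)$ requires that the output index $i^*$ has $b_{i^*}<\LS_q(x)$. The fallback $b_J\ge\GS_q\ge\LS_q(x)$ is always safe, so an unsafe output can only arise through a passed test at some index $i$ with $x\in\mathcal H_i$. For such an index, $D_i(x)=0$, so $\phi_i(x)=0$ by \eqref{eq:certificate}. The test passes only if $W>\tau=s\log\frac{L}{2\eta}$, which has probability $\tfrac12\e^{-\tau/s}=\eta/L$.

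At most $L$ tests are performed, and each uses fresh noise. A union bound over the tested indices therefore bounds the probability that any unsafe index passes by $L\cdot\eta/L=\eta$. On the complementary event, every passed index is safe. The returned index is either a passed index or $b_J$, so $B\ge\LS_q(x)$ there. This proves \eqref{eq:search-coverage}.

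The main subtlety is that the tested indices are random and adaptive. I handle it by working on the coupling that draws, in advance, an independent noise variable for each of the $L$ rounds. The bad event is contained in the event that some round's noise exceeds $\tau$ while that round tests an unsafe index. For each round, conditional on the past, this has probability at most $\eta/L$. This justifies the union bound without needing monotonicity of $\phi$, consistent with the remark that only utility requires monotonicity.

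\textbf{Conclusion.} The search kernel $K_x$ is supported on $\{b_1,\dots,b_J\}\subset[b_1,\infty)$, with public floor $b_0=b_1$. It is therefore an $(\varepsilon_t,0;\eta)$ private upper bound for $\LS_q$ in the sense of Definition~\ref{def:private-upper}. Theorem~\ref{thm:private-scale-em} in its pure case gives $(\varepsilon_t+\varepsilon_s,\eta)$-DP for $(B,R)$. The same guarantee for $R$ alone follows by post-processing.
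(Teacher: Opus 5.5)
Your proposal is correct and follows essentially the same route as the paper: each noisy certificate test is $\varepsilon_t/L$-DP and adaptive composition covers the at most $L$ tests; an unsafe tested index has $\phi_i(x)=0$, so it falsely passes with probability $\eta/L$, and a union bound over at most $L$ rounds gives coverage; the pure case of the private-upper-scale argument then finishes. The only differences are cosmetic: the paper redoes the joint $(B,R)$ computation inline by splitting off the bad-scale mass instead of citing Theorem~\ref{thm:private-scale-em} with floor $b_0=b_1$, and you handle adaptivity more explicitly through padding and pre-drawn per-round noise.
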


\begin{proof}
The search makes at most $L$ adaptive tests.  Each noisy certificate is
$\varepsilon_t/L$-DP, so adaptive basic composition and post-processing give
pure $\varepsilon_t$-DP for $B$.

Every returned index below $J$ was explicitly passed.  If its scale is
unsafe, $b_i<\LS_q(x)$, then $x\in\mathcal H_i$, so
$D_i(x)=\phi_i(x)=0$.  Since $\tau\ge0$, the conditional false-pass
probability at that step is
\[
 \Prb[W>\tau]=\tfrac12\e^{-\tau/s}=\eta/L.
\]
A union bound over at most $L$ adaptive positions proves
\eqref{eq:search-coverage}.  The untested fallback is safe.

For completeness, fix an ordered edge $x\sim x'$ and a joint event $A$.
Write $A_i=\{r:(b_i,r)\in A\}$ and
$K_x(i)=\Prb_x[B=b_i]$.  Splitting off the bad scale mass and applying
edgewise EM gives
\begin{align*}
 \Prb_x[(B,R)\in A]
 &\le\eta+
 \sum_{i:b_i\ge\LS_q(x)}K_x(i)Q_x^{b_i}(A_i)\\
 &\le\eta+\e^{\varepsilon_s}
 \sum_{i:b_i\ge\LS_q(x)}K_x(i)Q_{x'}^{b_i}(A_i)\\
 &\le\eta+\e^{\varepsilon_t+\varepsilon_s}
 \Prb_{x'}[(B,R)\in A].
\end{align*}
The last line is the expectation form of pure DP for the scale kernel,
applied to the bounded function
$i\mapsto\ind\{b_i\ge\LS_q(x)\}Q_{x'}^{b_i}(A_i)$.
Exchanging $x,x'$ proves the reverse ordered direction.
\end{proof}

\textbf{Utility Analysis.} For $0<\zeta<1/2$, put
\begin{equation}\label{eq:search-margin}
 u=s\log\frac L{2\zeta},\quad
 i_+(x)=\min\{i:\phi_i(x)\ge\tau+u\},
\end{equation}
when this set is nonempty.

\begin{proposition}[Certificate-dependent regret]
\label{prop:search-utility}
With a nonempty $\{i:\phi_i(x)\ge\tau+u\}$, if the certificates are nondecreasing in the grid index, then
\[
 \Prb[B\le b_{i_+(x)}]\ge1-\zeta.
\]
Consequently, for every $0<\rho<1$, with probability at least
$1-\zeta-\rho$,
\begin{equation}\label{eq:search-regret}
 \operatorname{Reg}_q(x,R)\le
 \frac{2b_{i_+(x)}}{\varepsilon_s}\log\frac m\rho.
\end{equation}
\end{proposition}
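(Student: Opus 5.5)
Write $i_+=i_+(x)$. The plan is to show that, except with probability $\zeta$, every noisy test the binary search performs at an index $i\ge i_+$ passes. Then the lower-bound binary search ends at an index no larger than $i_+$, which gives $B=b_{\text{returned}}\le b_{i_+}$ because the grid is increasing. The regret claim then follows from the conditional EM tail \eqref{eq:em-tail} and a union bound.

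\textbf{Step 1 (each test at a large index passes).} Fix any index $i\ge i_+$ that the search might test. By monotonicity of the certificates in the grid index, $\phi_i(x)\ge\phi_{i_+}(x)\ge\tau+u$. The test $\phi_i(x)+W>\tau$ can therefore fail only if $W\le -u$. For $W\sim\Lap(s)$, the Laplace tail gives
\[
\Prb[W\le -u]=\tfrac12\e^{-u/s}=\zeta/L .
\]
The search makes at most $L=\lceil\log_2 J\rceil$ adaptive tests, each with fresh noise. A union bound over the at most $L$ test positions, conditioning on the history at each step, shows that with probability at least $1-\zeta$ no test at an index $\ge i_+$ fails. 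Call this event $\mathcal E$. If $i_+=J$, the claim $B\le b_J$ holds trivially, since every returned value, including the fallback, is at most $b_J$. So the interesting case is $i_+<J$.

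\textbf{Step 2 (search invariant, the main obstacle).} On $\mathcal E$ I will show by induction that the window $[\ell,h]$ always satisfies $\ell\le i_+$, or else that the final output is at most $i_+$. Initially $\ell=1\le i_+$. Take a midpoint $i$. If $i\ge i_+$, the test passes on $\mathcal E$, so $h$ becomes $i$ and $\ell$ is unchanged. If $i<i_+$, then either the test passes and $h$ shrinks while staying $\ge \ell$, or the test fails and $\ell$ becomes $i+1\le i_+$. Hence $\ell\le i_+$ throughout. Moreover $h$ only decreases toward passed indices, and when $h<i_+$ we have $h\le i_+$ anyway. At termination $\ell=h$, and the returned index is $h\le i_+$: either $h$ is a passed index, or $h=J$ reached as the untested fallback. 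The fallback case requires $h$ never to have decreased, which would need the test at the first midpoint $\ge i_+$ to fail; on $\mathcal E$ that cannot happen when $i_+<J$.

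The delicate part is matching this invariant to the paper's exact description of the loop: it uses $\lfloor(\ell+h)/2\rfloor$, returns $b_\ell$ when $\ell\ge h$, and never tests $J$. I would also check the edge case in which $h$ is reduced below $i_+$, where the output is still $\le i_+$. Applying monotonicity of $(b_i)$ then gives $\Prb[B\le b_{i_+}]\ge1-\zeta$.

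\textbf{Step 3 (regret).} Conditional on $B=b$, \eqref{eq:em-tail} gives
\[
\operatorname{Reg}_q(x,R)\le\frac{2b}{\varepsilon_s}\log\frac m\rho
\]
with probability at least $1-\rho$. On $\{B\le b_{i_+}\}$ this bound is at most $\frac{2b_{i_+}}{\varepsilon_s}\log\frac m\rho$. Integrating over $B$ and taking a union bound with the complement of $\mathcal E$ yields \eqref{eq:search-regret} with probability at least $1-\zeta-\rho$.
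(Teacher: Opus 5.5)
Your proof is correct and follows the paper's argument. Monotonicity and the Laplace tail give a failure probability of $\zeta/L$ for each query at an index $\ge i_+(x)$; a history-conditioned union bound over at most $L$ tests, the fact that the lower endpoint never exceeds $i_+(x)$, and the conditional EM tail with a second union bound finish it. The edge cases you flag are already handled by your invariant $\ell\le i_+(x)$ together with $\ell=h$ at termination.
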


\begin{proof}
Conditional on any search history that queries an index $i\ge i_+(x)$,
monotonicity gives $\phi_i(x)\ge\tau+u$.  The fresh noise makes that query
fail with probability at most
$\tfrac12\e^{-u/s}=\zeta/L$.  If no such failure occurs, binary search cannot
move its lower endpoint above $i_+(x)$, so its output is at most that index.
A union bound proves the scale statement.  Conditional on any
$B\le b_{i_+(x)}$, the EM tail \eqref{eq:em-tail} at the larger scale
$b_{i_+(x)}$ is at most $\rho$; another union bound gives
\eqref{eq:search-regret}.

\end{proof}
\section{Smooth Sensitivity}
\label{sec:smooth}

In previous sections, we replace the public scale $b$ by privacy-protected functions, hoping that their values are not as large as the public value $\GS_q$ while keeping the original utility score $q(x,r)$.
In this section, we explore a different variant of local sensitivity, namely the smooth sensitivity.
The smooth sensitivity is an upper bound on local sensitivity at every dataset $x$, together with a constraint of its smoothness: the ratio of its values on $x\sim x'$ is upper bounded by $e^\beta$.
Section~\ref{sec:failure} confirms that straightforward application of smooth sensitivity does not provide a privacy guarantee.
In this section, we explore several different approaches.
The first and second methods (Section~\ref{sec:smooth-dampening} and Section~\ref{sec:log-transform}) utilize smooth sensitivity to transform the original utility score to ones with bounded or controllable global sensitivity.
The third and fourth methods (Section~\ref{sec:smooth-others}) utilize the smoothness factor, but in different ways: scaling the smoothness by privacy and range statistics, or privatizing its logarithmic function by the smoothness.
The first two methods show some advantages, by providing range-independent pure DP while retaining the maximizing candidates.

The first method is based on the local dampening framework.
Appendix~\ref{app:smooth} presents the rigorous analysis and further discussions of the adapted framework.
The dampening framework provides the privacy template but still requires an admissible envelope and a utility translation, which we use a geometric transformation of the smooth sensitivity to design.

The separate, independent method combines smooth sensitivity and the original utility score in the form of a logarithmic transformation to generate a new score.
The global sensitivity of the new score can be parameterized by changing some constant factors of the transformation, allowing us to achieve a desired public regret bound.

\subsection{A geometric transformation of smooth sensitivity}
\label{sec:smooth-dampening}

\textbf{The dampening framework.} We adapt the local dampening framework~\cite{FariasEtAl2023}.
They proposed an admissible sensitivity envelope (Definition~\ref{def:envelope}) that any satisfied functions can be applied to produce pure-DP by the same analysis.
Some existing admissible envelopes use candidate-dependent functions, that may change the rankings of candidates.
Also, \cite{FariasEtAl2023} do not provide a rigorous quantifiable instance-specific regret bound (except an instance-agnostic one that is not worse than the bound of the EM with the global sensitivity).
We use the smooth sensitivity to design our envelopes, that are independent of candidates, hence preserve the maximizers $\argmax_{r\in \R} q(\cdot,r)$.
We propose a geometric smooth envelope and provide rigorous regret bounds in terms of the original score $q(x,r)$.
We note that the smooth sensitivity alone does not satisfy the admissible envelope properties, as an edge-varying (might change across some $x\sim x'$), time-independent (no dependency on $t$ below) smooth sensitivity with $\beta > 0$ violates the second condition of Definition~\ref{def:envelope}.

\begin{definition}[Admissible sensitivity envelope]\label{def:envelope}
For each $x\in\X$, $r\in\R$, and integer $t\ge0$, let
$\gamma_x(t,r)>0$.  The family $\gamma$ is \emph{admissible} for a score
$u$ if
\begin{align}
 \gamma_x(0,r)&\ge \LS_u(x,r),
 \label{eq:envelope-zero}\\
 \gamma_x(t+1,r)&\ge\gamma_{x'}(t,r)
 &&(x\sim x',\ t\ge0),
 \label{eq:envelope-shift}
\end{align}
and $\sum_{t\ge0}\gamma_x(t,r)=+\infty$ for every $(x,r)$.
\end{definition}

For every integer $k$, define the signed cumulative breakpoints
\begin{equation}\label{eq:breakpoints}
 b_x(k,r)=
 \begin{cases}
  \sum_{j=0}^{k-1}\gamma_x(j,r),&k>0,\\
  0,&k=0,\\
  -\sum_{j=0}^{-k-1}\gamma_x(j,r),&k<0.
 \end{cases}
\end{equation}
Since $b_x(k,r)$ is only defined at integers $k$, we use piecewise-linear interpolation through
$(k,b_x(k,r))_{k\in\mathbb Z}$ to define $B_x(K,r)$ that takes any real number $K$.
Given a real number $K$, let $k: k \leq K < k + 1$, $B_x(K,r) = (k+1-K)b_x(k,r) + (K-k) b_x(k+1, r)$.
Positivity ($\gamma_x(t, r) > 0)$ and divergence ($\sum_{t} \gamma_x(t,r) = +\infty$) make the map $B_x(\cdot, r)$ a strictly increasing bijection and invertible.  
Write $T_x(\cdot,r)=B_x(\cdot,r)^{-1}$, i.e., inverse of $B_x: T_x(B, r) = T \iff B_x(T, r) = B$ and set
\begin{equation}\label{eq:dampened-score}
 \widetilde u(x,r)=T_x(u(x,r),r).
\end{equation}

\begin{restatable}[Admissible-envelope dampening
~\cite{FariasEtAl2023}]{theorem}{thmEnvelope}
\label{thm:envelope}
If $\gamma$ is admissible for $u$, then
\[
 |\widetilde u(x,r)-\widetilde u(x',r)|\le1
 \quad(x\sim x',\ r\in\R).
\]
Consequently, sampling with probability proportional to
$\exp\{\varepsilon\widetilde u(x,r)/2\}$ is pure
$\varepsilon$-DP.
\end{restatable}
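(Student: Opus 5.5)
The plan is to show the dampened score is 1-Lipschitz along edges; the DP claim then follows from Lemma~\ref{lem:edge-em} with $b=1$ and $\varepsilon_s=\varepsilon$, since $\omega_{\widetilde u}(x,x')\le 1$ on every edge. Fix an edge $x\sim x'$ and a candidate $r$. Write $u=u(x,r)$, $u'=u(x',r)$, $T=\widetilde u(x,r)$ and $T'=\widetilde u(x',r)$. By symmetry of the edge it suffices to prove $T'\le T+1$. Because $B_{x'}(\cdot,r)$ is strictly increasing, this is equivalent to
\[
u'\le B_{x'}(T+1,r).
\]
We already know $u'\le u+\LS_u(x,r)\le u+\gamma_x(0,r)$ by \eqref{eq:envelope-zero}, and $u=B_x(T,r)$. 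So the goal reduces to the comparison inequality
\[
B_x(T,r)+\gamma_x(0,r)\le B_{x'}(T+1,r)\qquad\text{for all real }T.
\]

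First I would prove this comparison at integers $k$ using the breakpoints \eqref{eq:breakpoints} and the shift property \eqref{eq:envelope-shift}.
\begin{itemize}
\item For $k\ge0$: $b_{x'}(k+1,r)=\gamma_{x'}(0,r)+\sum_{j=1}^{k}\gamma_{x'}(j,r)$, while $b_x(k,r)+\gamma_x(0,r)=\gamma_x(0,r)+\sum_{j=0}^{k-1}\gamma_x(j,r)$. Pairing terms gives $\gamma_{x'}(j+1,r)\ge\gamma_x(j,r)$ for $0\le j\le k-1$, which is \eqref{eq:envelope-shift} applied from $x'$ to $x$ (the edge is undirected). The leftover comparison is $\gamma_{x'}(0,r)$ against the extra $\gamma_x(0,r)$. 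This looks like a mismatch, so I would regroup. Rewrite $b_x(k,r)+\gamma_x(0,r)$ as $\gamma_x(0,r)+\gamma_x(0,r)+\dots$ and compare instead with $b_{x'}(k+1,r)=\sum_{j=0}^{k}\gamma_{x'}(j,r)$ using $\gamma_{x'}(j,r)\ge\gamma_x(j-1,r)$ for $j\ge1$ and $\gamma_{x'}(0,r)>0$. This yields $b_{x'}(k+1,r)\ge b_x(k,r)+\gamma_{x'}(0,r)$, which is not quite the target. The cleaner route is to aim directly for the inequality needed, namely with the oscillation handled on the correct side. Choose the direction so that the oscillation bound uses $\LS_u(x',r)\le\gamma_{x'}(0,r)$: prove $T\le T'+1$ via $u\le u'+\gamma_{x'}(0,r)$ and $b_{x'}(k,r)+\gamma_{x'}(0,r)\le b_x(k+1,r)$. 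This is exactly the telescoped form of \eqref{eq:envelope-shift} plus $b_x(1,r)=\gamma_x(0,r)\ge\gamma_{x'}(0,r)$? That last step fails in general. So I would instead handle the initial segment as follows: when $u'\ge u$, the increment $u'-u$ is bounded by $\gamma_x(0,r)$, which is the first segment length starting at $x$'s zero. The relevant comparison is therefore between $b_x(k+1)-b_x(1)$ and $b_{x'}(k)$, following the Farias et al.\ argument.
\item For $k<0$: I would do the analogous telescoping with negative breakpoints, where \eqref{eq:envelope-shift} gives the reversed inequality on the magnitudes.
\end{itemize}

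Second, I would extend from integers to real $T$ by piecewise linearity. Both sides are affine on $[k,k+1]$, and the integer inequality holds at both endpoints, so it holds on the whole interval. Divergence of the sums ensures $T_x$ is defined everywhere.

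The main obstacle is getting the off-by-one alignment in the first step right. One has to decide which endpoint's $\gamma(0)$ absorbs the score change, and then check the sign cases ($u,u'$ on different sides of $0$, where the integer $k$ straddles $0$). I would first try to prove the equivalent statement "$B_{x'}(\cdot)$ shifted by one dominates $B_x(\cdot)$ plus the oscillation," treating the case that crosses $0$ separately by direct computation.
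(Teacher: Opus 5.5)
Your skeleton (compare breakpoints at integers, interpolate affinely, invert the increasing map, then apply Lemma~\ref{lem:edge-em} with $b=1$) matches the paper's. However, the decisive step is left unresolved, and the reduction you actually commit to is false.

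The inequality $B_x(T,r)+\gamma_x(0,r)\le B_{x'}(T+1,r)$ already fails at $T=0$ whenever $\gamma_x(0,r)>\gamma_{x'}(0,r)$, since there it reads $\gamma_x(0,r)\le\gamma_{x'}(0,r)$. Your regrouped bound $b_{x'}(k+1,r)\ge b_x(k,r)+\gamma_{x'}(0,r)$ for $k\ge0$ is correct, and it is exactly what is needed on that side. It is not ``off target'': $u'-u\le\LS_u(x',r)\le\gamma_{x'}(0,r)$ holds just as well as the bound through $x$. But with the constant $\gamma_{x'}(0,r)$ the comparison breaks at $k=-1$, where it requires $\gamma_{x'}(0,r)\le\gamma_x(0,r)$.

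Splitting by the sign of $T$ does not save a fixed choice of endpoint either. On the segment $T\in[-1,0]$, interpolation needs the integer inequality at both $k=-1$ and $k=0$ with one constant $c$. These two endpoints demand $c\le\gamma_x(0,r)$ and $c\le\gamma_{x'}(0,r)$ respectively. Your closing appeal to the Farias et al.\ argument and to a ``direct computation'' at the zero crossing leaves exactly this point open.

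The missing idea is that the score change on the edge is bounded by \emph{both} zeroth envelope values at once:
$|u(x,r)-u(x',r)|\le\min\{\LS_u(x,r),\LS_u(x',r)\}\le d_0:=\min\{\gamma_x(0,r),\gamma_{x'}(0,r)\}$.
So you never need to choose an endpoint. With $d_0$, the comparison $b_x(k,r)+d_0\le b_{x'}(k+1,r)$ holds at every integer $k$:
\begin{itemize}
\item For $k\ge0$, use $d_0\le\gamma_{x'}(0,r)$ together with $\gamma_x(j,r)\le\gamma_{x'}(j+1,r)$.
\item For $k=-s\le-1$, the claim is $\sum_{j=0}^{s-2}\gamma_{x'}(j,r)+d_0\le\sum_{j=0}^{s-1}\gamma_x(j,r)$ (empty left sum when $s=1$). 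This follows from $d_0\le\gamma_x(0,r)$ and $\gamma_{x'}(j,r)\le\gamma_x(j+1,r)$.
\end{itemize}
Interpolation then gives $B_x(T,r)+d_0\le B_{x'}(T+1,r)$ for all real $T$. Hence $u'\le u+d_0=B_x(T,r)+d_0\le B_{x'}(T+1,r)$, so $T'\le T+1$, and exchanging $x,x'$ finishes the proof. This is precisely the paper's argument, which states the same comparison in the mirrored form $B(k)-d_0\ge B'(k-1)$.
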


\textbf{The geometric transformation of smooth sensitivity.}
Let us revisit the gap of a candidate, i.e., the distance to the maximizers' utility:

\begin{equation}\label{eq:centered-gap}
 g(x,r)=q^*(x)-q(x,r)\ge0;\;
v(x,r)=-g(x,r)
\end{equation}
Centering by $q^*(x)$ does not change a fixed-dataset Gibbs distribution.
It does change sensitivity, and the following factor two is necessary:
\begin{equation}\label{eq:gap-ls}
 \LS_g(x):=\max_r\sup_{x'\sim x}|g(x,r)-g(x',r)|
 \le2\LS_q(x).
\end{equation}

The factor of two is due to $|g(x,r)-g(x',r)| = |q^*(x) - q^*(x') + q(x', r) - q(x,r)| \leq |q^*(x) - q^*(x') | + |q(x, r) - q(x',r)| \leq 2LS_q(x)$.

Let $S$ be any finite $\beta$-smooth upper bound on $\LS_q$:
\begin{equation}\label{eq:S-smooth}
 S(x)\ge\LS_q(x),\qquad
 S(x')\le\e^\beta S(x)\quad(x\sim x').
\end{equation}
The minimal choice is $S_{\LS_q,\beta}$ from
\eqref{eq:minimal-smooth}.  Fix a public $s_0>0$ and write
$S_0(x)=\max\{S(x),s_0\}$; this maximum preserves
\eqref{eq:S-smooth}.  Define the candidate-independent geometric envelope
\begin{equation}\label{eq:geometric-envelope}
 \gamma_x(t,r)=2\e^{\beta t}S_0(x),\qquad t=0,1,\ldots.
\end{equation}
The factor two is needed to upper bound $LS_g(x)$. 
For $z\ge0$, let $C_x$ be the increasing piecewise-linear function whose
integer values are
\begin{equation}\label{eq:C-breakpoints}
 C_x(0)=0,\quad
 C_x(k)=2S_0(x)\sum_{j=0}^{k-1}\e^{\beta j}\quad(k\ge1),
\end{equation}
and let $T_x^+=C_x^{-1}$ on $[0,\infty)$.  The transformed score is
\begin{equation}\label{eq:smooth-score}
 h(x,r)=-T_x^+(g(x,r)).
\end{equation}

By the result of Theorem~\ref{thm:envelope}, $GS_h \leq 1$ that makes EM using the score $h(\cdot, r)$ and inverse temperature $\eps/2$ a valid private mechanism.
The intuition behind this transformation is that each $C_x(k)$ for an integer $k$ is viewed as the start of a new step in a ladder.
Function $T_x^+$ therefore picks a (fractional) step given a height at the value of $g(x,r)$.
Moving from $x$ to $x'$ changes $S_0$ by at most a factor of $e^\beta$.
Therefore, the new ladder is shifted by at most one step above or one step behind.
Since the smallest width of a step, i.e., the distance to the next step, is at least $2S_0 \geq LS_g$,
moving $g(x,r)$ to $g(x',r)$ changes at most $LS_g(x) \leq 2S_0$ in the height, which is at most one step width of $x$ ladder.
It follows that $g(x',r)$'s step in $x'$ ladder is at most $1$ unit away from $g(x,r)$'s step in $x$ ladder.


\begin{theorem}[Smooth-dampened exponential selection]
\label{thm:smooth-dampening}
Sample $R$ with probability proportional to
$\exp\{\varepsilon h(x,r)/2\}$.  This mechanism is pure
$\varepsilon$-DP and $h(x,\cdot)$ has exactly the same maximizers as
$q(x,\cdot)$.  For every $0<\rho<1$, with probability at least $1-\rho$,
\begin{equation}\label{eq:smooth-damp-regret}
 \operatorname{Reg}_q(x,R)\le C_x(K),
 \qquad K=\frac2\varepsilon\log\frac m\rho.
\end{equation}
In particular,
\begin{equation}\label{eq:smooth-damp-explicit}
 \operatorname{Reg}_q(x,R)\le
 \begin{cases}
  2S_0(x)\dfrac{\e^{\beta\lceil K\rceil}-1}{\e^\beta-1},
      &\beta>0,\\[2mm]
  2S_0(x)\lceil K\rceil,&\beta=0.
 \end{cases}
\end{equation}
\end{theorem}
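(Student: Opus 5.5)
The proof has three parts: privacy, preservation of maximizers, and the regret bound. Privacy comes from Theorem~\ref{thm:envelope} applied to the centered score $v(x,r)=-g(x,r)$.

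\textbf{Privacy.} First I check that the geometric family \eqref{eq:geometric-envelope} is admissible for $v$ in the sense of Definition~\ref{def:envelope}.
\begin{itemize}
\item Positivity holds because $S_0\ge s_0>0$.
\item Divergence holds because $\sum_t 2\e^{\beta t}S_0(x)=\infty$ for every $\beta\ge0$.
\item Condition \eqref{eq:envelope-zero} holds because $\gamma_x(0,r)=2S_0(x)\ge 2\LS_q(x)\ge\LS_g(x)=\LS_v(x)\ge\LS_v(x,r)$, using \eqref{eq:gap-ls}.
\item For the shift condition \eqref{eq:envelope-shift}, smoothness of $S_0$ along the edge gives $S_0(x')\le\e^\beta S_0(x)$. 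Hence $\gamma_{x'}(t,r)=2\e^{\beta t}S_0(x')\le 2\e^{\beta(t+1)}S_0(x)=\gamma_x(t+1,r)$.
\end{itemize}
Next I identify the dampened score with $h$. Since $v\le0$, only the negative breakpoints $b_x(-k,r)=-C_x(k)$ matter. By oddness of the interpolation, $B_x(-K,r)=-C_x(K)$ for $K\ge0$, so $T_x(v(x,r),r)=-T_x^+(g(x,r))=h(x,r)$. Theorem~\ref{thm:envelope} then gives $|h(x,r)-h(x',r)|\le1$, and sampling proportional to $\exp\{\varepsilon h/2\}$ is pure $\varepsilon$-DP. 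The main care point is this sign bookkeeping between $B_x$ on the negative axis and $C_x$.

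\textbf{Maximizers.} $T_x^+$ is a strictly increasing bijection with $T_x^+(0)=0$. Therefore $h(x,r)=0$ exactly when $g(x,r)=0$, and $h(x,r)<0$ otherwise. So $h(x,\cdot)$ has the same maximizers as $q(x,\cdot)$, and its maximum value is $0$.

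\textbf{Regret.} Apply the EM tail \eqref{eq:em-tail} to the score $h$, whose maximum is $0$ and whose scale is $b=1$ with budget $\varepsilon$. With probability at least $1-\rho$,
\[
-h(x,R)=T_x^+(g(x,R))\le \tfrac{2}{\varepsilon}\log\tfrac m\rho=K.
\]
Applying the increasing map $C_x$ gives $\operatorname{Reg}_q(x,R)=g(x,R)\le C_x(K)$, which is \eqref{eq:smooth-damp-regret}.

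For the explicit form \eqref{eq:smooth-damp-explicit}, monotonicity gives $C_x(K)\le C_x(\lceil K\rceil)$. Summing the geometric series, $C_x(\lceil K\rceil)=2S_0(x)\frac{\e^{\beta\lceil K\rceil}-1}{\e^\beta-1}$ when $\beta>0$, and $C_x(\lceil K\rceil)=2S_0(x)\lceil K\rceil$ when $\beta=0$.
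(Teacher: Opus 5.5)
Your proof is correct and follows the paper's own argument. You check admissibility of the geometric envelope for $v=-g$ via \eqref{eq:gap-ls} and the smoothness of $S_0$, invoke Theorem~\ref{thm:envelope}, use strict monotonicity of $T_x^+$ for the maximizers, and apply the EM tail to $h$ before pulling back through $C_x$. You also make explicit the positivity and divergence checks and the identification $T_x(v(x,r),r)=h(x,r)$ via the odd symmetry of the breakpoints, which the paper leaves implicit.
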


\begin{proof}
Equation~\eqref{eq:gap-ls} makes the zeroth envelope value at least the
local sensitivity of $v=-g$.  For an edge $x\sim x'$,
\[
 \gamma_x(t+1,r)=2\e^{\beta(t+1)}S_0(x)
 \ge2\e^{\beta t}S_0(x')=\gamma_{x'}(t,r).
\]
Thus the envelope is admissible, and Theorem~\ref{thm:envelope} proves pure
privacy.  Because the same strictly increasing map is used for every
candidate at a fixed dataset, $h=-T_x^+(g)$ is maximized exactly when
$g=0$, so it preserves all maximizers.

At least one transformed score is zero and every transformed score is
nonpositive.  The standard EM tail gives
$\Prb[h(x,R)<-K]\le m\e^{-\varepsilon K/2}=\rho$.  On the complementary
event, monotonicity of $C_x$ gives $g(x,R)\le C_x(K)$.
Equation~\eqref{eq:smooth-damp-explicit} follows by rounding $K$ upward and
summing the geometric series.  The second line is its limit at $\beta=0$.
\end{proof}



\subsection{A logarithmic co-transformation of smooth sensitivity and utility score}
\label{sec:log-transform}

The local dampening framework transforms the utility score to a new score with unit sensitivity.
Inspired by the idea, we work on combining both smooth sensitivity and the utility score to generate a new score with controlled global sensitivity, without the admissible envelope.
The result is a logarithmic co-transformation as defined below.

Suppose $S_g>0$ is a finite
$\beta$-smooth upper bound on $\LS_g$; one may take $S_g=2S_0$.  For a
public tuning constant $c>0$, define
\begin{equation}\label{eq:log-score}
 h_c(x,r)=-c\log\!\left(1+\frac{g(x,r)}{cS_g(x)}\right).
\end{equation}

The co-transformation utility score has two significant advantages.
The first is its global sensitivity is controlled by two factors: the smoothness $\beta$ and the constant $c$, which can be adjusted to achieve desirable regret.
The second is the preservation of maximizers, in which candidates who maximize the original utility score also maximize the transformed score and have the most mass in the output distribution.

\begin{theorem}[Log co-transformation smooth selection]\label{thm:log-dampening}
The score $h_c$ has global sensitivity at most $1+c\beta$.  Therefore
sampling with weights
\begin{equation}\label{eq:log-weights}
 \exp\!\left\{\frac{\varepsilon h_c(x,r)}{2(1+c\beta)}\right\}
\end{equation}
is pure $\varepsilon$-DP and preserves all maximizers of $q$.  For $0 < \rho < 1$, with
probability at least $1-\rho$, its regret is at most
\begin{equation}\label{eq:log-regret}
 cS_g(x)\left[
 \left(\frac m\rho\right)^{2(1+c\beta)/(c\varepsilon)}-1
 \right].
\end{equation}
\end{theorem}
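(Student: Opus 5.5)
The plan has three parts: a direct Lipschitz bound on $h_c$ across an edge, then Lemma~\ref{lem:edge-em} at the public scale $1+c\beta$, then inversion of the logarithm in the standard tail \eqref{eq:em-tail}.

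\textbf{Sensitivity bound.} Fix an edge $x\sim x'$ and a candidate $r$. Write $g=g(x,r)$, $g'=g(x',r)$, $S=S_g(x)$ and $S'=S_g(x')$. We know $|g-g'|\le \LS_g(x)\le S$ and $e^{-\beta}\le S'/S\le e^{\beta}$.

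Split the difference by passing through the intermediate term $c\log(1+g'/(cS))$:
\[
|h_c(x,r)-h_c(x',r)|\le c\left|\log\frac{1+g/(cS)}{1+g'/(cS)}\right| + c\left|\log\frac{1+g'/(cS)}{1+g'/(cS')}\right|.
\]

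For the first term, $\log(1+\cdot)$ has derivative at most $1/(1+\min\{g,g'\}/(cS))\le 1$ in the variable $y/(cS)$. So the term is at most $c\cdot|g-g'|/(cS)\le 1$.

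The second term is the step I expect to be the main obstacle. For fixed $a=g'/c\ge0$, the map $\sigma\mapsto\log(1+a e^{-\sigma})$ with $\sigma=\log S$ has derivative of absolute value $ae^{-\sigma}/(1+ae^{-\sigma})<1$. Hence it is $1$-Lipschitz in $\log S$, and the second term is at most $c|\log S-\log S'|\le c\beta$.

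One check is needed: the first bound used $S=S_g(x)$ rather than $\max\{S,S'\}$ as the denominator. This is fine because $|g-g'|\le \LS_g(x)\le S_g(x)$ by definition of local sensitivity at $x$. Combining the two terms gives $\GS_{h_c}\le 1+c\beta$.

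\textbf{Privacy and maximizers.} With $b=1+c\beta$ public and score $h_c$, the weights \eqref{eq:log-weights} are exactly the Gibbs law \eqref{eq:gibbs} with $\varepsilon_s=\varepsilon$. Since $\omega_{h_c}\le b$ on every edge, Lemma~\ref{lem:edge-em} gives pure $\varepsilon$-DP. For a fixed $x$, the map $g\mapsto -c\log(1+g/(cS_g(x)))$ is strictly decreasing and equals $0$ exactly at $g=0$. So the maximizers of $h_c(x,\cdot)$ are exactly the maximizers of $q(x,\cdot)$.

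\textbf{Regret.} The maximum of $h_c(x,\cdot)$ is $0$ and every value is nonpositive, so \eqref{eq:em-tail} applied to $h_c$ gives, with probability at least $1-\rho$,
\[
h_c(x,R)\ge -\frac{2(1+c\beta)}{\varepsilon}\log\frac m\rho.
\]
Dividing by $c$, exponentiating, and solving for $g(x,R)=\operatorname{Reg}_q(x,R)$ yields
\[
g(x,R)\le cS_g(x)\left[(m/\rho)^{2(1+c\beta)/(c\varepsilon)}-1\right],
\]
which is \eqref{eq:log-regret}.
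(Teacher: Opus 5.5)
Your proposal is correct and follows essentially the same route as the paper. You insert the same intermediate value $c\log(1+g(x',r)/(cS_g(x)))$, bound the $g$-variation by one using $|g-g'|\le S_g(x)$, and bound the $\log S_g$-variation by $c\beta$; you then apply ordinary EM at the public scale $1+c\beta$ and invert the tail \eqref{eq:em-tail}, only spelling out the two derivative bounds a bit more explicitly than the paper does.
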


\begin{proof}
Put $\phi(z,s)=c\log(1+z/(cs))$.  For $s>0 , z\ge0$,
\[
 |\partial_z\phi(z,s)|\le1/s,
 \qquad |\partial_{\log s}\phi(z,s)|\le c.
\]
Across $x\sim x'$, insert the intermediate value
$\phi(g(x',r),S_g(x))$.  The first derivative bound and
$|g(x,r)-g(x',r)|\le S_g(x)$ cost at most one.  The second bound and
$|\log S_g(x)-\log S_g(x')|\le\beta$ cost at most $c\beta$.
Thus $\GS_{h_c}\le1+c\beta$, and ordinary EM proves privacy.
Strict monotonicity in $g$ preserves maximizers.  Applying the EM tail to
the transformed regret $-h_c$ and solving
\eqref{eq:log-score} for $g$ gives \eqref{eq:log-regret}.
\end{proof}

The constant $c$ controls both the score compression and the temperature of the Gibbs law.
A very small $c$ reduces the difference between a good candidate and a bad candidate strongly, which can give bad regret.
However, a large $c$ when $\beta$ is also large can set a high temperature for the Gibbs law, due to an increasing in $\GS_h$, which is also bad for the regret.
In practice, a desired $c$ can be chosen by optimizing the regret bound stated in \eqref{eq:log-regret} over a grid of $c$.


\subsection{Other approaches for smooth sensitivity}
\label{sec:smooth-others}

Furthermore, we provide two other methods in which smooth sensitivity can be used for temperature calibration, instead of utility transformations as above.
The first method scales $\beta$ and temperature coefficient $\lambda$ (in Equation~\eqref{eq:direct-gibbs}) as a function of $m, \eps, \delta$.
It does not violate the impossibility result presented in Section~\ref{sec:failure} that requires fixed $\beta$ and $\lambda$.
The second method privatizes smooth sensitivity, followed by a shifting trick as in Section~\ref{sec:private-scale}.

\textbf{Direct smooth sensitivity calibration.}
Let $S_g>0$ be a finite $\beta$-smooth upper bound on $\LS_g$.  For a
public coefficient $\lambda>0$, define the normalized-gap Gibbs law
\begin{equation}\label{eq:direct-gibbs}
 p_x(r)=\frac{\exp\{-\lambda g(x,r)/S_g(x)\}}
 {\sum_{a\in\R}\exp\{-\lambda g(x,a)/S_g(x)\}}.
\end{equation}

\begin{restatable}[Direct finite-range smooth Gibbs]{theorem}{thmDirectSmooth}
\label{thm:direct-smooth}
For $T\ge0$, put
\begin{equation}\label{eq:direct-parameters}
 C_\beta(T)=\e^\beta+(\e^\beta-1)T,
 \qquad a_T=m\e^{-\lambda T}.
\end{equation}
The law \eqref{eq:direct-gibbs} is
\begin{equation}\label{eq:direct-general-dp}
 \left(2\lambda C_\beta(T),
 (1+\e^{\lambda C_\beta(T)})a_T\right)\text{-DP}.
\end{equation}
In particular, for target $\varepsilon>0$ and $0<\delta<1$, let
\begin{equation}\label{eq:direct-calibration}
 \ell=\log\frac{m(1+\e^{\varepsilon/2})}{\delta},
 \beta\le\log\!\left(1+\frac{\varepsilon}{4\ell}\right),
 \lambda=\frac{\varepsilon}{4\e^\beta}.
\end{equation}
Then \eqref{eq:direct-gibbs} is $(\varepsilon,\delta)$-DP.    
\end{restatable}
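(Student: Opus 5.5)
Throughout, fix an ordered edge $x\sim x'$ and write $s=S_g(x)$, $s'=S_g(x')$, $g_r=g(x,r)$ and $g'_r=g(x',r)$. Two facts do all the work. First, $|g_r-g'_r|\le \LS_g(x)\le s$. Second, $e^{-\beta}\le s'/s\le e^{\beta}$. The plan is a standard split of the candidates into a ``good'' set, where the likelihood ratio is controlled pointwise, and a ``bad'' set of small probability, charged to $\delta$. Put
\[
 G=\{r: g_r\le Ts\}.
\]

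\textbf{Pointwise ratio on $G$.} For $r\in G$ I will bound the exponent difference:
\[
 \Bigl|\frac{g'_r}{s'}-\frac{g_r}{s}\Bigr|
 \le \frac{|g'_r-g_r|}{s'}+g_r\Bigl|\frac1{s'}-\frac1s\Bigr|
 \le e^\beta+(e^\beta-1)T=C_\beta(T),
\]
using $s/s'\le e^\beta$ in both terms. So the numerators satisfy $e^{-\lambda g_r/s}\le e^{\lambda C_\beta(T)}e^{-\lambda g'_r/s'}$ on $G$.

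\textbf{Partition functions.} Write $Z=\sum_a e^{-\lambda g_a/s}$ and $Z'$ for the analogue at $x'$. Because $Z\ge1$ (a maximizer has gap zero), $1/Z\le 1$ is automatic. I need $Z'\le e^{\lambda C_\beta(T)}Z$ up to an additive slack. For this I split the candidates by $G'=\{a:g'_a\le Ts'\}$. On $G'$, the symmetric pointwise bound (swapping the roles of $x$ and $x'$; the same argument works since $\LS_g(x')\le s'$ also bounds the oscillation) gives termwise domination. Off $G'$, each term is at most $e^{-\lambda T}$, so the total is at most $me^{-\lambda T}=a_T$. Hence
\[
 Z'\le e^{\lambda C_\beta(T)}Z+a_T\le e^{\lambda C_\beta(T)}Z(1+a_T),
\]
using $Z\ge1$.

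\textbf{Assembling the bound.} For any event $A$, bound the contribution of $A\setminus G$ by $p_x(G^c)\le \sum_{r\notin G}e^{-\lambda T}/Z\le a_T$. On $A\cap G$ I combine the two estimates:
\[
 p_x(r)\le e^{\lambda C}\,\frac{e^{-\lambda g'_r/s'}}{Z}
 \le e^{2\lambda C}p_{x'}(r)+e^{\lambda C}\,\frac{e^{-\lambda g'_r/s'}\,a_T}{Z\,Z'}.
\]
The last inequality comes from $1/Z\le e^{\lambda C}/Z'+(\text{correction})$, which I obtain from the partition-function step by writing $1/Z - e^{\lambda C}/Z'= (Z'-e^{\lambda C}Z)/(ZZ')$. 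Summing the correction term over $r$ gives at most $e^{\lambda C}a_T$. The total slack is therefore $(1+e^{\lambda C_\beta(T)})a_T$, which is exactly \eqref{eq:direct-general-dp}.

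\textbf{Calibration.} Choose $T=\ell/\lambda$, so $a_T=\delta/(1+e^{\varepsilon/2})$. The constraint $\beta\le\log(1+\varepsilon/(4\ell))$ gives $(e^\beta-1)T\lambda\le\varepsilon/4$. Together with $\lambda=\varepsilon/(4e^\beta)$ this yields $\lambda C_\beta(T)\le\varepsilon/4+\varepsilon/4=\varepsilon/2$. Then $2\lambda C_\beta(T)\le\varepsilon$ and the slack is at most $(1+e^{\varepsilon/2})a_T=\delta$.

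\textbf{Main obstacle.} The delicate part is the partition-function comparison: the ratio $Z'/Z$ cannot be bounded purely multiplicatively, because of high-gap candidates whose relative weight changes unboundedly (the mechanism behind Theorem~\ref{thm:multiplicity}). The additive $a_T$ term, made harmless by $Z,Z'\ge1$, is what absorbs it, and getting the exact constant $(1+e^{\lambda C})$ depends on careful bookkeeping of which direction the partition-function correction is needed.
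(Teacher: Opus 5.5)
Your proposal is correct and follows essentially the same argument as the paper. You use the same good set $\{r: g(x,r)\le T\,S_g(x)\}$, the same pointwise exponent bound $C_\beta(T)$, the same normalizer comparison $Z_{x'}\le e^{\lambda C_\beta(T)}Z_x+a_T$ (obtained by splitting at $g(x',r)\le T\,S_g(x')$), and the same calibration $T=\ell/\lambda$. Your bookkeeping through $1/Z-e^{\lambda C}/Z'=(Z'-e^{\lambda C}Z)/(ZZ')$ is only a rewriting of the paper's use of $Z_x\ge 1$ to turn the additive $a_T$ into the $e^{\lambda C_\beta(T)}a_T$ slack.
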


\begin{proof}
Write $z_x(r)=g(x,r)/S_g(x)$, $w_x(r)=\e^{-\lambda z_x(r)}$, and
$Z_x=\sum_r w_x(r)$.  Since an optimal candidate has gap zero, $Z_x\ge1$.
For an edge $x\sim x'$, smoothness and the two endpoint upper bounds give
\begin{equation}\label{eq:normalized-edge}
 \frac{S_g(x')}{S_g(x)}\in[\e^{-\beta},\e^\beta],
 \qquad
 |g(x,r)-g(x',r)|\le\min\{S_g(x),S_g(x')\}.
\end{equation}
If $z_x(r)\le T$, these constraints imply
\begin{equation}\label{eq:normalized-good}
 |z_x(r)-z_{x'}(r)|\le C_\beta(T).
\end{equation}
For example,
$z_{x'}(r)\le\e^\beta(z_x(r)+1)$ gives the larger signed bound
$(\e^\beta-1)T+\e^\beta$; the reverse sign is no larger.

Split the target normalizer of $x': Z_{x'} = \sum_r \e^{-\lambda z_{x'}(r)}$ at $z_{x'}(r)=T$.  Apply
\eqref{eq:normalized-good} with $z_{x'} \leq T$ (good part); on its bad part ($z_{x'}(r) > T$), each
weight is at most $\e^{-\lambda T}$.  Thus
\begin{equation}\label{eq:normalizer-split}
 Z_{x'}\le\e^{\lambda C_\beta(T)}Z_x+a_T.
\end{equation}
For an event $A$, w.r.t. dataset $x$, candidates with $z_x(r)>T$ have total
probability at most $a_T$.  On the remaining candidates,
\eqref{eq:normalized-good} and \eqref{eq:normalizer-split} give
\begin{align*}
 p_x(A\cap\{z_x\le T\})
 &\le\e^{\lambda C_\beta(T)}
       \frac{\sum_{r\in A}w_{x'}(r)}{Z_x}\\
 &\le\e^{2\lambda C_\beta(T)}p_{x'}(A)
       +\e^{\lambda C_\beta(T)}a_T.
\end{align*}

Adding the bad part $p_x(z_x > T) \le a_T$:
\begin{align*}
    p_x(A) &\leq p_x (A \cap \{z_x \le T\}) + p_x(z_x > T)\\
     &\le\e^{2\lambda C_\beta(T)}p_{x'}(A)
       +\e^{\lambda C_\beta(T)}a_T + a_T\\
       &=\e^{2\lambda C_\beta(T)}p_{x'}(A)
       +(\e^{\lambda C_\beta(T)} + 1)a_T,
\end{align*}
proves \eqref{eq:direct-general-dp}. 
Exchanging $x,x'$ proves the other ordered direction.
Under \eqref{eq:direct-calibration}, take $T=\ell/\lambda$.  Then
$a_T=\delta/(1+\e^{\varepsilon/2})$ and
\[
 \lambda C_\beta(T)=\frac\varepsilon4
       +(\e^\beta-1)\ell\le\frac\varepsilon2.
\]
Substitution into \eqref{eq:direct-general-dp} proves the claim.
\end{proof}

\begin{restatable}[Direct finite-range regret]{theorem}{thmDirectSmoothRegret}
\label{thm:direct-smooth-regret}

    For every dataset $x$, let $R\sim p_x$, where $p_x$ is the
normalized-gap Gibbs law in \eqref{eq:direct-gibbs}.
Then, for every $t\ge 0$,
\begin{equation}
\label{eq:direct-regret-tail}
 \Prb\!\left[\operatorname{Reg}_q(x,R)>t\right]
 \le
 \min\!\left\{
 1,
 m\exp\!\left(-\frac{\lambda t}{S_g(x)}\right)
 \right\}.
\end{equation}
Consequently, for every $0<\rho<1$, with probability at least $1-\rho$,
\begin{equation}
\label{eq:direct-regret}
 \operatorname{Reg}_q(x,R)
 \le
 \frac{S_g(x)}{\lambda}\log\frac{m}{\rho}.
\end{equation}
Under the $(\varepsilon,\delta)$-DP calibration in
\eqref{eq:direct-calibration}, this becomes
\begin{equation}
\label{eq:direct-calibrated-regret}
 \operatorname{Reg}_q(x,R)
 \le
 \frac{4\e^\beta S_g(x)}{\varepsilon}
 \log\frac{m}{\rho}
\end{equation}
with probability at least $1-\rho$.
\end{restatable}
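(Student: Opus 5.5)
The plan is the standard exponential-mechanism tail argument, applied to the normalized-gap law \eqref{eq:direct-gibbs}. First record that $\operatorname{Reg}_q(x,R)=q^*(x)-q(x,R)=g(x,R)$, so the statement is a tail bound on the gap of the sampled candidate. Then lower-bound the normalizer: some maximizer $r^*$ has $g(x,r^*)=0$, so $Z_x=\sum_a \e^{-\lambda g(x,a)/S_g(x)}\ge 1$, exactly as in the proof of Theorem~\ref{thm:direct-smooth}.

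For fixed $t\ge0$, split on the bad set $\{r: g(x,r)>t\}$. Each candidate in it has unnormalized weight at most $\e^{-\lambda t/S_g(x)}$, since $S_g(x)>0$ and $\lambda>0$. The set has at most $m$ elements and $Z_x\ge1$, so
\[
\Prb[g(x,R)>t]\le \frac{m\,\e^{-\lambda t/S_g(x)}}{Z_x}\le m\,\e^{-\lambda t/S_g(x)}.
\]
Since a probability never exceeds $1$, taking the minimum with $1$ gives \eqref{eq:direct-regret-tail}.

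For \eqref{eq:direct-regret}, fix $0<\rho<1$ and set $t=\frac{S_g(x)}{\lambda}\log\frac m\rho$. This $t$ is nonnegative because $m\ge1>\rho$. Substituting into the tail bound makes the right-hand side exactly $\rho$, and taking complements gives the stated bound with probability at least $1-\rho$.

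For \eqref{eq:direct-calibrated-regret}, substitute $\lambda=\varepsilon/(4\e^\beta)$ from \eqref{eq:direct-calibration}, so that $1/\lambda=4\e^\beta/\varepsilon$. The constraint on $\beta$ is not needed for the utility statement; it matters only for privacy.

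There is no real obstacle here. The only point that needs care is using $Z_x\ge1$, which relies on the gap centering $g(x,r^*)=0$, rather than the cruder bound that comes from comparing against the maximum weight. That is the step that makes the constant $m$, with no extra factors, appear in the tail bound.
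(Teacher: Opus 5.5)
Your proposal is correct and follows essentially the same route as the paper's proof: identify regret with the gap, use the gap-zero maximizer to get $Z_x\ge1$, bound the at most $m$ bad-set weights by $\e^{-\lambda t/S_g(x)}$, then choose $t=\frac{S_g(x)}{\lambda}\log\frac m\rho$ and substitute $\lambda=\varepsilon/(4\e^\beta)$. Your remark that the $\beta$ constraint matters only for privacy, not for this utility statement, is also accurate.
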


\begin{proof}
Fix $x$ and write
\[
 Z_x
 =
 \sum_{a\in\R}
 \exp\!\left\{
 -\frac{\lambda g(x,a)}{S_g(x)}
 \right\}.
\]
Because $\R$ is finite and nonempty, some $r^*\in\R$ maximizes
$q(x,\cdot)$.
Hence $g(x,r^*)=0$, so its Gibbs weight is one and therefore $Z_x\ge1$.
Moreover,
\[
 \operatorname{Reg}_q(x,r)
 =
 q^*(x)-q(x,r)
 =
 g(x,r).
\]
Thus, for every $t\ge0$,
\begin{align*}
 \Prb\!\left[\operatorname{Reg}_q(x,R)>t\right]
 &=
 \frac{1}{Z_x}
 \sum_{r:\,g(x,r)>t}
 \exp\!\left\{
 -\frac{\lambda g(x,r)}{S_g(x)}
 \right\} \\
 &\le
 m\exp\!\left\{
 -\frac{\lambda t}{S_g(x)}
 \right\}.
\end{align*}
Combining this inequality with the trivial probability bound $1$ proves
\eqref{eq:direct-regret-tail}.
Setting
\[
 t
 =
 \frac{S_g(x)}{\lambda}\log\frac{m}{\rho}
\]
makes the exponential upper bound equal to $\rho$, proving
\eqref{eq:direct-regret}.
Finally, \eqref{eq:direct-calibration} sets
$\lambda=\varepsilon/(4\e^\beta)$.
Substituting this value into \eqref{eq:direct-regret} gives
\eqref{eq:direct-calibrated-regret}.
\end{proof}

\textbf{Privatized smooth sensitivity.}
Suppose $S>0$ satisfies \eqref{eq:S-smooth}.  Then
$\log S$ has global sensitivity at most $\beta$.  Fix
$\varepsilon_t>0$ and $0<\eta<1/2$.  If $\beta>0$, draw
$Y\sim\Lap(\beta/\varepsilon_t)$; if $\beta=0$, take $Y=0$.  Set
\begin{equation}\label{eq:private-smooth-scale}
 B=\max\!\left\{b_0,
 S(x)\exp\!\left(Y+\frac\beta{\varepsilon_t}
 \log\frac1{2\eta}\right)\right\}.
\end{equation}

\begin{restatable}[Private smooth upper scale]{theorem}{thmPrivateSmooth}
\label{cor:private-smooth}
Releasing $B$ from \eqref{eq:private-smooth-scale} and then sampling
$R\sim Q_x^B$ is $(\varepsilon_t+\varepsilon_s,\eta)$-DP.  For every
$0<\xi<1/2$, with probability at least $1-\xi$,
\begin{equation}\label{eq:private-smooth-upper}
 B\le\max\!\left\{b_0,
 S(x)\exp\!\left[\frac\beta{\varepsilon_t}
 \left(\log\frac1{2\eta}+\log\frac1{2\xi}\right)\right]
 \right\}.
\end{equation}
Combining this event with \eqref{eq:em-tail} gives the corresponding regret
bound with probability at least $1-\xi-\rho$.
\end{restatable}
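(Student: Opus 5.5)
The plan is to reduce the statement to Theorem~\ref{thm:private-scale-em} with $L=\LS_q$. Two facts are needed. First, the scale kernel $K_x$ defined by \eqref{eq:private-smooth-scale} is pure $\varepsilon_t$-DP. Second, it underestimates $\LS_q(x)$ with probability at most $\eta$. The pure-DP case of that theorem then gives $(\varepsilon_t+\varepsilon_s,\eta)$-DP for the joint release $(B,R)$, and post-processing covers $R$ alone. The regret statement is a separate tail computation on $Y$ combined with \eqref{eq:em-tail}.

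\textbf{Privacy of the scale.} I will treat $f(x)=\log S(x)$ as a scalar query. By \eqref{eq:S-smooth} applied in both orientations of each edge, $|\log S(x)-\log S(x')|\le\beta$, so $\GS_f\le\beta$. If $\beta>0$, then $f(x)+Y$ with $Y\sim\Lap(\beta/\varepsilon_t)$ is the Laplace mechanism, hence pure $\varepsilon_t$-DP. The map
\[
 v\mapsto\max\{b_0,\exp(v+(\beta/\varepsilon_t)\log(1/(2\eta)))\}
\]
is public and deterministic, so $B$ is pure $\varepsilon_t$-DP by post-processing. If $\beta=0$, then $S$ is constant along every edge, so $B$ is constant on each connected component. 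The kernel is then identical on neighbors, which is $0$-DP, and hence $\varepsilon_t$-DP. The support lies in $[b_0,\infty)$, as Definition~\ref{def:private-upper} requires.

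\textbf{Coverage.} On the event $Y\ge-(\beta/\varepsilon_t)\log(1/(2\eta))$, the exponent is nonnegative, so $B\ge S(x)\ge\LS_q(x)$. The Laplace lower tail gives
\[
 \Pr\bigl[Y<-(\beta/\varepsilon_t)\log(1/(2\eta))\bigr]=\tfrac12\e^{-\log(1/(2\eta))}=\eta .
\]
When $\beta=0$ the shift is zero and $B\ge S(x)\ge\LS_q(x)$ deterministically. Therefore $K$ is an $(\varepsilon_t,0;\eta)$ private upper bound for $\LS_q$, and Theorem~\ref{thm:private-scale-em} applies. The conditional EM step is valid by Lemma~\ref{lem:edge-em}, since $b\ge\LS_q(x)\ge\omega_q(x,x')$.

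\textbf{Upper bound and regret.} By the upper Laplace tail, $\Pr[Y>(\beta/\varepsilon_t)\log(1/(2\xi))]=\xi$. Outside this event the exponent is at most $(\beta/\varepsilon_t)(\log\frac1{2\eta}+\log\frac1{2\xi})$, and monotonicity of $\exp$ and $\max$ yields \eqref{eq:private-smooth-upper}; when $\beta=0$ the bound holds surely. On the intersection of this event with the EM tail event \eqref{eq:em-tail}, applied conditionally on $B=b$, the regret is at most $\frac{2b}{\varepsilon_s}\log\frac m\rho$, which is monotone in $b$. We may therefore substitute the bound \eqref{eq:private-smooth-upper}, and a union bound gives success probability $1-\xi-\rho$.

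The only delicate point is verifying the $\beta=0$ degenerate case and the two-sided $\log S$ sensitivity; the remaining steps are routine.
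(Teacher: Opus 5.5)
Your proposal is correct and follows the paper's proof. Like the paper, you bound the sensitivity of $\log S$ by $\beta$ and obtain pure $\varepsilon_t$-DP from the Laplace mechanism plus post-processing. You then use the lower Laplace tail to get $B\ge S(x)\ge\LS_q(x)$ except with probability $\eta$, invoke Theorem~\ref{thm:private-scale-em}, treat $\beta=0$ separately, and use the upper tail for \eqref{eq:private-smooth-upper}; your explicit union-bound step for the regret, using monotonicity of the EM tail bound in $b$, just spells out what the paper leaves implicit.
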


\begin{proof}
For $\beta>0$, the noisy log scale is pure $\varepsilon_t$-DP, and the
one-sided Laplace tail makes $B<S(x)$ occur with probability at most $\eta$.
Since $S\ge\LS_q$, Theorem~\ref{thm:private-scale-em} applies.  When
$\beta=0$, $S$ is constant on every graph component and the release is
zero-DP.  The upper tail of $Y$ gives \eqref{eq:private-smooth-upper}.
\end{proof}

\section{Lower bounds}
\label{sec:lower-bounds}


In this section, we present two lower bounds for setting the temperature scale of EM for privacy in two settings, one for arbitrary and deterministic scales for any pair of neighboring datasets, and the other for scales set by smooth sensitivity.
It shows that our direct smooth sensitivity calibration in Section~\ref{sec:smooth-others} is tight, matching the required upper bound for the smoothness up to some constants depending on $\eps$.

\begin{theorem}[Unequal-scale converse]
\label{thm:reversal-scale-tradeoff}
Let $x\sim y$, let $b_x,b_y>0$, and consider the deterministic-scale Gibbs
laws
\[
 P_D(r)=\frac{\exp\{\lambda q(D,r)/b_D\}}
 {\sum_{s\in\R}\exp\{\lambda q(D,s)/b_D\}},
 \qquad D\in\{x,y\},
\]
where $\lambda>0$.  Suppose two candidates $a,b$ reverse preference across
the edge with margins
\begin{align*}
 q(x,a)-q(x,b)&=\Delta_x>0,\\
 q(y,b)-q(y,a)&=\Delta_y>0.
\end{align*}
If the two laws satisfy pure $\varepsilon_0$-DP on this edge, then
\begin{equation}
 \frac{\lambda\Delta_x}{b_x}+
 \frac{\lambda\Delta_y}{b_y}\le 2\varepsilon_0.
 \label{eq:reciprocal-scale-tradeoff}
\end{equation}
\end{theorem}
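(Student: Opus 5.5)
The plan is to work with the ratio $P_D(a)/P_D(b)$, in which the partition function cancels. This avoids any normalizer bookkeeping. For each $D\in\{x,y\}$,
\[
 \frac{P_D(a)}{P_D(b)}=\exp\!\left\{\frac{\lambda\,(q(D,a)-q(D,b))}{b_D}\right\},
\]
so $\log\frac{P_x(a)}{P_x(b)}=\lambda\Delta_x/b_x$ and $\log\frac{P_y(a)}{P_y(b)}=-\lambda\Delta_y/b_y$.

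Subtracting these two identities gives
\[
 \frac{\lambda\Delta_x}{b_x}+\frac{\lambda\Delta_y}{b_y}
 =\log\frac{P_x(a)}{P_y(a)}+\log\frac{P_y(b)}{P_x(b)}.
\]

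Next we invoke pure $\varepsilon_0$-DP on the edge $x\sim y$, applied to the singleton events $\{a\}$ and $\{b\}$ in the appropriate ordered directions. Definition~\ref{def:dp} covers both orderings with $\delta=0$, so $P_x(a)\le \e^{\varepsilon_0}P_y(a)$ and $P_y(b)\le \e^{\varepsilon_0}P_x(b)$. Each log-ratio on the right is therefore at most $\varepsilon_0$. Summing the two bounds yields \eqref{eq:reciprocal-scale-tradeoff}.

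The only point needing care is that all the probabilities are strictly positive, so that the logarithms and ratios are well defined. This holds because the candidate range is finite and every Gibbs weight is a finite exponential, as $b_x,b_y>0$. Beyond that there is no real obstacle; the argument is the standard reversal-of-preference bound. The key observation is that the partition functions cancel inside each dataset's odds ratio, so no control of $Z$ across the edge is needed.
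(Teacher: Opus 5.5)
Your proof is correct and is essentially the paper's argument. The paper multiplies the two ordered pure-DP inequalities for the singletons $\{a\}$ and $\{b\}$ and notes that the partition functions cancel in $\frac{P_x(a)P_y(b)}{P_y(a)P_x(b)}$, whose logarithm is exactly $\lambda\Delta_x/b_x+\lambda\Delta_y/b_y$; you reach the same identity by subtracting the within-dataset log-odds, and you also make explicit the strict positivity of the probabilities, which the paper leaves implicit.
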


\begin{proof}
Pure DP applied to the answers $a$ and $b$, in the two ordered directions,
gives
\[
 \frac{P_x(a)}{P_y(a)}\le \e^{\varepsilon_0},
 \qquad
 \frac{P_y(b)}{P_x(b)}\le \e^{\varepsilon_0}.
\]
Multiplying them gives $\frac{P_x(a)P_y(b)}{P_y(a)P_x(b)} \le e^{2\eps_0}$, that cancels both partition functions ${\sum_{s\in\R}\exp\{\lambda q(D,s)/b_D\}}\;,\; D\in\{x,y\}$. 
 The logarithm of the left
side is exactly
$\lambda\Delta_x/b_x+\lambda\Delta_y/b_y$, proving
\eqref{eq:reciprocal-scale-tradeoff}.  
\end{proof}

The pure DP gives us a clean condition, while extending to approximate DP requires a data-dependent term in the r.h.s..
When $P_x(a), P_y(b) > \delta_0$, the r.h.s. yields $2\eps_0+2\log\frac{\min(P_x(a), P_y(b))}{\min(P_x(a), P_y(b))-\delta_0}$.
For simplicity, Theorem~\ref{thm:reversal-scale-tradeoff} focuses on the pure DP setting.

Theorem~\ref{thm:reversal-scale-tradeoff} states that under the assumption that for a pair of neighbor datasets $x\sim y$, there exist two candidates such that each is more preferable at exactly one dataset in $\{x, y\}$, their dataset-specific scales $b_x$ and $b_y$ must be jointly constrained.
This supports why naive local sensitivity fails, since $LS_q(x)$ and $LS_q(y)$ can differ arbitrarily, and gives an intuition for why our proposed methods work: they all provide some constraints on $b_x$ and $b_y$: fixed-scaled PTR sets $b_x$ and $b_y$ to the same value, while our proposed methods, i.e., adaptive PTR, private upper bound on local sensitivity and direct smooth sensitivity, informally control how much $b_x$ and $b_y$, or their distributions, differ via different mechanisms.


The next Theorem sets the upper bound required for the smoothness in the direct application of smooth sensitivity as a temperature scale, as in Section~\ref{sec:smooth-others}.

\begin{theorem}[Necessary smoothness at fixed privacy]
\label{thm:unbalanced-smooth-necessity}
Fix $\lambda>0$, $0<\delta_0\le 1/4$, and $m\ge3$.  Put $k=m-1$ and
$r=\e^\beta$.  If the direct rule \eqref{eq:direct-gibbs}, using the minimal
smooth upper bound $S_g=S_{\LS_g,\beta}$ with $\forall x, S_g(x) > 0$, is
$(\varepsilon_0,\delta_0)$-DP uniformly over finite gap scores on $m$
candidates, then
\begin{equation}
 r-1\le
 \frac{\log\!\left(2(\e^{\varepsilon_0}-\delta_0)/(1-2\delta_0)\right)}
 {\log\!\left(k(1-2\delta_0)/(2\delta_0)\right)}.
 \label{eq:joint-md-necessary}
\end{equation}
Consequently, for fixed $\varepsilon_0$,
\begin{equation}
 \beta=O\!\left(\frac{1}{\log(m/\delta_0)}\right).
 \label{eq:joint-md-rate}
\end{equation}
In particular, when both privacy parameters are fixed, one must have
$\beta=O(1/\log m)$ as the range grows.
\end{theorem}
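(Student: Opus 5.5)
The plan is to prove \eqref{eq:joint-md-necessary} with an explicit four-vertex counterexample in the spirit of Theorem~\ref{thm:multiplicity}, now with the number of low candidates fixed at $k=m-1$. The free parameter is the common gap $G$ rather than $k$. Take the path $w\sim x\sim y\sim z$ and candidates $\{0,1,\ldots,k\}$. At $x$ and $y$ set the identical gap vectors $g(x,\cdot)=g(y,\cdot)=(0,G,\ldots,G)$. At $z$, raise the gap of candidate $1$ to $G+B$ and leave every other entry unchanged. At $w$, perturb candidate $1$ by some amount at most $B\e^{-2\beta}$, or not at all. Candidate $0$ keeps gap zero everywhere, so $q^*$ is well defined and all gaps are nonnegative.

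First I would compute the local sensitivities of $g$. The edge $x\sim y$ has zero oscillation and $y\sim z$ has oscillation $B$. Hence $\LS_g(y)=\LS_g(z)=B$, while $\LS_g(x)$ and $\LS_g(w)$ are at most $B\e^{-2\beta}$. Then I would evaluate \eqref{eq:minimal-smooth} on the path. This gives $S_g(y)=B$ and $S_g(x)=\max\{\LS_g(x),\e^{-\beta}B,\e^{-2\beta}B\}=B/r$. Both values are positive, and the edge $(x,y)$ has smooth-scale ratio exactly $r$ while carrying identical gap vectors.

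Next, write $E=\e^{-\lambda G/B}\in(0,1]$ and let $A$ be the event that a low candidate is selected. From \eqref{eq:direct-gibbs},
\[
 p_y(A)=\frac{kE}{1+kE},\qquad p_x(A)=\frac{kE^r}{1+kE^r}\le kE^r .
\]
The ordered DP inequality $p_y(A)\le \e^{\varepsilon_0}p_x(A)+\delta_0$ must hold for every $G$, that is, for every $E\in(0,1]$. I would also keep the complementary event $\{0\}$ in reserve, in case a two-sided combination sharpens the constants. I would choose $G$ so that $u:=kE$ takes a $\delta_0$-dependent value, for instance $u=2\delta_0/(1-2\delta_0)$ or a similar value. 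This choice makes $p_y(A)$ a fixed multiple of $\delta_0$ strictly exceeding $\delta_0$, which is possible because $\delta_0\le 1/4$, and it requires $u\le k$, which holds since $m\ge3$. Substituting $kE^r=u\,(u/k)^{r-1}$ converts the DP inequality into
\[
 (r-1)\log\frac{k}{u}\le \log\frac{(\text{explicit function of }\varepsilon_0,\delta_0)}{u\,(p_y(A)-\delta_0)} .
\]
With $k/u=k(1-2\delta_0)/(2\delta_0)$ this has the shape of \eqref{eq:joint-md-necessary}.

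The main obstacle is the bookkeeping needed to land exactly on the stated numerator $\log\bigl(2(\e^{\varepsilon_0}-\delta_0)/(1-2\delta_0)\bigr)$. I expect this to require either the exact expression $p_x(A)=kE^r/(1+kE^r)$ instead of the crude bound $kE^r$, or the event $\{0\}$ and the reverse ordered pair, with the choice of $u$ optimized accordingly. I would try the exact expression first and solve for $kE^r$. The asymptotic claim is then routine. Since $r-1=\e^\beta-1\ge\beta$ and, for fixed $\varepsilon_0$ and $\delta_0\le1/4$, the numerator is $O(1)$ while the denominator is $\Theta(\log(m/\delta_0))$, we obtain \eqref{eq:joint-md-rate}.
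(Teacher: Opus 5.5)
Your proposal is correct and follows the paper's proof essentially verbatim. It uses the same path $w\sim x\sim y\sim z$ with identical gap vectors on the edge $x\sim y$, the minimal smooth scales $B/r$ and $B$ there, the event $A$ of picking a positive-gap candidate, and the same choice $kE=2\delta_0/(1-2\delta_0)$ making $p_y(A)=2\delta_0$; your first option (the exact expression) already gives the stated constant, since with $v=kE^r$ the inequality $2\delta_0\le\e^{\varepsilon_0}v/(1+v)+\delta_0$ rearranges to $1/v\le(\e^{\varepsilon_0}-\delta_0)/\delta_0$, and substituting $1/v=(k/u)^{r-1}/u$ yields exactly \eqref{eq:joint-md-necessary}, so neither the event $\{0\}$ nor the reverse direction is needed (the crude bound $p_x(A)\le kE^r$ would only give the weaker numerator $\log(2\e^{\varepsilon_0}/(1-2\delta_0))$).
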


\begin{proof}
Consider the path $w\sim x\sim y\sim z$.  At the two middle vertices let
the gap vectors be $g(x,\cdot)=g(y,\cdot)=(0,C,\ldots,C)$.  At $w$, increase
one positive gap by $1$; at $z$, increase it by $B$, where $B>\e^\beta$.
The local sensitivities along the path are $(1,1,B,B)$, so minimality in
\eqref{eq:minimal-smooth} gives
\[
 S_g(x)=B\e^{-\beta},\qquad S_g(y)=B.
\]
Let $A$ be the set of the $k$ positive-gap candidates and put
$t=\lambda C/B$.  Then
\[
 P_y(A)=\frac{k\e^{-t}}{1+k\e^{-t}},
 \qquad
 P_x(A)=\frac{k\e^{-rt}}{1+k\e^{-rt}}.
\]
Choose $t=\log\{k(1-2\delta_0)/(2\delta_0)\}$, so that
$P_y(A)=2\delta_0$.  Writing
\[
 H=k^{r-1}\left(\frac{1-2\delta_0}{2\delta_0}\right)^r,
\]
we have $P_x(A)=1/(1+H)$.  DP in the direction $y$ to $x$ gives
$2\delta_0\le \e^{\varepsilon_0}/(1+H)+\delta_0$, hence
$H\le \e^{\varepsilon_0}/\delta_0-1$.  Taking logarithms and isolating
$r-1$ proves \eqref{eq:joint-md-necessary}.  Finally,
$\beta=\log r\le r-1$, and the denominator in
\eqref{eq:joint-md-necessary} has order $\log(m/\delta_0)$.
\end{proof}

Theorem~\ref{thm:unbalanced-smooth-necessity} strengthens the impossibility result of Section~\ref{sec:failure},  which states that a $\beta$ that is independent of $m$ is not possible under fixed $(\eps,\delta)$-DP. 
Our direct smooth sensitivity sets $\beta\le \log\left(1 + \frac{\eps_0}{4\log{(m(1+e^{\eps_0/2})/\delta_0)}}\right) = O\left(\frac{\eps_0}{\log(m/\delta_0)}\right)$ for fixed $\eps_0, \delta_0$ in Equation \eqref{eq:direct-calibration}, that matches the required upper bound up to some constants depending on $\epsilon_0$. 

\section{Applications}
\label{sec:apps}

\begin{table*}[hb]
\centering
\scriptsize
\setlength{\tabcolsep}{3pt}
\caption{Worst-case global-sensitivity regret and high-probability regret on
$G(n,p), 0 < p < 1$.  The last column is the ratio of our bound to the GS bound; hence a
quantity $o(1)$ is an explicit asymptotic improvement. 
$O_\ell, O_H$ denote asymptotic bounds treating $\ell$ and $H$-related factors as constants.
Write $K_n=K_m$ for $m=n$ and $K_{n,k}=K_m$ for $m=\binom nk$.
}
\label{tab:app-graph-summary}
\begin{tabularx}{\linewidth}{@{}p{0.18\linewidth}p{0.18\linewidth}
p{0.33\linewidth}X@{}}
\toprule
Problem & GS regret & Our $G(n,p)$ regret & Asymptotic gap \\
\midrule
$K_\ell$ participation vertex
& $\Theta_\ell(K_n n^{\ell-2})$
& $\begin{aligned}
 O_\ell\!\bigl(K_n[&D^{\ell-2}
 +n^{\ell-3}\Lambda/\varepsilon]\bigr)
 \end{aligned}$
& $\begin{aligned}
 O_\ell\!\bigl(&(D/n)^{\ell-2}+\Lambda/(\varepsilon n)\bigr)
 \end{aligned}$ \\
$K_\ell$-densest $k$-set
& $\Theta_\ell(K_{n,k}k^{\ell-2})$
& $\begin{aligned}
 O_\ell\!\bigl(K_{n,k}[&\min\{k,D\}^{\ell-2}+k^{\ell-3}\Lambda/\varepsilon]\bigr)
 \end{aligned}$
& $\begin{aligned}
 O_\ell\!\bigl(&(\min\{k,D\}/k)^{\ell-2}+\Lambda/(\varepsilon k)\bigr)
 \end{aligned}$ \\
$H$-participation vertex
& $\Theta_H(K_n n^{h-2})$
& $\begin{aligned}
 O_H\!\bigl(K_n[&D^{h-2}+n^{h-3}\Lambda/\varepsilon]\bigr)
 \end{aligned}$
& $\begin{aligned}
 O_H\!\bigl(&(D/n)^{h-2}+\Lambda/(\varepsilon n)\bigr)
 \end{aligned}$ \\
$H$-densest $k$-set
& $\Theta_H(K_{n,k}k^{h-2})$
& $\begin{aligned}
 O_H\!\bigl(K_{n,k}[&\min\{k,D\}^{h-2}+k^{h-3}\Lambda/\varepsilon]\bigr)
 \end{aligned}$
& $\begin{aligned}
 O_H\!\bigl(&(\min\{k,D\}/k)^{h-2}+\Lambda/(\varepsilon k)\bigr)
 \end{aligned}$ \\
\bottomrule
\end{tabularx}
\end{table*}

In this section, we present some applications in the graph mining domain in which our data-dependent sensitivities provide better asymptotic regret bounds compared to the worst case global sensitivity regret bounds, under the assumption that the input graph $G$ is randomly drawn from the Erd\H{o}s--R\'enyi model $G(n, p)$.
We note that the privacy is guaranteed regardless of this assumption, as it only applies to our  analysis of the regret bounds.
Our analysis shows how our methods improve the direct application of EM using global sensitivity in these problems, and do not claim that the methods are superior than state-of-the-art algorithms designed to solve these problems specifically if they exist.

\textbf{Problems.} We are solving several variants of the classic vertex/$k$-set selection.
An $\ell$-clique is a complete graph of size $\ell$.
Informally, the clique-selection problem asks to find a vertex (or an induced subgraph of size $k$) that is included in (or contains, in case of $k$-set) the largest number of $l$-cliques~\cite{everett1998analyzing, konar2022triangle, 10.1093/comnet/cnab052}.
This problem has applications in many domains, for example in biomedical~\cite{10.3389/fphys.2012.00357}, document summarization~\cite{konar2022triangle}, and communication~\cite{10.1145/1557019.1557054}.
Similar problems that estimate the counts of cliques have been studied under DP~\cite{ZhangEtAl2015, KarwaEtAl2011,nguyen2026scalable}, some of which involve calculating maximum cliques per pair or set of vertices, but do not directly output vertex/set selection~\cite{{nguyen2026scalable}}.
\cite{khayatian2026differentiallyprivatescalableestimation} studies a similar exact cardinality $k$-vertex set selection, but with a different objective, namely one with highest density, not highest count of $\ell$-cliques with $\ell \ge 3$ like ours.
We formally define the problem as follows:

\begin{definition}[Clique-selection problems]
For a graph $G$ on $V=[n]$, $n\ge \ell \ge 3$:
\begin{enumerate}[leftmargin=*]
\item In \emph{$K_\ell$-participation vertex selection}, candidates are
$v\in V$ and
$q_\ell(G,v)$ is the number of $\ell$-cliques containing $v$.
\item In \emph{$K_\ell$-densest $k$-set selection}, where
$n\ge k\ge\ell$, candidates are $S\in\binom Vk$ and
$q_{\ell,k}(G,S)$ is the number of $\ell$-cliques in $G[S]$.
\end{enumerate}
\end{definition}

The Clique-selection problem is a special case of a more general family of problems that apply to connected motifs $H$ of size $h$ (subgraphs), such as paths, cycles, stars, instead of $\ell$-cliques.
We define the connected motifs selection in Appendix~\ref{app:apps}.

We assume the edge-differential privacy model (edge-DP)\cite{KarwaEtAl2011,NRS2007}, in which two graphs are neighbors $G\sim G'$ if they share the same set of vertices $V(G) = V(G')$ and differ by one edge $e: E(G) = E(G') \pm e$.

\textbf{Summary of techniques.}
We apply the one-level shifted Laplace method (Cor.~\ref{cor:one-level}) to provide a private high probability upper bound on the local sensitivity of $q_\ell$ and $q_{\ell,k}$.
The worst case $\GS$ can be derived from removing an edge from a complete graph~\cite{ZhangEtAl2015}, i.e., $\GS_{q_\ell} = \binom{n-2}{\ell-2}$ and $\GS_{q_{\ell, k}} = \binom{k-2}{\ell-2}$.
Observe that, when removing an edge $(u,v)$ from an $\ell$-clique, the remaining $\ell-2$ vertices form exactly one $\ell-2$-clique, and all of them are adjacent to both $u$ and $v$.
Hence, the number of $\ell-2$-cliques in the common neighbors of $u$ and $v$ makes up the number of $\ell$-cliques affected by flipping $(u,v)$.
Taking the maximum count over all pairs $u, v\in V(G)$ gives us $\LS_{q_\ell}(G)$.
Next, we analyze the $\GS$ of $\LS_{q_\ell}(G)$, for which we only need to analyze deletions, due to the symmetry.
There are two cases.
First, deleting an edge disconnects $u$ (node $u$ of the pair $u,v$ above) from one of its neighbors not including $v$, which removes at most $\binom{n-3}{\ell-3}$ $\ell-2$-cliques from its common neighbors with $v$.
Second, deleting an edge in the common neighbors of $u, v$ removes at most $\binom{n-4}{\ell-4}$ $\ell-2$-cliques\footnote{We use the convention that $\binom{a}{b} = 0$ if $b <0$ or $b > a$}.
Taking the maximum over both cases gives us the $\GS$ of $\LS_{q_\ell}$ is at most $\binom{n-3}{\ell-3}$.
A similar argument allows us to find $LS_{q_{\ell,k}}(G)$ and its global sensitivity.

\textbf{Regret bounds.}
Let $D$ denote an upper bound on maximum degree in $G(n,p)$ within a sufficiently large constant factor.
Fix $\varepsilon>0$, edge density $p\in[0,1]$, Laplace tail probability and private scale $B$ tail probability $0<\eta,\xi<1/2$, (Cor.~\ref{cor:one-level}), and EM regret tail probability and maximum degree bound tail probability
$0<\rho,\zeta<1$ with $\zeta+\xi+\rho<1$.
For a candidate set of size $m$, define
\begin{equation}
 K_m=\frac{2}{\varepsilon}\log\frac{m}{\rho},
 \quad
 \Lambda=\log\frac{1}{4\eta\xi},
 \quad
  D=1+np+\log\frac{n}{\zeta}.
\end{equation}

EM using the full privacy budget $\varepsilon$
and the global sensitivity has the worst-case guarantee with probability at least $1-\rho$:
\begin{equation}
 \operatorname{Reg}_q(G,R)\le K_m\GS_q
\end{equation}

Table~\ref{tab:app-graph-summary} shows the asymptotic regret bounds of our methods compared to the worst case global sensitivity, for input graphs generated by $G(n,p)$ on four specific problems of the family.
Take the gap, which is the ratio of our regret bound to global sensitivity regret bound, for $K_\ell$ participation vertex for example.
Choosing $\eta,\xi,\zeta= 1/\operatorname{poly}(n)$  makes $\Lambda = O(\log n)$, the gap $O_\ell\!\bigl((D/n)^{\ell-2}+\Lambda/(\varepsilon n)\bigr)$ is $O_\ell((D/n)^{\ell-2} + \log{n}/n)$ when $\eps = \Theta(1)$.
The gaps show significant improvements when $D = o(n)$, or equivalently, $p = o(1)$ for participation vertex problems.
For $k$-set problems, $D = o(k)$ when $\Lambda/(\varepsilon k)=o(1)$ is sufficient for improvements.

\section{Conclusions}

We study the general problem of using local information for private selection, aiming to replace the public worst case global sensitivity in the exponential selection by local variants of sensitivity.
We prove that straightforward, direct replacements using local and smooth sensitivity with fixed smoothness do not guarantee privacy, even under the approximate DP.
Instead, we show that privacy of exponential selection can be achieved by several careful adaptations of local sensitivity, in particular, a privatized one-sided high probability upper bound on local sensitivity, the adaptive PTR framework, and the co-transformation of the smooth sensitivity and utility score, among others. 
We also provide their regret bounds, which are instance-specific and utilize local geometric properties of the private input datasets.
Applying our technique to some vertex/subgraph selection problems shows significant improvements over the global sensitivity for a family of Erd\H{o}s--R\'enyi graphs with reasonable edge density assumptions.

\medskip
\textbf{Acknowledgment.} We have used generative AI tools to assist the development of this work: developing and checking mathematical arguments, preparing and editing the manuscript.
The authors have independently rederived all mathematical results and worked through the proofs in detail, and take full responsibility for the paper’s content.

\newpage

\bibliographystyle{plainurl}
\bibliography{refs}

\appendix
\onecolumn

\section{Proof for Section~\ref{sec:prelims}}

\lemedgeem*

\begin{proof}
Put $c=\varepsilon_s/(2b)$.  Each unnormalized weight changes by at most a
factor $\e^{cb}$.  Summing the reversed pointwise inequality gives
$Z_{x'}(b)/Z_x(b)\le\e^{cb}$.  Therefore
\[
 \frac{Q_x^b(r)}{Q_{x'}^b(r)}
 =\exp\{c(q(x,r)-q(x',r))\}\frac{Z_{x'}(b)}{Z_x(b)}
 \le\e^{2cb}=\e^{\varepsilon_s}.
\]
Exchange $x$ and $x'$ for the other direction.
\end{proof}






\section{Proof and further discussions of Section~\ref{sec:adaptive-ptr}}
\label{app:ptr}

PTR is known for EM~\cite{RedbergEtAl2023}.
For completeness of our discussion, we present here some discussion of PTR for EM with a fixed bound.

Fix a public $b>0$ and let

\[
 \mathcal H_b=\{z:\LS_q(z)>b\},\qquad D_b(x)=d(x,\mathcal H_b).
\]
When finite, $D_b$ is 1-Lipschitz.  A standard PTR mechanism releases a noisy
version of $D_b(x)$, compares it with a public threshold chosen so that the
pass probability is at most $\delta$ whenever $D_b(x)=0$, and runs $Q_x^b$
only on a pass; otherwise it outputs $\bot$.  The noisy test is
$\varepsilon_t$-DP, and every unsafe edge contributes only its false-pass
probability.  Combining the test with Lemma~\ref{lem:edge-em} gives
$(\varepsilon_t+\varepsilon_s,\delta)$-DP.  Conditional on passing, the
ordinary regret bound is
$2b\log(m/\rho)/\varepsilon_s$.

This result is a direct use of PTR~\cite{DworkLei2009} and is not new.
Exact distance is also not essential: a finite nonnegative 1-Lipschitz lower
certificate can replace it.  
Fixed PTR can have a lower test overhead than adaptive search
when a good $b$ is known, but it may abstain and refuse to answer when $b$ is badly chosen.


\section{Proof and further discussions of Section~\ref{sec:smooth}}
\label{app:smooth}

\thmEnvelope*

\begin{proof}
Fix $x\sim x'$ and $r$.  Abbreviate the two breakpoint interpolants by
$B,B'$, and put
$d_0=\min\{\gamma_x(0,r),\gamma_{x'}(0,r)\}$.  Since both zeroth values
dominate their endpoint element sensitivities,
\begin{equation}\label{eq:appendix-d0}
 |u(x,r)-u(x',r)|\le d_0.
\end{equation}
We first show, for every integer $k$,
\begin{equation}\label{eq:appendix-shift}
 B(k)-d_0\ge B'(k-1).
\end{equation}
For $k\ge1$, use $d_0\le\gamma_x(0,r)$ and the envelope shift:
\[
 B(k)-d_0
 \ge\sum_{j=1}^{k-1}\gamma_x(j,r)
 \ge\sum_{j=0}^{k-2}\gamma_{x'}(j,r)=B'(k-1).
\]
For $k=0$, the claim is
$-d_0\ge-\gamma_{x'}(0,r)$.  For $k=-s\le-1$, it is equivalent to
\[
 \sum_{j=0}^{s-1}\gamma_x(j,r)+d_0
 \le\sum_{j=0}^{s}\gamma_{x'}(j,r),
\]
which follows from $d_0\le\gamma_{x'}(0,r)$ and
$\gamma_x(j,r)\le\gamma_{x'}(j+1,r)$.  Affine interpolation of
\eqref{eq:appendix-shift} at consecutive integers gives
\begin{equation}\label{eq:appendix-real-shift}
 B(a)-d_0\ge B'(a-1)\qquad(a\in\mathbb R).
\end{equation}

Let $a=T_x(u(x,r),r)$, so $u(x,r)=B(a)$.  Equations
\eqref{eq:appendix-d0} and \eqref{eq:appendix-real-shift} imply
\[
 u(x',r)\ge u(x,r)-d_0\ge B'(a-1).
\]
Applying the increasing inverse gives
$T_{x'}(u(x',r),r)\ge a-1$.  Exchange $x,x'$ to obtain the other side, so
$|\widetilde u(x,r)-\widetilde u(x',r)|\le1$.  Ordinary EM at global
sensitivity one proves pure $\varepsilon$-DP.
\end{proof}







\section{Proof and further discussions of Section~\ref{sec:apps}}
\label{app:apps}

Fix $\varepsilon>0$, $p\in[0,1]$, $0<\eta,\xi<1/2$, and
$0<\rho,\zeta<1$ with $\zeta+\xi+\rho<1$.
For a candidate set of size $m$, define
\begin{equation}\label{eq:app-common-parameters}
 K_m=\frac{2}{\varepsilon}\log\frac{m}{\rho},
 \qquad
 \Lambda=\log\frac{1}{4\eta\xi}.
\end{equation}
Ordinary exponential selection using the full privacy budget $\varepsilon$
and the global sensitivity has the worst-case guarantee
\begin{equation}\label{eq:app-gs-regret}
 \operatorname{Reg}_q(G,R)\le K_m\GS_q
 \quad\text{with probability at least }1-\rho.
\end{equation}

Our algorithms split the same budget as
$\varepsilon_t=\varepsilon_s=\varepsilon/2$.  Given an exact local scale
$L(G)=\LS_q(G)$ and a public $C\ge\GS_L$, they draw $Z\sim\Lap(1)$, set
\begin{equation}\label{eq:app-private-scale}
 B=\max\left\{1,
 L(G)+\frac{2C}{\varepsilon}
 \left(Z+\log\frac{1}{2\eta}\right)\right\},
\end{equation}
and sample $R$ with probability proportional to
$\exp\{\varepsilon q(G,r)/(4B)\}$.  We use the immediate monotone form of
Corollary~\ref{cor:one-level}: its proof is unchanged when the exact value
$\GS_L$ is replaced by any public upper bound $C\ge\GS_L$.  Thus this
selector is $(\varepsilon,\eta)$-DP.  Moreover, except with probability
$\xi+\rho$ over its internal randomness,
\begin{equation}\label{eq:app-common-regret}
 \operatorname{Reg}_q(G,R)
 \le 2K_m\max\left\{1,L(G)+\frac{2C}{\varepsilon}\Lambda\right\}.
\end{equation}

\dungnote{The paragraph below is written by me}
The factor $\log\frac{1}{2\eta}$ is added to ~\eqref{eq:app-private-scale} to guarantee that it negates the negative Laplacian noise sampled from $Z$ with probability at least $1-\eta$.
Consequently, $B \ge L(G)$ with probability at least $1-\eta$.
Using the same property of the Laplacian noise, a factor $\log\frac{1}{2\xi}$ appears in \eqref{eq:app-common-regret} to set an upper bound on $B$ with probability at least $1-\xi$.
The tail bound of the Gibbs law with the failure probability $\rho$ shows up in the form of $K_m$.

For the random-graph analysis, put
\begin{equation}\label{eq:app-random-scales}
 D=1+np+\log\frac{n}{\zeta},
 \qquad
 \Gamma=1+np^2+\log\frac{n}{\zeta}.
\end{equation}
Standard Bernstein bounds and a union bound give
$d_{\max}(G)+1=O(D)$ and maximum codegree (number of common neighbors of two vertices) $\kappa(G)=O(\Gamma)$ with
probability at least $1-\zeta$.  Write
$K_n=K_m$ for $m=n$ and $K_{n,k}=K_m$ for $m=\binom nk$.
In Table~\ref{tab:app-graph-summary}, $\ell\ge3$ is a fixed clique size,
$H$ is a fixed connected motif on $h\ge3$ vertices, and $k$ is at least the
corresponding motif size.  The hidden constants depend only on the fixed
motif.  Each instance-specific bound holds separately from the other bounds. 

The probability space is jointly over $G\sim G(n,p)$ and the corresponding algorithm run (from noise and sampling), with probability at
least $1-\zeta-\xi-\rho$.

Fix $\ell\ge3$ and $n\ge\ell$, and use the convention $\binom ab=0$ when
$b<0$ or $b>a$.  The following definition covers both vertex participation
and fixed-size dense-motif selection.

\begin{definition}[Clique-selection problems]\label{def:app-clique-selection}
For a graph $G$ on $V=[n]$:
\begin{enumerate}[leftmargin=*]
\item In \emph{$K_\ell$-participation vertex selection}, the candidates are
$v\in V$ and
$q_\ell(G,v)$ is the number of $\ell$-cliques containing $v$.
\item In \emph{$K_\ell$-densest $k$-set selection}, where
$n\ge k\ge\ell$, the candidates are $S\in\binom Vk$ and
$q_{\ell,k}(G,S)$ is the number of $\ell$-cliques in $G[S]$.
\end{enumerate}
Define the two edge-completion profiles
\begin{align}
 A_\ell(G)
 &=\max_{\{u,v\}\in\binom V2}
 C_{\ell-2}\!\left(G[N_G(u)\cap N_G(v)]\right),
 \label{eq:app-clique-profile-vertex}\\
 A_{\ell,k}(G)
 &=\max_{\{u,v\}\in\binom V2}
 \max_{\substack{T\subseteq N_G(u)\cap N_G(v)\\|T|\le k-2}}
 C_{\ell-2}(G[T]),
 \label{eq:app-clique-profile-k}
\end{align}
where $C_j(F)$ is the number of $j$-cliques in $F$ and $C_1(F)=|V(F)|$, and $N_G(v)$ is the set of neighbor vertices of $v$ in graph $G$.
\end{definition}

\begin{algorithm}
\caption{Private clique selection}
\label{alg:app-clique}
\begin{algorithmic}[1]
\REQUIRE Graph $G$ and one of the two problems in
Definition~\ref{def:app-clique-selection}.
\IF{$K_\ell$-participation vertex selection}
  \STATE Set $\R\gets V$, $q\gets q_\ell$, $L\gets A_\ell$, and
  $C\gets\binom{n-3}{\ell-3}$.
\ELSE
  \STATE Set $\R\gets\binom Vk$, $q\gets q_{\ell,k}$,
  $L\gets A_{\ell,k}$, and $C\gets\binom{k-3}{\ell-3}$.
\ENDIF
\STATE Compute $L(G)$.
\STATE Draw the private scale $B$ according to
\eqref{eq:app-private-scale}.
\STATE Sample $R\in\R$ with weights
$w(r)=\exp\{\varepsilon q(G,r)/(4B)\}$ for $r\in\R$.
\RETURN $R$.
\end{algorithmic}
\end{algorithm}

\begin{theorem}[Clique-selection regret on $G(n,p)$]
\label{thm:app-clique-gnp}
For every input graph with $n\ge\ell$,
Algorithm~\ref{alg:app-clique} is $(\varepsilon,\eta)$-DP.  Let
$R_V$ and $R_k$ denote the outputs of its vertex and $k$-set
instantiations, respectively.  If $G\sim G(n,p)$ and $\ell$ is fixed, then
the following statements each hold with probability at least
$1-\zeta-\xi-\rho$:
\begin{align}
 \operatorname{Reg}_{q_\ell}(G,R_V)
 &=O_\ell\!\left(
 K_n\left[D^{\ell-2}
 +\frac{n^{\ell-3}}{\varepsilon}\Lambda\right]\right),
 \label{eq:app-clique-vertex-regret}\\
 \operatorname{Reg}_{q_{\ell,k}}(G,R_k)
 &=O_\ell\!\left(
 K_{n,k}\left[\min\{k,D\}^{\ell-2}
 +\frac{k^{\ell-3}}{\varepsilon}\Lambda\right]\right).
 \label{eq:app-clique-k-regret}
\end{align}
The corresponding full-budget global-sensitivity bounds are exactly
\begin{equation}\label{eq:app-clique-gs}
 K_n\binom{n-2}{\ell-2}
 \quad\text{and}\quad
 K_{n,k}\binom{k-2}{\ell-2}.
\end{equation}
For $\ell=3$, one may replace $D$ by $\Gamma$ in
\eqref{eq:app-clique-vertex-regret} and inside
$\min\{k,D\}$ in \eqref{eq:app-clique-k-regret}.
\end{theorem}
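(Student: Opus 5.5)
The proof splits into a deterministic privacy part and a probabilistic regret part that plugs random-graph bounds into \eqref{eq:app-common-regret}.

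\textbf{Privacy.} I will show that $L$ is a valid upper bound on local sensitivity and that $C\ge\GS_L$.

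For the vertex problem, flipping an edge $\{u,v\}$ changes $q_\ell(G,w)$ by the number of $(\ell-2)$-cliques in $N(u)\cap N(v)$ that contain $w$, or all of them if $w\in\{u,v\}$. Hence $\LS_{q_\ell}(G)\le A_\ell(G)$. This holds whether $\{u,v\}$ is present or absent, since the common neighborhood is unaffected by the edge $\{u,v\}$ itself.

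For the $k$-set problem, only cliques inside $S\ni u,v$ change, and their other $\ell-2$ vertices lie in $T=S\setminus\{u,v\}\cap N(u)\cap N(v)$ with $|T|\le k-2$. This gives $\LS_{q_{\ell,k}}\le A_{\ell,k}$.

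Next I bound $\GS_{A}$ by the two-case argument already sketched in the paper. Fix the maximizing pair and consider flipping an edge $e$:
\begin{itemize}
\item If $e$ joins $u$ (or $v$) to some $w$, then $w$ enters or leaves the common neighborhood. This changes the count by at most the number of $(\ell-3)$-cliques among the remaining vertices, which is $\le\binom{n-3}{\ell-3}$ (respectively $\binom{k-3}{\ell-3}$ inside $T$).
\item If $e$ lies inside the common neighborhood, the change is $\le\binom{n-4}{\ell-4}$, which is no larger.
\end{itemize}
A maximum of functions each changing by at most $C$ changes by at most $C$. So $C\ge\GS_L$, and Corollary~\ref{cor:one-level} (in its monotone form with $L\ge\LS_q$ and public $C$) together with Theorem~\ref{thm:private-scale-em} gives $(\varepsilon,\eta)$-DP for every input graph.

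\textbf{Global-sensitivity bounds.} The exact values in \eqref{eq:app-clique-gs} follow from \eqref{eq:app-gs-regret} with $\GS_{q_\ell}=\binom{n-2}{\ell-2}$. The upper bound uses $|N(u)\cap N(v)|\le n-2$, and equality is attained by removing an edge from $K_n$; the $k$-set case is analogous.

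\textbf{Regret.} On the event from the Bernstein and union bound step, with probability $\ge1-\zeta$, we have $d_{\max}+1=O(D)$ and $\kappa=O(\Gamma)$. Then $A_\ell(G)\le\binom{\kappa}{\ell-2}=O_\ell(D^{\ell-2})$, using $\kappa\le d_{\max}$. Similarly $A_{\ell,k}(G)\le\binom{\min\{k-2,\kappa\}}{\ell-2}=O_\ell(\min\{k,D\}^{\ell-2})$.

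Substituting into \eqref{eq:app-common-regret} with $C=O_\ell(n^{\ell-3})$ (resp.\ $O_\ell(k^{\ell-3})$) gives $2K_m\max\{1,O(D^{\ell-2})+2C\Lambda/\varepsilon\}$. Because $D\ge1$, the $1$ is absorbed, which yields \eqref{eq:app-clique-vertex-regret} and \eqref{eq:app-clique-k-regret}. A union bound over the graph event ($\zeta$) and the algorithm's failure events ($\xi+\rho$) gives probability $\ge1-\zeta-\xi-\rho$. The graph randomness is independent of the mechanism noise, so the union bound over the joint space is legitimate.

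For $\ell=3$ the profile is $C_1=|N(u)\cap N(v)|\le\kappa=O(\Gamma)$, which gives the replacement of $D$ by $\Gamma$.

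\textbf{Main obstacle.} The delicate step is the bound on $\GS_{A_{\ell,k}}$. There the inner maximization over $T$ of size $\le k-2$ must be handled when $w$ joins or leaves the common neighborhood. My plan is to argue that any optimal $T$ for the new graph, after deleting $w$, is feasible for the old graph. The loss is then only the $(\ell-2)$-cliques through $w$ inside a set of at most $k-3$ other vertices, which is $\le\binom{k-3}{\ell-3}$.
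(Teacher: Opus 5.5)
Your proposal is correct and follows essentially the same route as the paper. You show that $L$ dominates the local sensitivity, and you certify $C\ge\GS_L$ by a per-pair two-case toggle analysis (including deleting the newly available vertex from an optimal $T$ for the $k$-set profile) combined with $|\max_i f_i-\max_i g_i|\le\max_i|f_i-g_i|$. You then plug the high-probability degree and codegree bounds into \eqref{eq:app-common-regret} with a union bound. The differences are cosmetic: you use only the inequality $\LS_q\le A$ instead of the paper's equality, which suffices since $\Pr[B<\LS_q]\le\Pr[B<L]\le\eta$, and you bound $A_\ell$ by $\binom{\kappa}{\ell-2}$ rather than $\binom{d_{\max}}{\ell-2}$.
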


\begin{proof}
The sensitivity analysis below is inspired by the sensitivity analysis of counting $\ell$-cliques in a graph under edge-DP by~\cite{ZhangEtAl2015}, which also uses the highest number of cliques involving a pair of vertices.
Toggling an edge $uv$ changes exactly the $\ell$-cliques that contain
$uv$.  Their other $\ell-2$ vertices form a clique in
$G[N_G(u)\cap N_G(v)]$.  A candidate endpoint contains every changed
clique, and any feasible set $T$ in
\eqref{eq:app-clique-profile-k} can be completed to a $k$-set, i.e., adding extra vertices to $T$ to get $|T|=k$.  Hence
\[
 \LS_{q_\ell}(G)=A_\ell(G),
 \qquad
 \LS_{q_{\ell,k}}(G)=A_{\ell,k}(G).
\]
The complete graph gives
$\GS_{q_\ell}=\binom{n-2}{\ell-2}$ and
$\GS_{q_{\ell,k}}=\binom{k-2}{\ell-2}$.

A second edge toggle can add or remove at most one vertex from a fixed
common-neighbor graph, or toggle one of its internal edges.  For the
$k$-set profile, delete a newly available vertex from an optimizer after the
toggle.  The remaining set was feasible before the toggle, and at most
$\binom{k-3}{\ell-3}$ of its $(\ell-2)$-cliques are lost.  An internal-edge
toggle changes any fixed feasible set by at most
$\binom{k-4}{\ell-4}\le\binom{k-3}{\ell-3}$.  Deletion is symmetric, and
the same argument with at most $n-2$ common neighbors treats $A_\ell$.
Finally,
$|\max_i f_i-\max_i g_i|\le\max_i|f_i-g_i|$ transports these fixed-pair
bounds through the outer maximum.  Thus
\begin{equation}\label{eq:app-clique-second}
 \GS_{A_\ell}\le\binom{n-3}{\ell-3},
 \qquad
 \GS_{A_{\ell,k}}\le\binom{k-3}{\ell-3}.
\end{equation}
These are the public values $C$ used in the algorithm, so privacy follows
from Corollary~\ref{cor:one-level}.

The deterministic bounds
\[
 A_\ell(G)\le\binom{d_{\max}(G)}{\ell-2},
 \qquad
 A_{\ell,k}(G)
 \le\binom{\min\{d_{\max}(G),k-2\}}{\ell-2}
\]
and $d_{\max}(G)+1=O(D)$ on $G(n,p)$, followed by
\eqref{eq:app-common-regret}, give
\eqref{eq:app-clique-vertex-regret}--
\eqref{eq:app-clique-k-regret}.  When $\ell=3$,
$A_3(G)=\kappa(G)$ and
$A_{3,k}(G)=\min\{\kappa(G),k-2\}$; the codegree event
$\kappa(G)=O(\Gamma)$ gives the sharper claim.
\end{proof}

\textbf{Connected-motif generalization}

In the next part, we generalize the $\ell$-clique vertex/subgraph selection to any connected motif $H$ of size $h$, of which an $h$-clique is a special case of a complete $H$.
We consider the non-induced embedding, i.e., a copy of $H$ in $G$ only preserves the edges of $H$.
It contrasts with the induced embedding that preserves both edges and non-edges.

\begin{definition}[Connected-motif selection problems]
\label{def:app-motif-selection}
Fix a connected simple graph $H=(W,F)$ with
$h=|W|\ge3$, $n\ge h$, $e_H=|F|$, and
$a_H=|\mathrm{Aut}(H)|$.  An $H$-copy is an
injective edge-preserving embedding of $H$ into $G$, modulo its $a_H$
automorphisms.
\begin{enumerate}[leftmargin=*]
\item In \emph{$H$-participation vertex selection}, the candidates are
$v\in V$ and $q_H(G,v)$ is the number of $H$-copies containing $v$.
\item In \emph{$H$-densest $k$-set selection}, where $n\ge k\ge h$, the
candidates are $S\in\binom Vk$ and $Q_{H,k}(G,S)$ is the number of
$H$-copies contained in $G[S]$.
\end{enumerate}
For a host pair $e\in\binom V2$, let $G+e$ add $e$.  Let
$C^H_{v,e}(G)$ count the $H$-copies in $G+e$ that use $e$ and contain $v$,
and let $C^H_{S,e}(G)$ additionally require the copy to lie in $S$.  Define
\begin{equation}\label{eq:app-motif-profiles}
 L_H^V(G)=\max_{v,e}C^H_{v,e}(G),
 \qquad
 L_H^k(G)=\max_{S\in\binom Vk}\max_{e\subseteq S}C^H_{S,e}(G).
\end{equation}
\end{definition}

\begin{algorithm}
\caption{Private connected-motif selection}
\label{alg:app-motif}
\begin{algorithmic}[1]
\REQUIRE Graph $G$ and one of the two problems in
Definition~\ref{def:app-motif-selection}.
\STATE Set $c_H\gets 4e_H^2/a_H$.
\IF{$H$-participation vertex selection}
  \STATE Set $\R\gets V$, $q\gets q_H$, $L\gets L_H^V$, and
  $C\gets c_Hn^{h-3}$.
\ELSE
  \STATE Set $\R\gets\binom Vk$, $q\gets Q_{H,k}$,
  $L\gets L_H^k$, and $C\gets c_Hk^{h-3}$.
\ENDIF
\STATE Compute $L(G)$.
\STATE Draw the private scale $B$ according to
\eqref{eq:app-private-scale}.
\STATE Sample $R\in\R$ with weights
$w(r)=\exp\{\varepsilon q(G,r)/(4B)\}$ for $r\in\R$.
\RETURN $R$.
\end{algorithmic}
\end{algorithm}

\begin{theorem}[Connected-motif regret on $G(n,p)$]
\label{thm:app-motif-gnp}
For every input graph with $n\ge h$,
Algorithm~\ref{alg:app-motif} is $(\varepsilon,\eta)$-DP.  Let
$R_V$ and $R_k$ denote the outputs of its vertex and $k$-set
instantiations, respectively.  If $G\sim G(n,p)$ and $H$ is fixed, then
the following statements each hold with probability at least
$1-\zeta-\xi-\rho$:
\begin{align}
 \operatorname{Reg}_{q_H}(G,R_V)
 &=O_H\!\left(
 K_n\left[D^{h-2}
 +\frac{n^{h-3}}{\varepsilon}\Lambda\right]\right),
 \label{eq:app-motif-vertex-regret}\\
 \operatorname{Reg}_{Q_{H,k}}(G,R_k)
 &=O_H\!\left(
 K_{n,k}\left[\min\{k,D\}^{h-2}
 +\frac{k^{h-3}}{\varepsilon}\Lambda\right]\right).
 \label{eq:app-motif-k-regret}
\end{align}
The corresponding full-budget global-sensitivity bounds are
\begin{equation}\label{eq:app-motif-gs}
 \frac{2e_H}{a_H}(n-2)_{h-2}K_n
 \quad\text{and}\quad
 \frac{2e_H}{a_H}(k-2)_{h-2}K_{n,k},
\end{equation}
where $(z)_r=z(z-1)\cdots(z-r+1)$.
\end{theorem}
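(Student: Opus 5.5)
The plan mirrors the proof of Theorem~\ref{thm:app-clique-gnp}. It has four steps: identify the local sensitivity with the profile $L$, bound the global sensitivity of $L$ by the public constant $C=c_H n^{h-3}$ (resp.\ $c_Hk^{h-3}$), invoke Corollary~\ref{cor:one-level} for privacy, and bound $L(G)$ on $G(n,p)$ through the maximum degree for the regret.

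\textbf{Local sensitivity.} Toggling a host pair $e$ changes exactly the $H$-copies that use $e$. For a vertex candidate $v$, these are the copies counted by $C^H_{v,e}$, computed in $G+e$ if $e$ is added. If $e\in E(G)$ is deleted, the removed copies are again those of $G=(G-e)+e$ that use $e$. Hence
\[
 \LS_{q_H}(G)\le\max_{v,e}\max\{C^H_{v,e}(G),C^H_{v,e}(G-e)\}.
\]
To get a quantity whose own sensitivity is easy to control, I will use $L=L_H^V$ as defined, evaluated as the maximum over both toggles. I will note that $C^H_{v,e}(G)=C^H_{v,e}(G-e)$, because the count is taken in $G+e$ either way. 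Thus $L\ge \LS_{q}$, which is all Corollary~\ref{cor:one-level} needs once $L$ is used as an upper proxy; the monotone form in the appendix allows this. The same argument applies to $Q_{H,k}$ with $L_H^k$.

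\textbf{Sensitivity of $L$.} Fix $(v,e)$ and toggle a second pair $f\neq e$. The count $C^H_{v,e}$ changes only through copies that use both $e$ and $f$. Such a copy is fixed by choosing which edges of $H$ map to $e$ and to $f$, with orientations: at most $(2e_H)^2$ choices. Since $H$ is connected, $e$ and $f$ together pin down at least three vertices of $W$, leaving at most $n^{h-3}$ choices for the remaining vertices. Dividing by $a_H$ for the automorphisms gives at most $4e_H^2 n^{h-3}/a_H=c_Hn^{h-3}$. For the $k$-set version, the free vertices must lie in $S$, giving $c_Hk^{h-3}$. The outer maxima are handled by $|\max f_i-\max g_i|\le\max|f_i-g_i|$. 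Then Corollary~\ref{cor:one-level} with $\varepsilon_t=\varepsilon_s=\varepsilon/2$ yields $(\varepsilon,\eta)$-DP, and \eqref{eq:app-common-regret} holds.

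\textbf{Random-graph bound and GS values.} On the degree event $d_{\max}(G)+1=O(D)$, which has probability at least $1-\zeta$, a copy through $e$ and $v$ can be built by a spanning-tree exploration of $H$ rooted at $e$. This leaves $h-2$ vertices, each chosen among the $\le d_{\max}+1$ neighbours of an already placed vertex. Hence $L_H^V(G)=O_H(D^{h-2})$, and similarly $L_H^k(G)=O_H(\min\{k,D\}^{h-2})$. Plugging these into \eqref{eq:app-common-regret} gives the stated bounds, with total failure probability $\zeta+\xi+\rho$. For \eqref{eq:app-motif-gs}, in $K_n$ the copies through a fixed edge number $2e_H(n-2)_{h-2}/a_H$: choose an edge of $H$ and an orientation to map onto $e$, then place the other $h-2$ vertices injectively. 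Every one of these copies contains the candidate $v=u$. Multiplying by $K_m$ gives \eqref{eq:app-gs-regret}. The same count holds inside $S$ with $k$ in place of $n$.

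\textbf{Main obstacle.} The hardest step is the second-order bound. The concern is that the connectivity argument might pin only three vertices, so that a copy using both $e$ and $f$ could still have $h-3$ free vertices. I will check that the multiplicity factor $(2e_H)^2$ together with the $a_H$ normalisation is consistent. If it is not, I will accept a looser $O_H(n^{h-3})$ constant.
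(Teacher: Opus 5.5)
Your proposal is correct and follows the paper's proof essentially step for step. You identify $\LS_q$ with the edge-completion profile $L$ (the paper states equality, which your observation $C^H_{v,e}(G)=C^H_{v,e}(G-e)$ also gives), bound $\GS_L$ by counting embeddings through two distinct host pairs with at most $4e_H^2$ oriented pattern-edge choices, at least three pinned vertices and division by $a_H$, invoke Corollary~\ref{cor:one-level}, and bound $L(G)$ by spanning-tree growth on the max-degree event. Two small corrections: the three pinned vertices come from $e\neq f$ being distinct pairs, not from connectivity of $H$, which is only needed in the regret step. And since your count already yields exactly $c_Hn^{h-3}$, you should not fall back to an unspecified $O_H(n^{h-3})$ constant, because that would not certify the hard-coded public $C$ in Algorithm~\ref{alg:app-motif}.
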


\begin{proof}
Toggling $e$ creates or destroys exactly the non-induced $H$-copies that use
$e$; there is no cancellation: adding an edge cannot invalidate any copy of $H$ in the non-induced embedding, in contrast to induced embedding in which adding an edge can mismatch a non-edge of $H$.  Maximizing over the candidate and $e$
therefore gives
$\LS_{q_H}=L_H^V$ and $\LS_{Q_{H,k}}=L_H^k$.  In a complete host, i.e., the original graph $G$, choose
one of the $2e_H$ oriented pattern edges for the endpoints of $e$, inject
the remaining $h-2$ vertices, and divide by $a_H$.
There are $2e_H$ oriented pattern edges because each edge of $H$ has two ways to map to an edge of $e$ of host $G$.
This proves the two
global sensitivities in \eqref{eq:app-motif-gs}.

For a fixed term in \eqref{eq:app-motif-profiles}, toggling its distinguished
edge $e$ leaves $G+e$ unchanged.  Any other second toggle changes only
copies using both distinct host edges.  Assigning them to two oriented
pattern edges takes at most $2e_H(2eH -2) \le 4e_H^2$ choices, and their union fixes at least
three host vertices, as the other case is two disjoint edges that fix four host vertices.  Precomposition by $\mathrm{Aut}(H)$ acts freely on
the relevant embeddings, so its orbits have size $a_H$.  Division by
$a_H$, followed by
$|\max_i f_i-\max_i g_i|\le\max_i|f_i-g_i|$, gives
\begin{equation}\label{eq:app-motif-second}
 \GS_{L_H^V} \le c_H {(n-3)}_{h-3} \le c_Hn^{h-3},
 \qquad
 \GS_{L_H^k}\le c_H{(k-3)}_{h-3} \le c_Hk^{h-3},
\end{equation}

with $c_H$ as defined in Algorithm~\ref{alg:app-motif}.
Thus the values $C$ in Algorithm~\ref{alg:app-motif} are valid, and privacy
again follows from Corollary~\ref{cor:one-level}.

Finally, after fixing the image of a pattern edge, connectedness lets us
grow a spanning tree from its two endpoints.  Each of the remaining $h-2$
vertices has at most $d_{\max}(G)+1$ choices, as $G + e$ has maximum degree $d_{\max}(G)+1$, and at most $k$ choices inside
a $k$-set.  Hence
\[
 L_H^V(G)=O_H((d_{\max}(G)+1)^{h-2}),
 \qquad
 L_H^k(G)=O_H(\min\{k,d_{\max}(G)+1\}^{h-2}).
\]
Combining the event $d_{\max}(G)+1=O(D)$ with
\eqref{eq:app-common-regret} proves
\eqref{eq:app-motif-vertex-regret}--\eqref{eq:app-motif-k-regret}.
\end{proof}

\end{document}